\newtheorem*{rep@theorem}{\rep@title}
\newcommand{\newreptheorem}[2]{%
	\newenvironment{rep#1}[1]{%
		\def\rep@title{#2 \ref{##1}}%
		\begin{rep@theorem}}%
		{\end{rep@theorem}}}
\newcommand{\BlackBox}{\rule{1.5ex}{1.5ex}}  
\renewenvironment{proof}{\par\noindent{\bf Proof\ }}{\hfill\BlackBox\\[2mm]}
\newenvironment{proof}{\par\noindent{\bf Proof\ }}{\hfill\BlackBox\\[2mm]}
\newtheorem{theorem}{Theorem}
\newtheorem{lemma}[theorem]{Lemma} 
\newtheorem{proposition}[theorem]{Proposition} 
\newtheorem{corollary}[theorem]{Corollary}
\newtheorem{definition}[theorem]{Definition}
\newcommand{\bit}{\begin{itemize}}
	\newcommand{\eit}{\end{itemize}}
\newcommand{\be}{\begin{eqnarray*}}
	\newcommand{\ee}{\end{eqnarray*}}
\newcommand{\ben}{\begin{eqnarray}}
	\newcommand{\een}{\end{eqnarray}}
\newcommand{\g}{\,\vert\,}
\newcommand{\A}{\mathcal{A}}
\newcommand{\B}{\mathcal{B}}
\newcommand{\D}{\mathcal{D}}
\newcommand{\G}{\mathcal{G}}
\newcommand{\K}{\mathcal{K}}
\newcommand{\M}{\mathcal{M}}
\newcommand{\T}{\mathcal{T}}
\newcommand{\N}{\mathcal{N}}
\newcommand{\cO}{\mathcal{O}}
\newcommand{\W}{\mathcal{W}}
\newcommand{\I}{\mathcal{I}}
\newcommand{\pa}{\mathrm{pa}}
\newcommand{\fa}{\mathrm{fa}}
\newcommand{\bzero}{\bm{0}}
\newcommand{\bA}{\bm{A}}
\newcommand{\bB}{\bm{B}}
\newcommand{\bD}{\bm{D}}
\newcommand{\bG}{\bm{G}}
\newcommand{\bI}{\bm{I}}
\newcommand{\bP}{\bm{P}}
\newcommand{\bS}{\bm{S}}
\newcommand{\bT}{\bm{T}}
\newcommand{\bU}{\bm{U}}
\newcommand{\bX}{\bm{X}}
\newcommand{\bh}{\bm{h}}
\newcommand{\bx}{\bm{x}}
\newcommand{\bSigma}{\bm{\Sigma}}
\newcommand{\bOmega}{\bm{\Omega}}
\newcommand{\bepsilon}{\bm{\varepsilon}}
\newcommand{\bphi}{\bm{\phi}}
\newcommand{\black}{\color{black}}
\newcommand{\white}{\color{white}}
\begin{document}

\title{Bayesian Causal Discovery from Unknown General Interventions}
\author[1]{Alessandro Mascaro \thanks{alessandromascaro@outlook.it}}
\author[2]{Federico Castelletti \thanks{federico.castelletti@unicatt.it}}
\affil[1]{Department of Statistical Sciences, Universit\`{a} Cattolica del Sacro Cuore, Milan}
\affil[2]{Department of Economics, Management and Statistics, Universit\`{a} degli Studi di Milano-Bicocca, Milan}

\date{}

\maketitle

\begin{abstract}
We consider the problem of learning causal Directed Acyclic Graphs (DAGs) using combinations of observational and interventional experimental data. Current methods tailored to this setting assume that interventions either destroy parent-child relations of the intervened (target) nodes or only alter such relations without modifying the parent sets, even when the intervention targets are unknown. We relax this assumption by proposing a Bayesian method for causal discovery from \emph{general} interventions, which allow for modifications of the parent sets of the unknown targets. Even in this framework, DAGs and general interventions may be identifiable only up to some equivalence classes. We provide graphical characterizations of such \textit{interventional Markov} equivalence and devise compatible priors for Bayesian inference that guarantee score equivalence of indistinguishable structures. We then develop a Markov Chain Monte Carlo (MCMC) scheme to approximate the posterior distribution over DAGs, intervention targets and induced parent sets. Finally, we evaluate the proposed methodology on both simulated and real protein expression data. 

\vspace{0.7cm}
\noindent
Keywords: Bayesian model selection, directed acyclic graph, interventional data, Markov chain Monte Carlo, structure learning.

\end{abstract}

\section{Introduction}

Directed Acyclic Graphs (DAGs) are widely used to represent causal relationships between variables. In this setting, learning the DAG structure from data is referred to as causal discovery. If only observational data are available, a DAG is in general identifiable only up to its Markov equivalence class, which includes all DAGs that imply the same conditional independencies \citep{Verma:Pearl:1990}. However, if in addition one collects interventional (experimental) data, then it is possible to identify smaller sub-classes of DAGs, known as Interventional-Markov Equivalence Classes (I-MECs) \citep{Hauser:Buehlmann:2012}. 

Current methods for causal discovery that leverage experimental data typically assume either hard or soft interventions. In essence, a \textit{hard} intervention consists of fixing the level of certain target variables and graphically corresponds to the removal of all those edges pointing towards the intervened nodes. 
On the other hand, a \textit{soft} intervention, or mechanism change \citep{Tian:Pearl:2001}, modifies the relationship between each intervened node and its parents without completely destroying it. However, these two types of interventions do not encompass the full spectrum of manipulations that an experimenter can in practice implement or achieve.

\begin{figure}
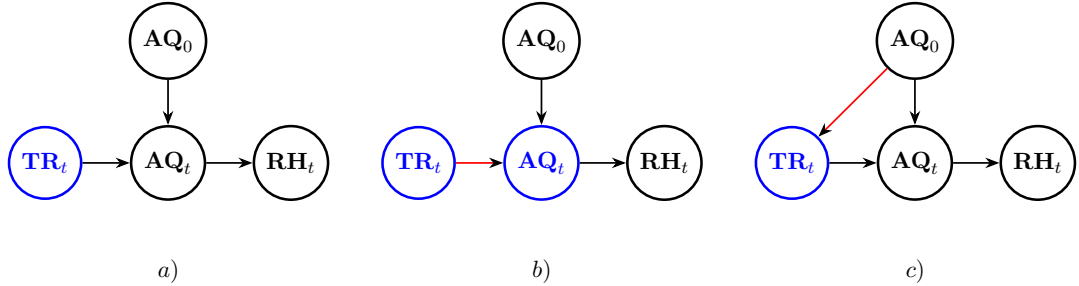

	\centering
	\begin{adjustbox}{width = 0.9\textwidth}
		\begin{tabular}{cccccccc}
			\\
			{\tikz[>={Stealth[black]}] \graph[no placement, nodes={circle, very thick, draw}, edges={black, thick}]{
					1[blue, as=$\textbf{TR}_t$,x=0,y=0] -> 2[as=$\textbf{AQ}_t$,x=2,y=0],
					4[as=$\textbf{AQ}_0$,x=2,y=2] -> 2[as=$\textbf{AQ}_t$,x=2,y=0],
					2[as=$\textbf{AQ}_t$,x=2,y=0] -> 3[as=$\textbf{RH}_t$,x=4,y=0]};}
			&& 
			{\tikz[>={Stealth[black]}] \graph[no placement, nodes={circle, very thick, draw}, edges={black, thick}]{
					1[blue, as=$\textbf{TR}_t$,x=0,y=0] ->[red] 2[blue, as=$\textbf{AQ}_t$,x=2,y=0],
					4[as=$\textbf{AQ}_0$,x=2,y=2] -> 2[as=$\textbf{AQ}_t$,x=2,y=0],
					2[as=$\textbf{AQ}_t$,x=2,y=0] -> 3[as=$\textbf{RH}_t$,x=4,y=0]};}
			&&
			{\tikz[>={Stealth[black]}] \graph[no placement, nodes={circle, very thick, draw}, edges={black, thick}]{
					1[blue, as=$\textbf{TR}_t$,x=0,y=0] -> 2[as=$\textbf{AQ}_t$,x=2,y=0],
					4[as=$\textbf{AQ}_0$,x=2,y=2] -> 2[as=$\textbf{AQ}_t$,x=2,y=0],
					4[as=$\textbf{AQ}_0$,x=2,y=2] ->[red] 1[blue, as=$\textbf{TR}_t$,x=0,y=0],
					2[as=$\textbf{AQ}_t$,x=2,y=0] -> 3[as=$\textbf{RH}_t$,x=4,y=0]};} \\ \\
			$a)$ && $b)$  && $c)$ \\
		\end{tabular}
	\end{adjustbox}
	\caption{Three DAGs resulting from different types of interventions: a) a hard intervention on $\textbf{TR}_t$; b) simultaneous hard (on $\textbf{TR}_t$) and soft (on $\textbf{AQ}_t$) interventions; c) a general intervention on $\textbf{TR}_t$. Target nodes are depicted in blue, while structural modifications induced by the interventions are colored in red.}
	\label{fig:example:on:interventions}
\end{figure}

Consider the example in Figure \ref{fig:example:on:interventions}. DAG \textit{a)} represents a causal structure involving four variables: weekly traffic level ($\textbf{TR}_t$), weekly average air quality level ($\textbf{AQ}_t$), weekly initial air quality level ($\textbf{AQ}_0$), and weekly count of individuals reporting respiratory health issues ($\textbf{RH}_t$) in a specific urban area.
In this context, a hard intervention could consist in prohibiting car access to the area, therefore setting $\textbf{TR}_t = 0$ for the subsequent weeks. A different policy might impose specific restrictions to vehicles entering the area, such as the adoption of particulate filters. This action would simultaneously reduce traffic levels and alter the relationship between traffic and air quality, thus resulting in both a hard intervention on $\textbf{TR}_t$ and a soft intervention on $\textbf{AQ}_t$; see panel \textit{b)}.
Another possible policy could regulate the number of car accesses on the basis of the initial air quality $\textbf{AQ}_0$. The resulting post-intervention graph is illustrated in panel \textit{c)} of Figure \ref{fig:example:on:interventions}, where $\textbf{AQ}_0$ is now a parent of $\textbf{TR}_t$.
This last type of intervention is commonly referred to in the literature as \textit{dynamic plan} \citep{Pearl:Robins:1995}, although sometimes still labeled as soft intervention \citep{Correa:Bareinboim:2020}. Throughout the paper, we use the term \textit{general} for those interventions that modify the parent sets of the target nodes, to emphasize their ability to represent both hard and soft interventions as special cases.

Including general interventions in a causal discovery framework becomes
essential in cases where the effect of an intervention is unknown.
For instance, in neuroimaging, and specifically in the field of effective connectivity analysis, the objective is to understand how the brain-connectivity network changes in response to external stimuli \citep{Friston:2011}. In biology, discerning key differences between gene regulatory networks may provide insights into
mechanisms of initiation and progression of specific diseases across different groups of patients \citep{Shojaie:2020}.

In this paper, we develop a Bayesian methodology for causal discovery from unknown general interventions.
We set this problem in a Bayesian model selection framework, under which priors on DAG models and associated parameters are combined with a parametric likelihood to obtain a posterior distribution on DAGs and general interventions.
Although conceptually straightforward, this task presents many challenges,
primarily the development of
\textit{compatible} parameter priors \citep{Roverato:2003:compatible} leading to closed-form DAG marginal likelihoods and guaranteeing \emph{score equivalence} for I-Markov equivalent DAGs.
Our contribution is threefold. We first provide definitions and graphical characterizations of equivalence classes of DAGs and general interventions.
We then develop a Bayesian framework for data collected under different experimental settings, which applies to 
parametric models satisfying a set of general assumptions; under the same assumptions, we develop an effective procedure for parameter prior elicitation which guarantees desirable properties in terms of marginal likelihoods, and in particular score equivalence.
Finally, we devise a Markov Chain Monte Carlo (MCMC) scheme to sample from the posterior distribution, thus allowing for posterior inference of DAG structures and general interventions. 

\subsection{Related Work}

The first historical work on causal discovery from mixtures of observational and experimental data dates back to \citet{Cooper:et:al:mixtures:1997},
who proposed a Bayesian methodology for data arising from hard interventions with known targets. Issues related to DAG identifiability in this setting were first investigated by \citet{Hauser:Buehlmann:2012}, who introduced the notion of I-Markov equivalence, provided related graphical characterizations, and developed the Greedy Interventional Equivalence Search (GIES) algorithm for structure learning. 
In the Gaussian setting, an objective Bayesian methodology working on the space of I-Markov equivalence classes was then developed by \citet{Castelletti:interventional:2019}.
In the same setting,
\citet{Wang:IGSP:2017} developed the Interventional Greedy Sparsest Permutation (IGSP) method, later extended to the case of soft interventions by
\citet{Yang:IGSP:2018}, who also
generalized the identifiability results of \citet{Hauser:Buehlmann:2012}.
An early methodology dealing with soft interventions was already proposed by \citet{Tian:Pearl:2001} who also provided graphical characterizations for Markov equivalence.

A first approach to causal discovery under \textit{uncertain} intervention targets was presented by \citet{eaton:murphy:2007}. The authors adopted a Bayesian framework for categorical data and allowed the interventions to be soft and unknown, though without addressing identifiability issues.
A more recent Bayesian methodology for Gaussian data, accounting for I-Markov equivalence and assuming hard interventions, was instead introduced by \citet{Castelletti:Peluso:2022}. In a similar setting, \citet{hagele:2023} proposed a Bayesian methodology that leverages a continuous latent representation of the posterior over DAGs and intervention targets to make use of gradient-based variational inference techniques. \citet{squires:utigsp:2020} proposed an extension of IGSP that allows for uncertainty on the targets of intervention and proved its consistency. More recently, \citet{gamella2022characterization} focused on the case of experimental Gaussian data generated from unknown noise-interventions, providing identifiability results for both DAGs and intervention targets. Similar results, in a non-parametric setting, were provided by \citet{jaber:soft:2020}, assuming soft interventions and allowing for the presence of hidden confounders. \citet{Mooij:JCI:2020} instead developed the Joint Causal Inference (JCI) framework, which encodes unknown interventions through additional indicator variables in a pooled dataset; they established under which assumptions constraint-based methods conceived for observational settings can be applied to the pooled dataset to learn the DAG and the intervention targets. 

Finally, learning the effects of unknown general interventions is equivalent to learning differences between post-intervention DAGs.
Under this perspective, our framework relates to other bodies of literature such as inference of multiple DAGs \citep{Castelletti:et:al:2020:SIM} as well as to methodologies aiming at directly estimating structural differences between causal DAGs \citep{Wang:DCI:2018}.

\subsection{Outline}

In Section \ref{sec:2:identifiability} we introduce the basic notation and background on Structural Causal Models (SCMs) and present our results relative to identifiability of DAGs and general interventions from mixtures of observational and interventional data. In Section \ref{sec:causal:discovery:general:interventions} we develop a Bayesian methodology for causal discovery in this newly defined context, leveraging the results of Section \ref{sec:2:identifiability} to provide guidance on model construction and prior elicitation. In Section \ref{sec:4:mcmc} we construct a Markov Chain Monte Carlo (MCMC) algorithm to sample from the posterior distribution of DAGs, intervention targets and induced parent sets. Finally, in Section \ref{sec:5:sim:real} we apply our methodology to the Gaussian case and empirically assess its performance on both simulated and real data. Section \ref{sec:discussion} summarizes our conclusions. All proofs of our main results are provided in the appendices to this article.
\texttt{R} code implementing our methodology is available at \url{https://github.com/alesmascaro/bcd-ugi}.

\section{Identifiability under General Interventions}
\label{sec:2:identifiability}

In this section we introduce a framework for causal discovery from unkwnon general interventions, discuss identifiability of DAGs and interventions and provide graphical characterizations of I-Markov equivalence.
Specifically, in Section \ref{sec:preliminaries} we first summarize some background material on DAGs and Structural Causal Models (SCMs) and we formalize the notion of general intervention. In Section \ref{sec:main:results} we define an I-Markov property for this new setting and present our main results on the identifiability of DAGs when interventions are known. Section \ref{sec:identifiability:unknowninterventions} extends the results to the case of unknown interventions.

\subsection{Preliminaries}
\label{sec:preliminaries}
A Directed Acyclic Graph (DAG) $\mathcal{D} = (V,E)$ with vertex set $V = [q]\coloneqq\{1,\dots,q\}$, and edge set $E \subset V\times V$ is a directed graph with no cycles, i.e.~no directed paths starting and ending at the same node. A DAG $\mathcal{D}$ can be represented by a $(q,q)$ adjacency matrix $\boldsymbol{A}$, such that $\boldsymbol{A}_{ij} = 1$ if $(i,j) \in E$
and $0$ otherwise. We let $\text{pa}_\D(j)$ be the set of \textit{parents} of node $j$, that is $\text{pa}_\D(j) = \{i \in V \g \boldsymbol{A}_{ij} = 1\}$, and $\text{fa}_\D(j) = j \cup\text{pa}_\D(j)$ be the \textit{family} of $j$ in $\D$. Moreover, an edge $i \to j$ is \textit{covered} in $\D$ if $i \cup \pa_\D(i) = \pa_\D(j)$. \black We refer to the undirected graph obtained by removing edge directions from a DAG as the \textit{skeleton} of the DAG. Any induced subgraph of the form $i \rightarrow j \leftarrow k$, with no edges between $i$ and $k$, is instead called a \textit{v-structure}. Finally, we say that $\D$ is complete if it has no missing edges.
%

Under the framework of SCMs, DAGs can be given a causal interpretation by considering each node $j$ as an observable (endogenous) variable $X_j$ and each parent-child relation as a \emph{stable} and \emph{autonomous} mechanism of the form
\begin{equation}
	\label{eq:scmdef}
	X_j = f_j(X_{\pa_\D(j)}, \varepsilon_j), \quad j \in[q],
\end{equation}
%
where $X_{\pa_{\D}(j)} = \{X_i, i\in \pa_{\D}(j)\}$, $f_j$ is a deterministic function linking $X_j$ to $X_{\pa_\D(j)}$ and to an unobserved (exogenous) random variable $\varepsilon_j$ \citep{Pearl:2000}. If $\varepsilon_1, \dots, \varepsilon_q$ are mutually independent, \black then the set of structural equations in \eqref{eq:scmdef} defines a Markovian SCM,
and the induced joint density $p(\cdot)$ on $(X_1, \dots, X_q)$ obeys the Markov property of $\D$, meaning that it factorizes as
\begin{equation}
	\label{eq:obsfactor}
	p(\bx) = \prod_{j=1}^q p(x_j \g  \bx_{\pa_\D(j)}).
\end{equation}
%
The conditional independencies implied by \eqref{eq:obsfactor} can be read-off from the DAG using the notion of \textit{d-separation} \citep{Pearl:2000}.
Let now $\M(\D)$ be the set of all positive densities $p(\bx)$ obeying the Markov property of $\D$.
Two DAGs, $\D_1$ and $\D_2$, are called \textit{Markov equivalent} if $\M(\D_1) = \M(\D_2)$.
%
DAGs can be partitioned into \textit{Markov equivalence classes}, each collecting all DAGs that are Markov equivalent. Without specific parametric assumptions, and even under common families of distributions, DAGs can be identified only up to Markov equivalence classes \citep{Pearl:1988}.
The following results provide graphical characterizations of Markov equivalence.
\begin{theorem}[\citet{Verma:Pearl:1990}]
	\label{th:markoveq}
	Two DAGs $\D_1$ and $\D_2$ are Markov equivalent if and only if they have the same skeleta and the same set of v-structures.
\end{theorem}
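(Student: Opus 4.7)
The plan is to establish the two implications of the biconditional separately, using d-separation as the working tool since it characterizes the conditional independencies encoded in $\M(\D)$.

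For the forward direction, I would argue by contrapositive. First, if $\D_1$ and $\D_2$ have different skeleta, pick an edge $i - j$ present in $\D_1$ but absent in $\D_2$: then $i$ and $j$ are d-connected given every set in $\D_1$, while in $\D_2$ they are d-separated by a standard construction (e.g., the parent set of whichever of the two is a non-descendant of the other in $\D_2$), producing distinct conditional independencies. Next, suppose $\D_1$ and $\D_2$ share the skeleton but disagree on the triple $\{i,j,k\}$, with the v-structure $i \to j \leftarrow k$ appearing in $\D_1$ only. I would choose $S$ to be an ancestral-set--based d-separator of $i$ and $k$ in $\D_1$ that avoids $j$ and all of its $\D_1$-descendants; such a set exists because $i$ and $k$ are non-adjacent. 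In $\D_1$ the collider $j$ is then blocked by $S$, whereas in $\D_2$ the node $j$ is a non-collider on the $i$-$j$-$k$ path, so its absence from $S$ leaves that path open. This single conditioning set witnesses a dependency in $\D_2$ while enforcing independence in $\D_1$, contradicting Markov equivalence.

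For the backward direction, the plan is to show that the set of d-separation statements depends only on the skeleton and the v-structures. A clean route is to invoke the characterization according to which having the same skeleton and v-structures is equivalent to being connected by a sequence of \emph{covered} edge reversals, and then verify directly that reversing a covered edge preserves $\M(\D)$. Indeed, if $i \to j$ is covered in $\D_1$, then $\pa_{\D_1}(j) = \fa_{\D_1}(i)$, and the factorization in \eqref{eq:obsfactor} contains the product $p(x_i \mid \bx_{\pa_{\D_1}(i)}) \, p(x_j \mid x_i, \bx_{\pa_{\D_1}(i)})$, which can be rewritten via Bayes' rule as $p(x_j \mid \bx_{\pa_{\D_1}(i)}) \, p(x_i \mid x_j, \bx_{\pa_{\D_1}(i)})$—precisely the factorization induced by $\D_2$ after the reversal—so the two models generate the same density families.

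The main obstacle is the backward direction, specifically the auxiliary lemma that two DAGs with the same skeleton and v-structures are always connected by a sequence of covered edge reversals. An alternative direct d-separation argument would hinge on a path-bypass procedure: a collider triple $i' \to j' \leftarrow k'$ whose endpoints are adjacent in the skeleton (hence not a v-structure) can be shortcut through the $i'$-$k'$ edge, but verifying that this substitution simultaneously preserves d-connectivity given $S$ in both DAGs requires careful bookkeeping of the collider and non-collider statuses at each vertex along the modified path, and it is here that the bulk of the technical effort concentrates.
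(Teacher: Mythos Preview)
The paper does not provide its own proof of this statement; Theorem~\ref{th:markoveq} is cited as a classical result due to \citet{Verma:Pearl:1990} and serves only as background in Section~\ref{sec:preliminaries}. The appendices prove the paper's original results from Sections~\ref{sec:main:results} onward, not this one. There is therefore no proof in the paper against which to compare your proposal.

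That said, your outline follows standard lines and is broadly sound. In the forward direction, the v-structure case deserves a touch more care: you should verify that a d-separator of $i$ and $k$ in $\D_1$ avoiding $j$ and all its $\D_1$-descendants actually exists; this holds because $j$ is not an ancestor of either $i$ or $k$ in $\D_1$, so any minimal separator contained in the ancestors of $\{i,k\}$ will do. For the backward direction, the route through covered edge reversals is legitimate and the factorization swap you give is clean. The auxiliary lemma you flag---that two DAGs with the same skeleton and v-structures are connected by covered edge reversals---is the combinatorial core of Chickering's argument, and its proof via the Find-Edge procedure (Algorithm~\ref{alg:findedge} in the appendix) relies only on the shared skeleton and v-structures, so invoking it here is not circular with Theorem~\ref{th:markoveqedge}. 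The alternative direct d-separation argument you sketch at the end is closer in spirit to Verma and Pearl's original approach and is indeed where the delicate bookkeeping lies.
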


\begin{theorem}[\citet{chickering:1995}]
	\label{th:markoveqedge}
	Two DAGs $\D_1$ and $\D_2$ are Markov equivalent if and only if there exists a sequence $\delta$ of edge reversals modifying $\D_1$ and such that:
	\begin{enumerate}
		\item Each edge reversed is covered;
		\item After each reversal, $\D_1, \D_2$ belong to the same Markov equivalence class;
		\item After all reversals $\D_1 = \D_2$.
	\end{enumerate}
\end{theorem}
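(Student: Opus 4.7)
The plan is to reduce the theorem to a single preservation lemma about covered-edge reversals, and then carry out the forward direction by induction on the number of disagreeing edges.

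First, for the easy (``if'') direction, I would establish a preservation lemma: if $i \to j$ is a covered edge in a DAG $\D$, then reversing it yields a DAG $\D'$ with $\M(\D) = \M(\D')$. Acyclicity of $\D'$ follows from the observation that any new cycle through $j \to i$ would correspond to a directed $i$-to-$j$ path in $\D$ via some intermediate node, which by covered-ness ($\pa_\D(j) = \pa_\D(i)\cup\{i\}$) would already be a parent of $i$ and thereby create a cycle in $\D$. Markov equivalence is then verified via Theorem \ref{th:markoveq}: the skeleton is unchanged, and the v-structures at $j$, at $i$, and at other nodes are all preserved because covered-ness forces every candidate ``colliding parent'' of $j$ through $i$, or of $i$ through $j$, to be adjacent to the other collider. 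Given this lemma, the ``if'' direction is immediate: starting from $\D_1$, every covered-edge reversal keeps us in the equivalence class $\M(\D_1)$, so by property 3 we conclude $\D_2 \in \M(\D_1)$.

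For the ``only if'' direction I would induct on the number $r$ of edges whose orientation differs between $\D_1$ and $\D_2$. The base case $r=0$ gives $\D_1 = \D_2$ and the empty sequence works. For the inductive step, the crux is the following existence claim: \emph{whenever two distinct Markov-equivalent DAGs $\D_1$ and $\D_2$ are given, there is an edge $i \to j$ in $\D_1$ such that $j \to i$ in $\D_2$ and $i \to j$ is covered in $\D_1$}. Accepting the claim, one reversal reduces $r$ by one while keeping us in $\M(\D_2)$ via the preservation lemma, and the inductive hypothesis supplies the remainder of $\delta$.

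To prove the existence claim I would fix a topological ordering $\pi$ of $\D_1$, let $y$ be the $\pi$-maximal node whose parent set differs in $\D_1$ and $\D_2$, and let $x$ be the $\pi$-maximal neighbor of $y$ whose edge to $y$ is reversed between the two DAGs. The maximality of $y$ forces agreement of parent sets at all $\pi$-later nodes and tightly constrains how the neighbors of $y$ can be oriented in $\D_2$; coupled with the preserved v-structures at $y$ (Theorem \ref{th:markoveq}) and the common skeleton, this should let me show that every other parent of $y$ in $\D_1$ is also a parent of $x$ in $\D_1$, and conversely $\pa_{\D_1}(x)\subseteq \pa_{\D_1}(y)$, yielding covered-ness of $x \to y$.

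The main obstacle is precisely this existence claim: it is the only place where the global Markov-equivalence structure has to be translated into the very local parent-set identity $\pa_{\D_1}(j) = \pa_{\D_1}(i) \cup \{i\}$. The combinatorial case analysis — ruling out non-adjacent shared neighbors through preserved v-structures, and exploiting the $\pi$-maximality of $y$ and $x$ to control which edges can be reversed at all — is the technical heart of Chickering's argument; once the claim is secured, everything else is a mechanical induction driven by the preservation lemma.
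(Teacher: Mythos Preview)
Your overall architecture --- the preservation lemma for the ``if'' direction and induction on the number of disagreeing edges for ``only if'' --- is exactly Chickering's strategy, and the paper (which cites this theorem without reproving it) relies on the same mechanism through Algorithm~\ref{alg:findedge} (Find-Edge), which it uses repeatedly in the proofs of Theorems~\ref{theorem:seqcovedge}, \ref{theorem:seqcovedge2} and \ref{theorem:seqcovedge3}. The gap is in your existence claim: choosing $y$ as the $\pi$-\emph{maximal} node with differing parent set does not in general produce a covered edge.

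A concrete failure: take $\D_1$ with edges $a\to b,\ a\to c,\ b\to c,\ b\to d,\ c\to d$ (topological order $a,b,c,d$) and $\D_2$ the full reversal $d\to c,\ d\to b,\ c\to b,\ c\to a,\ b\to a$. Both have the same skeleton and no v-structures, hence are Markov equivalent. Your rule selects $y=d$ and $x=c$, but $c\to d$ is not covered in $\D_1$ since $\pa_{\D_1}(c)\cup\{c\}=\{a,b,c\}\neq\{b,c\}=\pa_{\D_1}(d)$; the alternative $b\to d$ is not covered either. The issue is structural: maximality of $y$ gives you agreement of parent sets at nodes \emph{after} $y$, but the containment $\pa_{\D_1}(x)\subseteq\pa_{\D_1}(y)$ you need concerns $x$, which sits \emph{before} $y$ in $\pi$ and is therefore uncontrolled.

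Chickering's fix, reproduced verbatim as Algorithm~\ref{alg:findedge}, is to take $v$ as the $\pi$-\emph{minimal} node with $\Psi_v\neq\emptyset$ and $u$ as the $\pi$-maximal element of $\Psi_v$. Minimality of $v$ guarantees that no incoming edge at any $w<_\pi v$ is reversed, which is precisely the handle needed to argue $\pa_{\D_1}(u)\subseteq\pa_{\D_1}(v)$; v-structure preservation at $v$ and maximality of $u$ then give the reverse inclusion. In the example this yields $u=a$, $v=b$, and $a\to b$ is indeed covered. If you swap ``maximal'' for ``minimal'' in your choice of $y$ (and keep $x$ maximal among the reversed parents), the rest of your sketch goes through.
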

Theorem \ref{th:markoveq} provides a criterion for assessing whether two DAGs belong to the same Markov equivalence class. Theorem \ref{th:markoveqedge}, instead, is a technical result of great importance to guarantee score equivalence in score-based causal discovery methods.

The mechanisms in Equation \eqref{eq:scmdef} are stable and autonomous in the sense that it is possible to conceive an external intervention modifying one of the mechanisms (and the corresponding local distribution) without affecting the others. One can envisage different \textit{types} of external interventions \citep{Correa:Bareinboim:2020}. For any set of \textit{target} variables $T \subset [q]$ and multi-set of \textit{induced parent sets} $P=\{P_1,\dots,P_{|T|}\}$, with $P_j \subset [q]$, we consider interventions producing a mechanism change of the form
\begin{equation}
	\label{eq:mechchange}
	X_j = \tilde f_j(X_{P_j}, \varepsilon_j), \quad \forall \, j \in T.
\end{equation}
%
We refer to this type of intervention as \textit{general intervention} and, following \citet{Correa:Bareinboim:2020}, we denote the corresponding operator as $\sigma_{T,P}$.
Such intervention induces a new SCM, thus implying a new graphical object.
\begin{definition}[Post-intervention graph]
	Let $\D$ be a DAG and $(T,P)$ be a pair of intervention targets and induced parent sets defining a general intervention. The post-intervention graph of $\D$ is the graph $\D_{T,P}$ obtained by replacing for each $j \in T$ the new parents $P_j$ induced by the intervention.
\end{definition}
See also Figure \ref{fig:intervention:dag:example} for an example of DAG and implied intervention graph.
\begin{figure}[H]
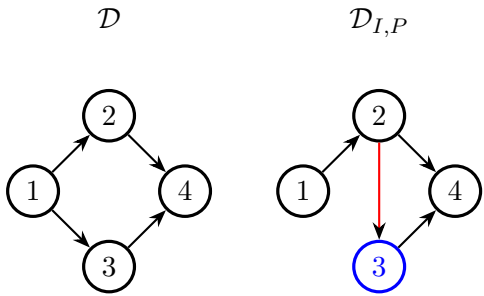

	\caption{A DAG $\D$ and the post-intervention DAG $\D_{T,P}$ for intervention target $T=\{3\}$ and induced parent set $P=\{2\}$.}
	\centering
	\begin{tabular}{ccc}
		$\D$ && $\D_{I,P}$ \\ \\
		{\tikz[>={Stealth[black]}] \graph[no placement, nodes={circle, very thick, draw}, edges={black, thick}]{
				1[x=0,y=-1] -> {3[x=1,y=-2], 2[x=1,y=0]},
				{3[x=1,y=-2], 2[x=1,y=0]} -> 4[x=2,y=-1]};}
		&& 
		{\tikz[>={Stealth[black]}] \graph[no placement, nodes={circle, very thick, draw}, edges={black, thick}]{
				1[x=0,y=-1] -> 2[x=1,y=0],
				2[x=1, y=0] ->[red] 3[blue, x=1,y=-2],
				{3[blue, x=1,y=-2], 2[x=1,y=0]} -> 4[x=2,y=-1]};}	
	\end{tabular}
	\label{fig:intervention:dag:example}
\end{figure}
Notice that a post-intervention graph need not be a DAG in general. Throughout the paper we make the following assumption, that we name \textit{validity}.
\begin{definition}[validity]
	Let $\D$ be a DAG and $(T,P)$ a pair of intervention targets and induced parent sets defining a general intervention. The general intervention is \textnormal{valid} if the post-intervention graph $\D_{T,P}$ is a DAG.
\end{definition}

As a general intervention produces a new Markovian SCM, it also induces a \textit{post-intervention} distribution through the Markov property of $\D_{T,P}$ which can be written as
\begin{align}
	\label{eq:intfactor}
	p(\bx\g\sigma_{T,P}) & = \prod_{j=1}^q \tilde{p}(x_j\g \bx_{\pa_{\D_{T,P}}(j)}) \nonumber \\
	& = \prod_{j \notin T}p(x_j\g \bx_{\pa_\D(j)}) \prod_{j \in T} \tilde{p}(x_j\g \bx_{\pa_{\D_{T,P}}(j)}),
\end{align}
where the $\tilde p(x_j\g \cdot)$'s \black denote the new local distributions induced by the intervention. For any $j \notin T$, we then have $\tilde{p}(x_j\g \bx_{\pa_{\D_{T,P}}(j)}) = p(x_j\g \bx_{\pa_\D(j)})$, so that the local densities of non-intervened nodes are invariant (stable) across pre- and post-intervention distributions. 
In the following section we show how these invariances can be leveraged to identify DAGs up to a subset of the original Markov equivalence class (named \textit{I-Markov equivalence class}) and, in the same spirit of Theorem \ref{th:markoveq} and Theorem \ref{th:markoveqedge}, we provide a graphical characterization of DAGs belonging to the same I-Markov equivalence class.

\subsection{DAG Identifiability from Known General Interventions}
\label{sec:main:results}


We consider collections of $K$ experimental settings, or environments, each defined by a general intervention with targets and induced parent sets $T^{(k)}, P^{(k)}$. Let also $\T = \{T^{(k)}\}_{k=1}^K$, $\mathcal{P} = \{P^{(k)}\}_{k=1}^K$ and $\I = (\T, \mathcal{P})$.
Each collection of experimental settings entails a family of post-intervention distributions $\big\{p(\cdot\g\sigma_{k})\big\}_{k=1}^K$, where to simplify the notation we write $\sigma_k \equiv \sigma_{T^{(k)},P^{(k)}}$ for $k \in [K]$.
We assume throughout the paper that $T^{(1)} = P^{(1)} = \O$, i.e.~$k=1$ corresponds to the observational setting where no intervention has been performed, and $p(\cdot\g\sigma_{1}) = p(\cdot)$ reduces to the pre-intervention distribution \eqref{eq:obsfactor}. Furthermore, we always assume that $\I$ is a collection of targets and induced parent sets defining a \textit{valid} general intervention.

More formally, we can define the possible tuples of joint densities corresponding to $K$ different experimental settings as follows.

\begin{definition}
	\label{def:imeclass}
	Let $\D$ be a DAG and $\I$ a collection of targets and induced parent sets. Then,
	\begin{align*}
		\M_{\I}(\D) = & \, \big\{ \{p_k(\bx)\}_{k=1}^K \ \lvert \ \forall \, k,l \in [K]: p(\bx\g \sigma_{k}) \in \M(\D_{k}) \text{ and } \\& \quad \forall \, j \notin T^{(k)} \cup T^{(l)},  p_k(x_j\g \bx_{\pa_{\D_{k}}(j)}) = p_l(x_j\g \bx_{\pa_{\D_{l}}(j)})\big\},
	\end{align*}
\end{definition}
\noindent where we let for simplicity
$p_k(\bx) = p(\bx\g\sigma_k)$ and $\D_k = \D_{T^{(k)}, P^{(k)}}$. The first condition reflects the fact that, for each experimental setting, the post-intervention distribution obeys the Markov property of the induced post-intervention DAG $\D_k$. The second condition corresponds instead to the local invariances across
post-intervention distributions of different experimental settings.
Notice that, because of the assumption $T^{(1)} = \O$, $p_1(\bx)=p(\bx)$, the observational distribution, and the condition implies that $\forall\, j \notin T^{(k)}$, $p_k(x_j\g \bx_{\pa_{\D_{k}}(j)}) = p(x_j\g \bx_{\pa_{\D}(j)})$.
By analogy with the observational case, different DAGs may still imply the same family of pre- and post-intervention distributions, leading to the notion of \textit{I-Markov equivalent} DAGs.

\begin{definition}[I-Markov equivalence]
	\label{def:imarkoveq}
	Let $\D_1$ and $\D_2$ be two DAGs and $\I$ a collection of targets and induced parent sets defining a valid general intervention for both $\D_1$ and $\D_2$. $\D_1$ and $\D_2$ are I-Markov equivalent (i.e.~they belong to the same I\textit{-Markov equivalence class}) if $\M_{\I}(\D_1) = \M_{\I}(\D_2)$.
\end{definition}

As mentioned, our aim is to develop graphical criteria to establish I-Markov equivalence between DAGs. To this end, we need: i) a graphical object that uniquely represents the DAG $\D$ and the modifications induced by the general interventions; ii) an I-Markov property to read-off the set of conditional independencies and invariances from the graphical object. 
For the first purpose, we introduce the following construction.
\begin{definition}
	\label{def:augdag}
	Let $\D$ be a DAG and $\I$ a collection of targets and induced parents sets.
	The collection of augmented intervention DAGs ($\I$-DAGs) $\{\D^{\I}_k\}_{k=1}^K$ is constructed by augmenting each post-intervention DAG $\D_k$ with an $\I$-vertex $\zeta_k$ and $\I$-edges $\{\zeta_k \to j,j \in T^{(k)}\}$.
\end{definition}
We provide an example of a collection of $\I$-DAGs in Figure \ref{fig:augmentedexample}.
The following definition extends the notion of covered edge, originally introduced by \citet[Definition 2]{chickering:1995}, to our newly defined graphical object.

\begin{definition}
	Let $\D$ be a DAG and $\I$ a collection of targets and induced parent sets implying a collection of $\I$-DAGs $\{\D^\I_k\}_{k=1}^K$. An edge $i \to j$ in $\D$ is simultaneously covered if:
	\begin{enumerate}
		\item $i \to j$ is covered in $\D$;
		\item For any $k \in [K], k\neq 1$, $i \to j$ is either covered in $\D_k^\I$, or $\{i,j\} \subseteq T^{(k)}$;.	\end{enumerate}
\end{definition}
\black 
\begin{figure}
	\caption{A collection of $\I$-DAGs for DAG $\D$ and a collection of targets and induced parent sets such that $T^{(2)} = \{3\}$, $P^{(2)} = \{1,2\}$ and $T^{(3)} = \{4\}, P^{(3)} = \{1,2,3\}$. Blue nodes represent the intervention targets, while red edges correspond to the induced parent sets.}
	\centering
	\begin{tabular}{cccccccc}
		$\D\equiv\D^{\I}_{1}$ && $\D^{\I}_{2}$ && $\D^{\I}_{3}$ \\ \\
		
		{\tikz[>={Stealth[black]}] \graph[no placement, nodes={circle, very thick, draw}, edges={black, thick}]{
				1[x=0,y=-1] -> {3[x=1,y=-2], 2[x=1,y=0]},
				{3[x=1,y=-2], 2[x=1,y=0]} -> 4[x=2,y=-1]};}
		&& 
		{\tikz[>={Stealth[black]}] \graph[no placement, nodes={circle, very thick, draw}, edges={black, thick}]{
				1[x=0,y=-1] -> 2[x=1,y=0],
				1[x=0,y=-1] ->[red] 3[blue,x=1,y=-2],
				2[x=1,y=0] ->[red] 3[blue,x=1,y=-2],
				5[as=$\zeta_2$,x=2,y=0.5] ->[dotted] 3[blue, x=1,y=-2],
				{3[blue, x=1,y=-2], 2[x=1,y=0]} -> 4[x=2,y=-1]};}
		&& 
		{\tikz[>={Stealth[black]}] \graph[no placement, nodes={circle, very thick, draw}, edges={black, thick}]{
				1[x=0,y=-1] -> {3[x=1,y=-2], 2[x=1,y=0]},
				5[as=$\zeta_3$,x=2,y=0.5] ->[dotted] 4[blue, x=2,y=-1],
				{1[x=0,y=-1], 3[x=1,y=-2], 2[x=1,y=0]} ->[red] 4[blue, x=2,y=-1]};} \\	
	\end{tabular}
	\label{fig:augmentedexample}
\end{figure}
For the second purpose instead, we introduce the following definition of I-Markov property.
\begin{definition}[I-Markov property]
	\label{def:imarkovprop}
	Let $\D$ be a DAG and $\I$ a collection of targets and induced parent sets. Let $\{p_k(\bx)\}_{k=1}^K$ be a family of strictly positive probability distributions over $(X_1,\dots,X_q)$\black. Then, $\{p_k(\bx)\}_{k=1}^K$ satisfies the I-Markov property with respect to $\{\D_k^\I\}_{k=1}^K$ if:
	\begin{enumerate}
		\item $p_k(\bx_A \g \bx_B, \bx_C) = p_k(\bx_A \g \bx_C)$ for any $k \in [K]$ and any disjoint sets $A,B,C \subset [q]$ such that $C$ d-separates $A$ and $B$ in $\D_k$;
		\item $p_k(\bx_A \g \bx_C) = p_1( \bx_A\g \bx_C)$ for any $k \in [K]$ and any disjoint sets $A,C$ such that $C$ d-separates $A$ from $\zeta_k$ in $\D^\I_k$.
	\end{enumerate}
\end{definition}

Point 1.~applies the usual Markov property to the pre- and post-intervention graphs $\D_k$, $k \in [K]$. Notice that, because general interventions may induce new parent sets, the set of implied conditional independencies may also change across experimental settings. Point 2.~instead imposes a local invariance whenever a d-separation statement involving $\I$-vertices holds in the augmented intervention DAGs. 
If a tuple of post-intervention distributions $\{p(\cdot \g \sigma_k)\}_{k=1}^K$ is $\I$-Markov w.r.t $\{\D_k^\I\}_{k=1}^K$, then any d-separation statement in $\{\D_k^\I\}_{k=1}^K$ will imply either a conditional independence relationship or an invariance in $\{p(\cdot \g \sigma_k)\}_{k=1}^K$. Throughout the paper, we also assume the converse, so that any invariance and any conditional independence relationship in the tuple of distributions implies a d-separation in $\{\D_k^\I\}_{k=1}^K$. Following \citet{squires:utigsp:2020}, we call this assumption $\I$-faithfulness.

\begin{definition}[I-Faithfulness]
	\label{ass:faithfulness}
	Let $\D$ be a DAG and a $\I$ a collection of targets and induced parent sets. Let $\{p_k(\bx)\}_{k=1}^K$ be a set of strictly positive probability distributions over $(X_1,\dots,X_q)$. Then,
	$\{p_k(\bx)\}_{k=1}^K$ is said to be I-faithful with respect to $\{\D_k^{\I}\}_{k=1}^K$ if:
	\begin{enumerate}
		\item For any $k \in [K]$ and any disjoint sets $A,B,C \subset [q]$, $p_k(\bx_A \g \bx_B, \bx_C) = p_k(\bx_A \g \bx_C)$ if and only if $C$ d-separates $A$ and $B$ in $\D_k$;
		\item For any $k \in [K]$ and any disjoint sets $A,C$, $p_k(\bx_A \g \bx_C) = p_1( \bx_A\g \bx_C)$ if and only if $C$ d-separates $A$ from $\zeta_k$ in $\D^\I_k$.
	\end{enumerate}
\end{definition}

\black

Using the I-Markov property, it is possible to characterize the newly defined I-Markov equivalence class of families of distributions through the $\I$-DAGs, as stated in the following proposition.
\begin{proposition}
	\label{prop:imarkpropok}
	Let $\D$ be a DAG and $\I$ a collection of targets and induced parent sets. Then $\{p_k(\cdot)\}_{k=1}^K \in \M_{\I}(\D)$ if and only if $\{p_k(\cdot)\}_{k=1}^K$ satisfies the I-Markov property with respect to $\{\D^\I_k\}_{k=1}^K$.
\end{proposition}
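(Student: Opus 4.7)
The plan is to prove the two implications separately, exploiting the fact that $\D_k^\I$ augments $\D_k$ only with the source vertex $\zeta_k$ pointing to $T^{(k)}$, so $\zeta_k$ is a non-descendant of every $j \in [q]$ in $\D_k^\I$. For the easier direction, ``I-Markov $\Rightarrow \M_\I(\D)$'', condition 1 of Definition \ref{def:imeclass} (Markov property of $p_k$ w.r.t.~$\D_k$) is literally point 1 of Definition \ref{def:imarkovprop}. To obtain the local invariance of condition 2, fix $j \notin T^{(k)}$; then $\pa_{\D_k^\I}(j) = \pa_{\D_k}(j) = \pa_\D(j)$, and because $\zeta_k$ is a non-descendant of $j$, the standard graphical fact that a node is d-separated from its non-descendants given its parents yields $\{j\} \perp_d \zeta_k \mid \pa_\D(j)$ in $\D_k^\I$. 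Applying point 2 of the I-Markov property with $A = \{j\}$, $C = \pa_\D(j)$ then gives $p_k(x_j \mid \bx_{\pa_\D(j)}) = p_1(x_j \mid \bx_{\pa_\D(j)})$; chaining this with the analogous identity for index $l$ through $p_1$ yields the symmetric local invariance required by Definition \ref{def:imeclass}.

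For the harder direction, ``$\M_\I(\D) \Rightarrow$ I-Markov'', point 1 of the I-Markov property is immediate from the Markov property of $p_k$ w.r.t.~$\D_k$ already present in $\M_\I(\D)$. The main content is point 2. The plan is to introduce an auxiliary joint distribution $q_k(\bx, \zeta_k)$ in which $\zeta_k$ is a binary regime indicator with positive marginal, $q_k(\bx \mid \zeta_k = 1) = p_k(\bx)$, $q_k(\bx \mid \zeta_k = 0) = p_1(\bx)$, and $q_k$ admits a recursive factorization along $\D_k^\I$: regime-agnostic factors $p(x_j \mid \bx_{\pa_\D(j)})$ for $j \notin T^{(k)}$, legitimate precisely because of the local invariance in $\M_\I(\D)$, together with regime-dependent factors for $j \in T^{(k)}$ that collapse to the interventional conditional $\tilde p_k(x_j \mid \bx_{\pa_{\D_k}(j)})$ when $\zeta_k = 1$ and to its observational counterpart when $\zeta_k = 0$. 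Once this factorization is validated, the global Markov property of $q_k$ w.r.t.~$\D_k^\I$ transfers d-separation of $A$ from $\zeta_k$ given $C$ into $\bx_A \perp \zeta_k \mid \bx_C$ under $q_k$, which is exactly $p_k(\bx_A \mid \bx_C) = p_1(\bx_A \mid \bx_C)$.

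The principal obstacle lies in constructing $q_k$ consistently when, for some $j \in T^{(k)}$, the induced parent set $\pa_{\D_k}(j)$ is not a superset of $\pa_\D(j)$: the observational conditional $p(x_j \mid \bx_{\pa_\D(j)})$ is then not a function of the $\D_k^\I$-parents alone, so one cannot naively set the regime-$0$ factor equal to it. The cleanest remedy I anticipate is an ancestral-subgraph argument: restrict attention to the ancestors of $A \cup C$ in $\D_k^\I$, use the d-separation hypothesis to locate each intervened $t \in T^{(k)}$ whose interventional factor might influence $p_k(\bx_A, \bx_C)$, and show that after marginalizing the auxiliary coordinates these factors contribute identically to $p_1(\bx_A, \bx_C)$. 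Pinning down this marginalization step rigorously is where I expect the proof to spend most of its effort.
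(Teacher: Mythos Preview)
Your ``If'' direction is correct and matches the paper's argument essentially verbatim: factorize $p_k$ along a topological order of $\D_k$, reduce each conditional to the parents via point~1, then for $j\notin T^{(k)}$ use the d-separation of $j$ from $\zeta_k$ given $\pa_\D(j)$ to invoke point~2 and recover the local invariance (the paper phrases the conclusion via its factorization Lemma~\ref{lemma:factorization}, but the content is the same).

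For the ``Only if'' direction your first instinct---the auxiliary regime-indicator distribution $q_k$---is the standard route in the soft-intervention literature, and you are right to flag the obstruction: once general interventions are allowed, $\pa_{\D_k}(j)$ need not contain $\pa_\D(j)$, so the observational conditional for an intervened node is not a function of its $\D_k^\I$-parents and $q_k$ cannot be made Markov with respect to $\D_k^\I$. The paper does not try to salvage this construction; it bypasses $q_k$ entirely and runs the ancestral-set argument you propose as a remedy directly on the factorization of $p_k$. The key structural step you have not yet pinned down is the partition: let $V_{\mathrm{An}}$ be the ancestral set of $A\cup C$ in $\D_k$, split it as $V_{\mathrm{An}}=A'\cup B'\cup C$ where $B'$ collects the ancestors d-connected to $\zeta_k$ given $C$ and $A'$ the remaining ones. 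The two observations that make the marginalization go through are (i) every $j\in A'$, and every $j\in C$ with a parent in $A'$, lies outside $T^{(k)}$ (otherwise one manufactures a d-connection to $\zeta_k$, in the second case via the collider at $j$), so those factors are observational; and (ii) the parents of the $A'$-block lie in $A'\cup C$ while those of the $B'$-block lie in $B'\cup C$. This yields a clean product $g(\bx_{A'},\bx_C)\cdot g_k(\bx_{B'},\bx_C)$, and marginalizing $B'$ and $A'\setminus A$ gives $p_k(\bx_A\mid\bx_C)$ free of the interventional piece. So your plan lands on the same argument as the paper; the detour through $q_k$ is unnecessary, and the partition above is the missing concrete mechanism for the marginalization you anticipated.
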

We are finally able to characterize I-Markov equivalence by means of graphical criteria.

\begin{theorem}
	\label{theorem:imarkeq:skeleta}
	Let $\D_1, \D_2$ be two DAGs and $\I$ a collection of targets and induced parent sets defining a valid general intervention for both $\D_1, \D_2$. $\D_1$ and $\D_2$ belong to the same I-Markov equivalence class if and only if $\D^\I_{1,k}$ and $\D^\I_{2,k}$ have the same skeleta and v-structures for all $k \in [K]$.
\end{theorem}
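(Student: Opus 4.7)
The plan is to reduce the statement to a Verma--Pearl-type characterization of Markov equivalence applied to each augmented $\I$-DAG $\D^\I_k$, using Proposition \ref{prop:imarkpropok} as the bridge between the distributional equivalence $\M_{\I}(\D_1)=\M_{\I}(\D_2)$ and an equivalence of d-separation structures. First I would invoke Proposition \ref{prop:imarkpropok} to rephrase I-Markov equivalence of $\D_1$ and $\D_2$ as the statement: a family $\{p_k\}_{k=1}^K$ satisfies the I-Markov property w.r.t.\ $\{\D^\I_{1,k}\}_{k=1}^K$ if and only if it satisfies it w.r.t.\ $\{\D^\I_{2,k}\}_{k=1}^K$. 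By Definition \ref{def:imarkovprop}, for each $k$ this property packages two kinds of d-separation statements: those internal to $\D_k$ (yielding CIs on $p_k$) and those involving the $\I$-vertex $\zeta_k$ in $\D^\I_k$ (yielding invariances between $p_k$ and $p_1$). The goal becomes showing that the d-separation statements implied by $\{\D^\I_{1,k}\}_{k=1}^K$ coincide with those implied by $\{\D^\I_{2,k}\}_{k=1}^K$ if and only if the two augmented collections share skeleta and v-structures for every $k$.

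For sufficiency, the classical Theorem \ref{th:markoveq} of \citet{Verma:Pearl:1990} applied separately to each augmented $\I$-DAG gives that same skeleta and v-structures imply the same set of d-separations in $\D^\I_{1,k}$ and $\D^\I_{2,k}$. In particular the d-separations involving $\zeta_k$ match, so the invariance constraints coincide. To transfer the CI constraints of point 1 (which are read off $\D_k$ rather than $\D^\I_k$), I would argue that skeletal and v-structural equivalence of the augmented DAGs propagates to the post-intervention DAGs $\D_{1,k}$ and $\D_{2,k}$ obtained by deleting $\zeta_k$: since $\zeta_k$ shares the same adjacencies $T^{(k)}$ in both graphs, adjacencies among $V$-vertices are inherited, and the v-structures among $V$-vertices in $\D^\I_{1,k}$ coincide with those in $\D_{1,k}$ (and similarly for $\D_2$), because removing $\zeta_k$ only deletes v-structures whose collider parent is $\zeta_k$ itself. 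A second application of Theorem \ref{th:markoveq} to $\D_{1,k}$ and $\D_{2,k}$ yields equality of the CI sets, and Proposition \ref{prop:imarkpropok} closes the loop.

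For necessity, I would argue contrapositively: if $\D^\I_{1,k}$ and $\D^\I_{2,k}$ differ in skeleton or v-structure for some $k$, Theorem \ref{th:markoveq} produces a d-separation statement that holds in one augmented $\I$-DAG but fails in the other. To exhibit a witness family in $\M_{\I}(\D_1)\setminus\M_{\I}(\D_2)$ (or the reverse), I would construct, or invoke the existence of, a family that is I-faithful in the sense of Definition \ref{ass:faithfulness} to $\{\D^\I_{1,k}\}_{k=1}^K$, for instance via generic Gaussian linear SCMs with intervention-specific coefficients and noise parameters. Such a family satisfies exactly the CI and invariance statements dictated by $\{\D^\I_{1,k}\}_{k=1}^K$, so the missing d-separation in $\{\D^\I_{2,k}\}_{k=1}^K$ translates into a constraint that is violated, placing the family outside $\M_{\I}(\D_2)$.

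I expect the main obstacle to lie in the necessity step, specifically in producing a distribution family that is simultaneously I-faithful across all $K$ environments: one must choose intervention parameters delicately so that the family realizes precisely the d-separations of the target augmented collection, without accidentally introducing spurious invariances (which would subsume genuine constraints) or violating the stability of non-target mechanisms encoded in Definition \ref{def:imeclass}. A secondary care point is the bookkeeping noted above, reconciling d-separations in $\D_k$ with those in $\D^\I_k$ (the former being a superset among $V$-vertices, since paths through the root $\zeta_k$ create additional d-connections in the augmented graph), in order to justify that skeletal/v-structural equivalence at the augmented level is exactly the right invariant for the I-Markov equivalence class.
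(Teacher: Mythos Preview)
Your sufficiency direction coincides with the paper's: equal skeleta and v-structures for every pair $\D^\I_{1,k},\D^\I_{2,k}$ yield identical d-separation relations, and Proposition~\ref{prop:imarkpropok} then gives $\M_\I(\D_1)=\M_\I(\D_2)$. Your bookkeeping step, transferring skeletal and v-structural equality from $\D^\I_{\cdot,k}$ down to $\D_{\cdot,k}$, is needed and correct; it works precisely because $T^{(k)}$ and $P^{(k)}$ are shared by hypothesis.

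For necessity the paper takes a markedly shorter route than the one you propose, and the difference is worth internalizing. You plan to build an I-faithful family and use it to witness a distinguishing d-separation; the paper instead exploits that $\I$ is the \emph{same} for both DAGs, which rigidifies the augmented graphs considerably: the $\I$-edges $\{\zeta_k\to j:j\in T^{(k)}\}$ coincide, and for every target $j\in T^{(k)}$ the parent set $\pa_{\D_{1,k}}(j)=P^{(k)}_j=\pa_{\D_{2,k}}(j)$ is forced. The contrapositive then splits cleanly. If at some $k^*$ the unaugmented post-intervention DAGs $\D_{1,k^*},\D_{2,k^*}$ already differ in skeleton or v-structures, ordinary (single-DAG) Markov non-equivalence immediately yields $\M_\I(\D_1)\ne\M_\I(\D_2)$. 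Otherwise $\D_{1,k^*}$ and $\D_{2,k^*}$ are Markov equivalent, and the only possible discrepancy between $\D^\I_{1,k^*}$ and $\D^\I_{2,k^*}$ is a v-structure $\zeta_{k^*}\to v\leftarrow w$ present in one and absent in the other; but since the skeleta agree and the parents of $v$ are pinned down by $P^{(k^*)}$, this forces $v\leftrightarrow w$ in the other graph, contradicting validity. In the known-$\I$ setting, then, the second case is \emph{structurally impossible}, and no I-faithful family ever needs to be produced. Your approach would ultimately work, but the obstacle you flag, constructing a family simultaneously faithful across all $K$ environments, is exactly what the paper sidesteps; that heavier machinery only becomes necessary in Theorems~\ref{theorem:imarkeq:skeleta2} and~\ref{theorem:imarkeq:skeleta3}, where $\I$ is allowed to vary and the paper does exhibit an explicit invariance that holds under one model and fails under the other.
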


\begin{theorem}
	\label{theorem:seqcovedge}
	Let $\D_1, \D_2$ be two DAGs and $\I$ a collection of targets and induced parent sets defining a valid general intervention for both $\D_1$ and $\D_2$. $\D_1$ and $\D_2$ belong to the same I-Markov equivalence class if and only if there exists a sequence of edge reversals modifying $\D_1$ and such that:
	\begin{enumerate}
		\item Each edge reversed is simultaneously covered;
		\item After each reversal, $\{\D_{1,k}^{\I}\}_{k=1}^K$ are DAGs and $\D_1, \D_2$ belong to the same I-Markov equivalence class;
		\item After all reversals $\D_1 = \D_2$.
	\end{enumerate}
\end{theorem}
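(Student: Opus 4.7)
The plan is to mirror the proof of Chickering's classical result (Theorem~\ref{th:markoveqedge}) and adapt it to the interventional setting, leveraging the characterization by skeleta and v-structures of $\I$-DAGs provided by Theorem~\ref{theorem:imarkeq:skeleta}.

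\textbf{Sufficiency.} By induction on the length of the sequence, it suffices to show that a single simultaneously covered reversal of $i \to j$ in $\D_1$, producing $\widetilde\D_1$, preserves I-Markov equivalence. By Theorem~\ref{theorem:imarkeq:skeleta}, I must prove that $\D_{1,k}^{\I}$ and $\widetilde\D_{1,k}^{\I}$ share skeleton and v-structures for every $k \in [K]$. For $k=1$ this is exactly Chickering's theorem applied to $\D_1$ with the covered edge $i \to j$. For $k \neq 1$ I split on the two clauses of the simultaneously covered definition: if $\{i,j\} \subseteq T^{(k)}$, then the parent sets of $i$ and $j$ in $\D_{1,k}$ are fixed by $P^{(k)}$ and unaffected by reversals in $\D_1$, giving $\D_{1,k}^{\I} = \widetilde\D_{1,k}^{\I}$; otherwise, $i \to j$ is covered in $\D_{1,k}^{\I}$, which, as a short analysis involving $\zeta_k$ shows, forces $i,j \notin T^{(k)}$, so that the reversal in $\D_1$ induces the same-edge reversal in $\D_{1,k}^{\I}$ and Theorem~\ref{th:markoveqedge} applied to that single $\I$-DAG concludes.

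\textbf{Necessity.} I proceed by induction on the number $r$ of edges with opposite orientations in $\D_1$ and $\D_2$; since I-Markov equivalence implies that $\D_1$ and $\D_2$ share skeleta (Theorem~\ref{theorem:imarkeq:skeleta} at $k=1$), this quantity is well-defined. The base case $r=0$ is trivial. For the inductive step I adopt Chickering's extremal-edge selection: let $H$ be the subgraph of $\D_1$ induced by nodes incident to a reversed edge, fix a topological ordering of $\D_1$, let $j$ be the last node of $H$ in that ordering, and $i$ any node with $i \to j$ in $\D_1$ and $j \to i$ in $\D_2$. Chickering's arguments give that $i \to j$ is covered in $\D_1$. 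The new content is to verify condition~2 of the simultaneously covered definition for every $k \neq 1$: using Theorem~\ref{theorem:imarkeq:skeleta} to transfer skeleton and v-structure equality to the pair $(\D_{1,k}^{\I}, \D_{2,k}^{\I})$, I case-split on the membership of $i$ and $j$ in $T^{(k)}$. After the reversal, I verify that each $\widetilde\D_{1,k}^{\I}$ remains acyclic, so that validity of $\I$ is preserved along the sequence; this follows because any new directed path $j \to \cdots \to i$ in $\widetilde\D_{1,k}^{\I}$ would contradict the covering equality at $j$.

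\textbf{Main obstacle.} The most delicate point is the case-analysis in necessity when exactly one of $i,j$ belongs to $T^{(k)}$. If $j \in T^{(k)}$ but $i \notin T^{(k)}$, then $\zeta_k$ is a parent of $j$ but not of $i$ in $\D_{1,k}^{\I}$, so no edge $i \to j$ there can satisfy the covered-edge equality; a symmetric obstruction arises when $i \in T^{(k)}$ but $j \notin T^{(k)}$. The substantive claim is that the extremally chosen edge never falls into these forbidden configurations. I would establish this by combining validity of $\I$ for $\D_2$ with the skeleton equality from Theorem~\ref{theorem:imarkeq:skeleta}: if the configuration held, then either $\D_{2,k}$ would contain a $2$-cycle between $i$ and $j$ (violating validity) or the skeleta of $\D_{1,k}^{\I}$ and $\D_{2,k}^{\I}$ would disagree on the $\{i,j\}$-edge (violating the skeleton equality), yielding a contradiction in each sub-case and thereby forcing the simultaneously covered property to hold automatically.
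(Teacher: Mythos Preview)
Your proposal is correct and follows essentially the same route as the paper: the paper packages sufficiency into a lemma showing that a single simultaneously covered reversal preserves I-Markov equivalence, and handles necessity by applying Chickering's Find-Edge procedure and proving its output is simultaneously covered via exactly the case-split on membership of $i,j$ in $T^{(k)}$ that you describe, deriving a skeleton mismatch or a validity-violating $2$-cycle in each forbidden sub-case. One minor imprecision: your description of the extremal selection (``$j$ last in the ordering, $i$ arbitrary'') does not quite match Find-Edge, which takes $v$ \emph{minimal} with $\Psi_v\neq\varnothing$ and $u$ \emph{maximal} in $\Psi_v$; since you explicitly defer to Chickering for the covered-in-$\D_1$ conclusion this is only a wording issue, but you should state the selection rule correctly.
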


%

Theorems \ref{theorem:imarkeq:skeleta} and \ref{theorem:seqcovedge} resemble Theorems \ref{th:markoveq} and \ref{th:markoveqedge} for the observational case.
While Theorem \ref{theorem:imarkeq:skeleta} provides a direct graphical tool to assess whether two DAGs are I-Markov equivalent,
Theorem \ref{theorem:seqcovedge} is a technical result of key importance for \emph{proving} score-equivalence of DAGs.
Moreover, Theorem \ref{theorem:imarkeq:skeleta}
does not provide a characterization of I-Markov equivalence classes through a single representative graph, as \citet{Hauser:Buehlmann:2012} do for the case of hard interventions. Nevertheless, our graphical characterization is similar to the one of perfect I-Markov equivalence offered in the same paper (Theorem 10), and which is based on sequences of post-intervention DAGs.
It is thus immediate to prove the following corollary:


\begin{corollary}
	\label{cor:iessential}
	Let $\D_1$ and $\D_2$ be two DAGs and $\I$ a collection of targets and induced parent sets. $\D_1$ and $\D_2$ are I-Markov equivalent if and only if they are perfect I-Markov equivalent.
\end{corollary}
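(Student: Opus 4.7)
The plan is to derive the corollary by combining Theorem \ref{theorem:imarkeq:skeleta} with Theorem \ref{th:markoveq}, after showing that the $\I$-vertices contribute identical structural information to both $\D^\I_{1,k}$ and $\D^\I_{2,k}$. First I would recall the notion of perfect I-Markov equivalence in the spirit of \citet{Hauser:Buehlmann:2012}: two DAGs $\D_1, \D_2$ are perfect I-Markov equivalent (with respect to $\I$) if and only if the post-intervention DAGs $\D_{1,k}$ and $\D_{2,k}$ are Markov equivalent for every $k \in [K]$.

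Next, by Theorem \ref{theorem:imarkeq:skeleta}, $\D_1$ and $\D_2$ are I-Markov equivalent if and only if $\D^\I_{1,k}$ and $\D^\I_{2,k}$ share the same skeleton and the same v-structures for each $k \in [K]$. I would then unpack this condition for a fixed $k$. By Definition \ref{def:augdag}, $\D^\I_{i,k}$ is constructed from $\D_{i,k}$ by adding the $\I$-vertex $\zeta_k$ together with edges $\zeta_k \to j$ for every $j \in T^{(k)}$; since $T^{(k)}$ does not depend on the underlying DAG, the set of $\I$-edges in $\D^\I_{1,k}$ is identical to that in $\D^\I_{2,k}$. Therefore the skeleton of $\D^\I_{1,k}$ coincides with that of $\D^\I_{2,k}$ if and only if the skeleton of $\D_{1,k}$ coincides with that of $\D_{2,k}$.

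I would then turn to v-structures. A v-structure of $\D^\I_{i,k}$ is either (i) a v-structure $a \to b \leftarrow c$ inherited from $\D_{i,k}$, or (ii) a v-structure of the form $\zeta_k \to j \leftarrow i$, which, by the $\I$-edge structure, requires $j \in T^{(k)}$, $i \in \pa_{\D_{i,k}}(j)$, and $i \notin T^{(k)}$. For type (ii), the parents of any $j \in T^{(k)}$ inside $\D_{i,k}$ coincide with the induced parent set $P^{(k)}_j$ by the very definition of post-intervention graph, and hence do not depend on $i \in \{1,2\}$; so the set of v-structures involving $\zeta_k$ is identical in $\D^\I_{1,k}$ and $\D^\I_{2,k}$. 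For type (i), the presence of the new node $\zeta_k$ cannot create or destroy v-structures among the original vertices, since it introduces no new adjacencies between them. Consequently, $\D^\I_{1,k}$ and $\D^\I_{2,k}$ share the same v-structures if and only if $\D_{1,k}$ and $\D_{2,k}$ do.

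Combining the two observations above, the condition of Theorem \ref{theorem:imarkeq:skeleta} reduces to: $\D_{1,k}$ and $\D_{2,k}$ have the same skeleta and the same v-structures for every $k \in [K]$. By Theorem \ref{th:markoveq}, this is exactly Markov equivalence of $\D_{1,k}$ and $\D_{2,k}$ for every $k$, which by definition is perfect I-Markov equivalence, completing the proof. The only subtlety, and thus the main step requiring care, is the v-structure analysis around $\zeta_k$: one must verify both that the $\I$-edges cannot produce spurious v-structures that differ across $\D^\I_{1,k}$ and $\D^\I_{2,k}$, and that they cannot mask existing ones, which is ensured by the fact that $\zeta_k$ is a new vertex and that $\pa_{\D_{i,k}}(j) = P^{(k)}_j$ is common to both $i=1,2$ whenever $j \in T^{(k)}$.
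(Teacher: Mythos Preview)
Your proof is correct and follows precisely the route the paper intends: the corollary is stated as ``immediate'' from Theorem~\ref{theorem:imarkeq:skeleta} together with the characterization of perfect I-Markov equivalence in \citet{Hauser:Buehlmann:2012}, and your argument simply spells out why the $\I$-vertex $\zeta_k$ contributes identical skeleton edges and v-structures to both augmented DAGs, so that the condition of Theorem~\ref{theorem:imarkeq:skeleta} collapses to Markov equivalence of each $\D_{1,k}$ and $\D_{2,k}$.
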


Notice however that because of our validity assumption, for a given (known) $\I$, some DAGs may be excluded from the DAG space. 
We illustrate this point with an example in Figure \ref{fig:restricted:space:example}.
\begin{figure}
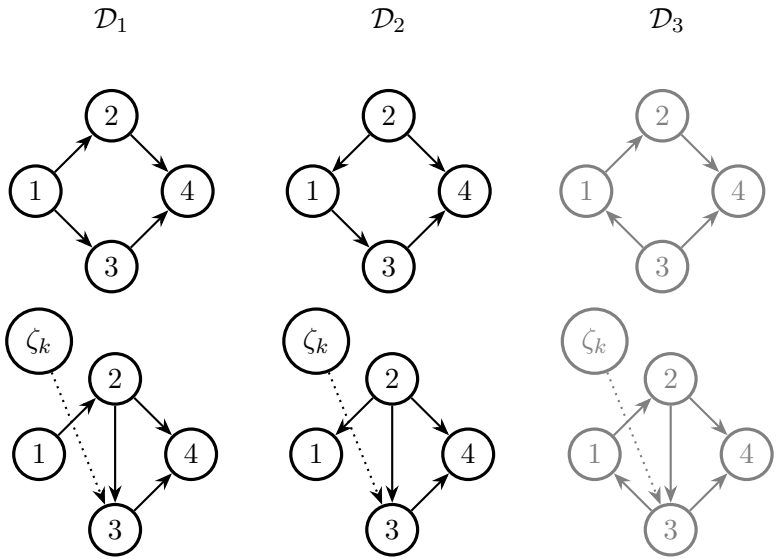

	\caption{Three Markov equivalent DAGs and their post-intervention graphs after a general intervention with $T^{(2)} = \{3\}$, $P^{(2)} = \{2\}$. The intervention is not valid for $\D_3$.}
	\centering
	\begin{tabular}{cccccc}
		$\D_1$ && $\D_2$ && $\D_3$ \\ \\
		{\tikz[>={Stealth[black]}] \graph[no placement, nodes={circle, very thick, draw}, edges={black, thick}]{
				1[x=0,y=-1] -> {3[x=1,y=-2], 2[x=1,y=0]},
				{3[x=1,y=-2], 2[x=1,y=0]} -> 4[x=2,y=-1]};}
		&&
		{\tikz[>={Stealth[black]}] \graph[no placement, nodes={circle, very thick, draw}, edges={black, thick}]{
				1[x=0,y=-1] -> 3[x=1,y=-2],
				2[x=1,y=0] -> 1[x=0,y=-1],  
				{3[x=1,y=-2], 2[x=1,y=0]} -> 4[x=2,y=-1]};}
		&& 
		{\tikz[>={Stealth[gray]}] \graph[no placement, nodes={gray, circle, very thick, draw}, edges={gray, thick}]{
				1[x=0,y=-1] -> 2[x=1,y=0],
				1[x=0, y=-1] <- 3[x=1,y=-2],
				{3[x=1,y=-2], 2[x=1,y=0]} -> 4[x=2,y=-1]};} \black \\
		{\tikz[>={Stealth[black]}] \graph[no placement, nodes={circle, very thick, draw}, edges={black, thick}]{
				5[as=$\zeta_k$,x=0,y=0.5] ->[dotted] 3[x=1,y=-2],
				1[x=0,y=-1] -> {2[x=1,y=0]},
				2[x=1,y=0] -> {3[x=1,y=-2], 4[x=2,y=-1]},
				3[x=1,y=-2] -> 4[x=2,y=-1]};}
		&&
		{\tikz[>={Stealth[black]}] \graph[no placement, nodes={circle, very thick, draw}, edges={black, thick}]{
				5[as=$\zeta_k$,x=0,y=0.5] ->[dotted] 3[x=1,y=-2],
				2[x=1,y=0] -> {1[x=0,y=-1], 3[x=1,y=-2], 4[x=2,y=-1]}, 
				3[x=1,y=-2] -> 4[x=2,y=-1]};}
		&&
		{\tikz[>={Stealth[gray]}] \graph[no placement, nodes={gray, circle, very thick, draw}, edges={gray, thick}]{
				5[as=$\zeta_k$,x=0,y=0.5] ->[dotted] 3[x=1,y=-2],
				1[x=0,y=-1] -> 2[x=1,y=0],
				2[x=1,y=0] -> {3[x=1,y=-2], 4[x=2,y=-1]},
				1[x=0, y=-1] <- 3[x=1,y=-2],
				3[x=1,y=-2] -> 4[x=2,y=-1]};} 
	\end{tabular}
	\label{fig:restricted:space:example}
\end{figure}
In such case, the general intervention defined by $T^{(2)} = {3}, P^{(2)} = {2}$ is valid for $\D_1$ and $\D_2$, but not for $\D_3$, as it would induce a cycle. Accordingly, if we consider the equivalence class defined by this intervention and assume its validity, then node $2$ cannot be a descendant of node $3$. This implies that DAGs for which $2$ is instead a descendant of $3$ must be excluded from the original DAG space.
While this implication may appear undesirable, it is worth noting 
that it only occurs when the intervention targets are \emph{known}, and the intervention includes the addition of a new parent node. In the next section we instead consider the case of \emph{unknown} interventions, thus avoiding the assumption of known targets and induced parent sets.

\subsection{DAG Identifiability from Unknown General Interventions}
\label{sec:identifiability:unknowninterventions}

In the previous section we introduced I-Markov equivalence as a limit to DAG identifiability from a collection of experimental settings characterized by \textit{known} targets and induced parent-sets $(\T, \mathcal{P})$.
In this section, we consider the problem of jointly identifying the the pair $(\D, \I)$ from a family of pre- and post-intervention distributions $\{p(\cdot \g \sigma_k)\}_{k=1}^K$. 
The same problem has been previously investigated by \cite{squires:utigsp:2020} in the context of soft interventions. The authors showed that, assuming $\I$-faithfulness, the DAG identifiability limit remains the same even when the targets of intervention are unknown and must be learnt from the data. Their results only partially apply to our general intervention setting, and accordingly further considerations are required. 
We first consider the problem of learning a general intervention from a known DAG $\D$ and a given family of distributions $\{p_k(\cdot)\}_{k=1}^K$. Any general intervention induces a sequence of augmented DAGs that, through the I-Markov property of Definition \ref{def:imarkovprop}, implies a set of conditional independencies and invariances. We thus investigate the limits in the identifiability of $(\T, \mathcal{P})$, that is whether different general interventions may imply the same set of conditional independencies and invariances.
With a slight abuse of terminology, we will refer to indistinguishable general interventions as I-Markov equivalent.

\begin{definition}
	\label{def:tpimarkoveq}
	Let $\D$ be a DAG and $\I_1, \I_2$ two collections of targets and induced parent sets. $\I_1, \I_2$ are I-Markov equivalent (or, equivalently, belong to the same \textit{I-Markov equivalence class}) if $\M_{\I_1}(\D) = \M_{\I_2}(\D)$.
\end{definition} 

Consider for instance the two general interventions depicted in Figure \ref{fig:equivalent:interventions}, where we have $T_1^{(2)} = T_2^{(2)} = \{1,3\}$, $P_1^{(2)} = \{\{3\}, \O\}$ and $P_2^{(2)} = \{\O, \{1\}\}$. In both cases, the pre- and post-intervention DAGs have the same skeleta and the same set of v-structures, thus implying the same d-separation statements. As a consequence, also the conditional independencies and invariances are the same and the two general interventions are indistinguishable given data alone.
\begin{figure}
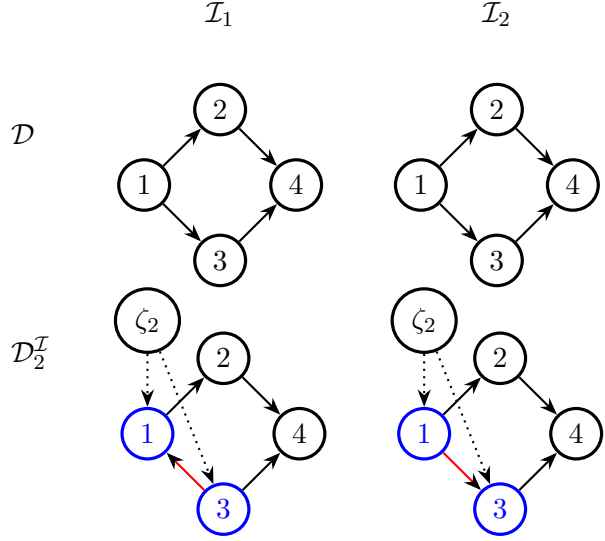

	\caption{Two non-identifiable combinations of DAGs and general interventions.}
	\centering
	\vspace{0.5cm}
	\begin{tabular}{lccccc}
		$\textbf{}$ && $\I_1$ && $\I_2$ \\ \\
		\multirow{1}{*}{$\D$}
		&&
		\raisebox{-0.7\height}{\tikz[>={Stealth[black]}] \graph[no placement, nodes={circle, very thick, draw}, edges={black, thick}]{
				1[x=0,y=-1] -> {3[x=1,y=-2], 2[x=1,y=0]},
				{3[x=1,y=-2], 2[x=1,y=0]} -> 4[x=2,y=-1]};}
		&&
		\raisebox{-0.7\height}{\tikz[>={Stealth[black]}] \graph[no placement, nodes={circle, very thick, draw}, edges={black, thick}]{
				1[x=0,y=-1] -> {3[x=1,y=-2],2[x=1,y=0]},
				{3[x=1,y=-2], 2[x=1,y=0]} -> 4[x=2,y=-1]};} \\
		\multirow{1}{*}{$\D_2^\I$}
		&& 
		\raisebox{-0.7\height}{\tikz[>={Stealth[black]}] \graph[no placement, nodes={circle, very thick, draw}, edges={black, thick}]{
				1[blue,x=0,y=-1] -> 2[x=1,y=0],
				1[x=0, y=-1] <-[red] 3[blue, x=1,y=-2],
				5[as=$\zeta_2$,x=0,y=0.5] ->[dotted] {1[blue,x=0,y=-1], 3[blue, x=1,y=-2]},
				{3[blue, x=1,y=-2], 2[x=1,y=0]} -> 4[x=2,y=-1]};} 
		&& 
		\raisebox{-0.7\height}{\tikz[>={Stealth[black]}] \graph[no placement, nodes={circle, very thick, draw}, edges={black, thick}]{
				1[blue,x=0,y=-1] -> 2[x=1,y=0],
				1[x=0, y=-1] ->[red] 3[blue, x=1,y=-2],
				5[as=$\zeta_2$,x=0,y=0.5] ->[dotted] {1[blue,x=0,y=-1], 3[blue, x=1,y=-2]},
				{3[blue, x=1,y=-2], 2[x=1,y=0]} -> 4[x=2,y=-1]};}
	\end{tabular}
	\label{fig:equivalent:interventions}
\end{figure}
We then provide the following characterizations of I-Markov equivalence of general interventions.

\begin{theorem}
	\label{theorem:imarkeq:skeleta2}
	Let $\D$ be a DAG and $\I_1, \I_2$ two collections of targets and induced parent sets. Then, $\I_1, \I_2$ belong to the same I-Markov equivalence class if and only if $\D^{\I_1}_{k}, \D^{\I_2}_{k}$ have the same skeleta and v-structures for all $k \in [K]$.
\end{theorem}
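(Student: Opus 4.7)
The plan is to reduce the theorem to a statement about classical Markov equivalence of the augmented $\I$-DAGs and then invoke Theorem \ref{th:markoveq} separately for each $k \in [K]$. The bridge is Proposition \ref{prop:imarkpropok}, which identifies $\M_\I(\D)$ with the set of families $\{p_k\}_{k=1}^K$ satisfying the I-Markov property w.r.t.\ $\{\D^\I_k\}_{k=1}^K$. Both clauses of Definition \ref{def:imarkovprop} are phrased in terms of d-separation within the augmented graphs: the first is among subsets of $[q]$ and coincides with d-separation in $\D^\I_k$ restricted to $[q]$, since $\zeta_k$ has no incoming edges and only outgoing edges into $T^{(k)}$, so every path through $\zeta_k$ is a fork blocked by any $C \subset [q]$. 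Hence the constraints defining $\M_{\I_1}(\D)$ and $\M_{\I_2}(\D)$ are determined entirely by the d-separation relations of $\{\D^{\I_1}_k\}_{k=1}^K$ and $\{\D^{\I_2}_k\}_{k=1}^K$, respectively.

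For the sufficient direction, assume $\D^{\I_1}_k$ and $\D^{\I_2}_k$ share skeleta and v-structures for every $k$. By Theorem \ref{th:markoveq} they encode the same d-separation relations, so for every triple $(A,B,C)$, whether or not it involves $\zeta_k$, $C$ d-separates $A$ and $B$ in $\D^{\I_1}_k$ iff it does so in $\D^{\I_2}_k$. Consequently any family $\{p_k\}$ satisfying the I-Markov property w.r.t.\ one sequence automatically satisfies it w.r.t.\ the other, and Proposition \ref{prop:imarkpropok} yields $\M_{\I_1}(\D) = \M_{\I_2}(\D)$.

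For the necessary direction, I argue contrapositively. Suppose some $k \geq 2$ has $\D^{\I_1}_k$ and $\D^{\I_2}_k$ differing in skeleton or v-structures; then by Theorem \ref{th:markoveq} there is a triple $(A,B,C)$, possibly involving $\zeta_k$, that is a d-separation in one graph but not the other. Exchanging $\I_1$ and $\I_2$ if necessary, I may assume the d-separation holds in $\D^{\I_2}_k$ but fails in $\D^{\I_1}_k$. It then suffices to exhibit a family $\{p_k\} \in \M_{\I_1}(\D)$ that is I-faithful to $\{\D^{\I_1}_k\}_{k=1}^K$ in the sense of Definition \ref{ass:faithfulness}: for such a family the CI or invariance associated with $(A,B,C)$ fails, contradicting the I-Markov property w.r.t.\ $\{\D^{\I_2}_k\}_{k=1}^K$ and hence $\{p_k\} \in \M_{\I_2}(\D)$. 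To build it, take a Markovian SCM faithful to $\D$ and, for each $k \geq 2$, replace the structural equations of the intervened nodes $j \in T^{(k)}$ with generic functions of $X_{P^{(k)}_j}$ and fresh independent noises, leaving the mechanisms of non-intervened nodes untouched.

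The principal obstacle is verifying the I-faithfulness of the constructed family, i.e.\ ensuring that the replaced local mechanisms introduce neither spurious conditional independencies within each environment nor spurious inter-environment invariances. The analogous step in the proof of Theorem \ref{theorem:imarkeq:skeleta} handles this through a genericity argument within a parametric class such as Gaussian SCMs with random coefficients, where the set of parameter values producing non-faithful tuples has Lebesgue measure zero. I would lift that argument essentially verbatim; the only novelty here is the explicit dependence of the post-intervention parent sets on $\I$, which changes the indexing of the parameter vector but not the measure-theoretic structure of the argument.
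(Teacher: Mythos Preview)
Your sufficiency argument matches the paper's. For necessity, however, you take a genuinely different route. The paper does \emph{not} construct an I-faithful family or invoke any genericity argument; instead it gives a short structural case analysis. It first observes that the non-augmented post-intervention DAGs $\D_{1,k^*}$ and $\D_{2,k^*}$ must be Markov equivalent (otherwise I-Markov equivalence fails immediately), so any discrepancy between $\D_{k^*}^{\I_1}$ and $\D_{k^*}^{\I_2}$ lies either in the set of $\I$-edges or in v-structures of the form $\zeta_{k^*}\to v\leftarrow w$. In the first case a direct invariance violation at $v$ is exhibited; in the second, since the underlying skeleta agree, the edge $v\text{--}w$ must be oriented $v\to w$ in the other $\I$-DAG, which forces $w$ to change target status and hence forces a skeleton difference at the $\I$-edge $\zeta_{k^*}\to w$, reducing to the first case. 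No faithful distribution is ever produced.

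Your approach is valid in principle, but two points deserve attention. First, your appeal to ``the analogous step in the proof of Theorem~\ref{theorem:imarkeq:skeleta}'' is misplaced: that proof uses exactly the same elementary case split just described, not a measure-zero argument, so there is nothing to lift verbatim. Second, the existence of I-faithful families in the sense of Definition~\ref{ass:faithfulness} is nowhere established in the paper; you would have to supply that argument yourself (handling both the within-environment faithfulness and, less standardly, the absence of accidental cross-environment invariances). This is doable in a Gaussian SEM with generic coefficients, but it is additional work that the paper's direct argument avoids entirely. The trade-off is that your route is more modular and would transfer unchanged to settings where the structural case analysis becomes unwieldy, whereas the paper's argument is self-contained and requires no auxiliary existence result.
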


\begin{theorem}
	\label{theorem:seqcovedge2}
	Let $\D$ be a DAG and $\I_1$, $\I_2$ two collection of targets and induced parent sets. Then, $\I_1$, $\I_2$ belong to the same I-Markov equivalence class if and only if for each $\I$-DAG $\D_k^{\I_1}$ there exists a sequence of edge reversals modyfing $\D_k^{\I_1}$ and such that:
	\begin{enumerate}
		\item Each edge reversed is covered;
		\item After each reversal, $\D_{k}^{\I_1}$ is a DAG and $\I_1$, $\I_2$ belong to the same I-Markov equivalence class;
		\item After all reversals $\D_{k}^{\I_1} = \D_{k}^{\I_2}$.
	\end{enumerate}
\end{theorem}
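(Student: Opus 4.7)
The plan is to reduce the statement to Chickering's classical characterization (Theorem \ref{th:markoveqedge}) via the skeleton/v-structure criterion for $\I$-DAGs given in Theorem \ref{theorem:imarkeq:skeleta2}. The guiding observation is that each $\I$-DAG $\D_k^{\I_\ell}$ is an ordinary DAG once the $\I$-vertex $\zeta_k$ is treated as a regular vertex, so I-Markov equivalence of $\I_1, \I_2$ translates, via Theorem \ref{theorem:imarkeq:skeleta2}, into ordinary Markov equivalence of $\D_k^{\I_1}$ and $\D_k^{\I_2}$ for every $k$. This translation is the engine of both directions.

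For the forward direction ($\Rightarrow$), I assume $\I_1$ and $\I_2$ are I-Markov equivalent. By Theorem \ref{theorem:imarkeq:skeleta2}, for each $k$ the graphs $\D_k^{\I_1}$ and $\D_k^{\I_2}$ share skeleton and v-structures, hence by Theorem \ref{th:markoveq} they are Markov equivalent as ordinary DAGs. Applying Chickering's Theorem \ref{th:markoveqedge} independently for each $k$ yields a sequence of covered-edge reversals transforming $\D_k^{\I_1}$ into $\D_k^{\I_2}$ such that every intermediate graph is a DAG Markov equivalent to the endpoints. Reading off the intervention encoded by each intermediate graph (target set $=$ children of $\zeta_k$; induced parent set of a target $j$ $=$ parents of $j$ in the graph, minus $\{\zeta_k\}$), a second invocation of Theorem \ref{theorem:imarkeq:skeleta2} shows that the intermediate collection is I-Markov equivalent to both $\I_1$ and $\I_2$, which discharges conditions 1--3. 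The backward direction ($\Leftarrow$) is essentially immediate: condition 2 already asserts I-Markov equivalence of $\I_1, \I_2$ at every reversal step, so in particular at the initial and final steps, which yields $\I_1 \sim_I \I_2$.

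The main obstacle I anticipate is the handling of edges incident to $\zeta_k$ along Chickering's sequence. Because the skeleta of $\D_k^{\I_1}$ and $\D_k^{\I_2}$ coincide, the outgoing $\zeta_k$-edges agree at both endpoints and encode the common target set $T^{(k)}$; however, a covered reversal inside the sequence could in principle flip an edge $\zeta_k \to j$ whenever $j$'s only parent is $\zeta_k$, temporarily destroying the interpretation of the intermediate graph as a bona fide $\I$-DAG. I expect to resolve this either by selecting a canonical reversal ordering (using the flexibility in Chickering's construction) that never flips $\zeta_k$-edges, or by arguing directly that if $\zeta_k \to j$ is covered in an intermediate graph then the same edge is covered at both endpoints, so in fact no reversal of it is ever required. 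Either route delivers condition 1 for the sequence and simultaneously validates the intervention-readoff used to verify condition 2.
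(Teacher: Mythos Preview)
Your approach is essentially the same as the paper's: both reduce to Chickering's Find-Edge algorithm applied independently to each pair $(\D_k^{\I_1}, \D_k^{\I_2})$, with Theorem~\ref{theorem:imarkeq:skeleta2} (equivalently, the paper's Lemma on single covered reversals) certifying that each intermediate graph encodes an I-Markov equivalent intervention.

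Your flagged obstacle about $\zeta_k$-edges dissolves more simply than either of your proposed fixes. By construction $\zeta_k$ has no parents in any $\I$-DAG, so every edge incident to $\zeta_k$ is oriented as $\zeta_k \to j$ in both $\D_k^{\I_1}$ and $\D_k^{\I_2}$; hence no such edge lies in $\Delta(\D_k^{\I_1}, \D_k^{\I_2})$, and Chickering's Find-Edge---which only ever selects and reverses edges from $\Delta$---never touches them. Since each reversal in the sequence involves two non-$\zeta_k$ vertices, $\zeta_k$ remains a source throughout, and every intermediate graph is a bona fide $\I$-DAG from which the intervention can be read off exactly as you describe.
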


I-Markov equivalent general interventions thus imply the same skeleta in $\{\D_k^\I\}_{k=1}^K$, and in particular, the same sets of $\I$-edges in the augmented DAGs. This implies that the intervention targets are identifiable. 

We now consider the problem of \emph{jointly} identifying $(\D, \I)$, that is the DAG and the collection of targets and induced parent sets. 
As before, we will use the term I-Markov equivalent to refer to indistinguishable pairs $(\D_1, \I_1)$ and $(\D_2, \I_2)$.




\begin{definition}
	\label{def:dtpimarkoveq}
	Let $\D_1, \D_2$ be two DAG and $\I_1, \I_2$ two collections of targets and induced parent sets defining a valid general intervention for $\D_1, \D_2$ respectively. $(\D_1, \I_1), (\D_2, \I_2)$ are I-Markov equivalent (or, equivalently, belong to the same \textit{I-Markov equivalence class}) if $\M_{\I_1}(\D_1) = \M_{\I_2}(\D_2)$.
\end{definition}

As before, we now provide graphical characterizations of I-Markov equivalence for $(\D, \I)$. 


\begin{theorem}
	\label{theorem:imarkeq:skeleta3}
	Let $\D_1, \D_2$ be two DAGs and $\I_1, \I_2$ two collections of targets and induced parent sets defining a valid general intervention for $\D_1, \D_2$ respectively. $(\D_1, \I_1), (\D_2, \I_2)$ belong to the same I-Markov equivalence class if and only if $\D^{\I_1}_{1,k}, \D^{\I_2}_{2,k}$ have the same skeleta and v-structures for all $k \in [K]$.
\end{theorem}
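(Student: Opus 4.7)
The plan is to reduce the joint statement to the classical Markov-equivalence characterization of Verma and Pearl (Theorem \ref{th:markoveq}) applied separately to each pair of augmented DAGs $(\D^{\I_1}_{1,k},\D^{\I_2}_{2,k})$, using Proposition \ref{prop:imarkpropok} as the bridge between the graphical side and the distributional side. The shape of the argument mirrors the proofs of Theorems \ref{theorem:imarkeq:skeleta} and \ref{theorem:imarkeq:skeleta2}, but both the DAGs and the interventions are now allowed to vary simultaneously.

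For the sufficiency direction, I would assume that $\D^{\I_1}_{1,k}$ and $\D^{\I_2}_{2,k}$ share skeleton and v-structures for every $k$. Two observations need to be made before invoking Proposition \ref{prop:imarkpropok}. First, since $\zeta_k$ is a source in every $\I$-DAG with children exactly $T^{(k)}$, matching the skeleta of the augmented graphs forces matching $\I$-edges and hence $T_1^{(k)}=T_2^{(k)}$. Second, deleting $\zeta_k$ preserves the skeleton and removes only v-structures whose apex lies in $T^{(k)}$ and which involve $\zeta_k$ as a parent; by assumption these coincide on both sides, so the post-intervention DAGs $\D_{1,k}$ and $\D_{2,k}$ are themselves Markov equivalent in the ordinary sense. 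Both clauses of the I-Markov property (Definition \ref{def:imarkovprop}) are then controlled by identical d-separation relations on the two collections, so $\{p_k\}$ satisfies the I-Markov property with respect to $\{\D^{\I_1}_{1,k}\}$ if and only if it does so with respect to $\{\D^{\I_2}_{2,k}\}$, and Proposition \ref{prop:imarkpropok} yields $\M_{\I_1}(\D_1)=\M_{\I_2}(\D_2)$.

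For the necessity direction, I would rely on the standing I-faithfulness assumption (Definition \ref{ass:faithfulness}) and pick a tuple $\{p_k\}\in \M_{\I_1}(\D_1)$ that is I-faithful to $\{\D^{\I_1}_{1,k}\}$. Since $\M_{\I_1}(\D_1)=\M_{\I_2}(\D_2)$, this tuple also lies in $\M_{\I_2}(\D_2)$, so by Proposition \ref{prop:imarkpropok} it satisfies the I-Markov property with respect to $\{\D^{\I_2}_{2,k}\}$. Running I-faithfulness out of $\{\D^{\I_1}_{1,k}\}$ and the I-Markov property into $\{\D^{\I_2}_{2,k}\}$ (and symmetrically after swapping the roles of the two pairs) shows that the d-separation relations implied by $\D^{\I_1}_{1,k}$ and $\D^{\I_2}_{2,k}$ coincide for every $k$, both among the original vertices and among statements involving $\zeta_k$. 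Theorem \ref{th:markoveq} applied to each pair then yields the matching skeleta and v-structures.

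The main obstacle I expect is the necessity direction: the hypothesis $\M_{\I_1}(\D_1)=\M_{\I_2}(\D_2)$ is a single equality of distribution classes, yet we must extract from it both the ordinary d-separations internal to each $\D_{j,k}$ and the d-separations involving the $\I$-vertices $\zeta_k$. This is exactly what forces the use of I-faithfulness rather than bare Markov, and it is where one must be careful not to overlook the special role played by the augmented sources. A secondary technical point is to check compatibility with validity on both sides, so that every $\D^{\I_j}_{j,k}$ is a genuine DAG to which the Verma--Pearl characterization applies; this is immediate from the hypotheses of the theorem, but should be invoked explicitly before the final appeal to Theorem \ref{th:markoveq}.
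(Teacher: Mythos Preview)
Your sufficiency direction is essentially the paper's argument, with some extra (correct) bookkeeping about why matching augmented skeleta forces $T_1^{(k)}=T_2^{(k)}$ and why the underlying post-intervention DAGs are ordinarily Markov equivalent.

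For the necessity direction you take a genuinely different route. The paper argues by contrapositive with an explicit case split: if $\D_{1,k^*}^{\I_1}$ and $\D_{2,k^*}^{\I_2}$ differ, then (since the unaugmented $\D_{1,k^*},\D_{2,k^*}$ must already be Markov equivalent) the difference is either (i) an $\I$-edge $\zeta_{k^*}\to v$ present on one side only, or (ii) a v-structure $\zeta_{k^*}\to v\leftarrow w$ present on one side only; in each case the paper exhibits a concrete invariance statement (about $p_{k^*}(x_v\mid\cdot)$ in case (i), about $p_{k^*}(x_w)$ versus $p_1(x_w)$ in case (ii)) that holds for one model class but fails for the other. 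Your approach instead picks an I-faithful tuple, pushes d-separations back and forth via Proposition~\ref{prop:imarkpropok}, and then invokes Theorem~\ref{th:markoveq} once on each augmented pair. This is valid under the paper's standing I-faithfulness assumption and is arguably cleaner, since it avoids the case analysis entirely; what you give up is the explicit identification of \emph{which} invariance distinguishes the two classes, which the paper's proof makes concrete. One point to make fully explicit in your write-up: the two clauses of Definition~\ref{def:imarkovprop} together recover \emph{all} d-separations in $\D_k^\I$ (those among $[q]$ via clause~1, since $\zeta_k$ is a source; those involving $\zeta_k$ via clause~2), so that matching I-Markov/I-faithfulness really does force matching d-separations on the full augmented DAGs before you appeal to Verma--Pearl.
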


\begin{theorem}
	\label{theorem:seqcovedge3}
	Let $\D_1, \D_2$ be two DAGs and $\I_1$, $\I_2$ two collections of targets and induced parent sets defining a valid general intervention for both $\D_1, \D_2$. $(\D_1, \I_1)$, $(\D_2, \I_2)$ belong to the same I-Markov equivalence class if and only if there exists a sequence of edge reversals modifying the collection of $\I$-DAGs $\{\D_{1,k}^{\I_1}\}_{k=1}^K$ and such that:
	\begin{enumerate}
		\item Each edge reversed in $\D_1$ is simultaneously covered;
		\item Each edge reversed in $\D_{1,k}^{\I_1}$, for $k \neq 1$, is covered;
		\item After each reversal, $\{\D_{1,k}^{\I_1}\}_{k=1}^K$ are DAGs and $(\D_1,\I_1)$, $(\D_2, \I_2)$ belong to the same I-Markov equivalence class;
		\item After all reversals $\D_{1,k}^{\I_1} = \D_{2,k}^{\I_2}$ for each $k \in [K]$.
	\end{enumerate}
\end{theorem}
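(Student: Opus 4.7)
The plan is to split the proof into the two directions, exploiting Theorem \ref{theorem:seqcovedge} (which handles changes of the base DAG with fixed intervention) and Theorem \ref{theorem:seqcovedge2} (which handles changes of the intervention with fixed base DAG) as the main building blocks, with the graphical characterization in Theorem \ref{theorem:imarkeq:skeleta3} serving as the bridge.

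For the sufficiency (if) direction, I would argue inductively that each admissible reversal preserves I-Markov equivalence with $(\D_2, \I_2)$. A simultaneously covered edge reversal in $\D_1$ induces, in each $\D_{1,k}^{\I_1}$, either a covered reversal of the same edge or no structural change (specifically, when both endpoints lie in $T^{(k)}$, since the parent set of a target there is determined by $P^{(k)}$ rather than by $\D_1$). This preserves the skeleta and v-structures of every $\D_{1,k}^{\I_1}$. A covered reversal in a single $\D_{1,k}^{\I_1}$ with $k \neq 1$ preserves the Markov class of that augmented DAG by Theorem \ref{th:markoveqedge} while leaving the other $\D_{1,j}^{\I_1}$ untouched. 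Invoking Theorem \ref{theorem:imarkeq:skeleta3} at every intermediate step then keeps the collection I-Markov equivalent to $\{\D_{2,k}^{\I_2}\}$, and condition 4 ensures the endpoint of the sequence matches $(\D_2, \I_2)$.

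For necessity, assume $(\D_1, \I_1)$ and $(\D_2, \I_2)$ are I-Markov equivalent. Theorem \ref{theorem:imarkeq:skeleta3} then implies that $\D_{1,k}^{\I_1}$ and $\D_{2,k}^{\I_2}$ are Markov equivalent for every $k$; matching the $\I$-edges incident to $\zeta_k$ forces $\T_1 = \T_2$, and the case $k=1$ yields $\D_1 \sim \D_2$ in the observational sense. I would then proceed in two stages. In Stage 1, construct an intermediate intervention $\I'$ sharing the common target sequence, with induced parent sets chosen so that $(\D_2, \I')$ is I-Markov equivalent to $(\D_1, \I_1)$; then apply Theorem \ref{theorem:seqcovedge} to transform $\D_1$ into $\D_2$ via simultaneously covered reversals, fulfilling condition 1. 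In Stage 2, since $(\D_2, \I')$ and $(\D_2, \I_2)$ share the same base DAG and are I-Markov equivalent, apply Theorem \ref{theorem:seqcovedge2} per $k \neq 1$ to reshape $\D_{2,k}^{\I'}$ into $\D_{2,k}^{\I_2}$ by covered reversals, fulfilling condition 2. Concatenating the two sequences yields the full sequence of the statement.

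The main obstacle I anticipate is the construction and validity of the intermediate intervention $\I'$ in Stage 1: its induced parent sets must be chosen so that, for every $k$, the augmented DAG $\D_{2,k}^{\I'}$ is acyclic (i.e., $\I'$ is valid for $\D_2$) and shares skeleton and v-structures with $\D_{1,k}^{\I_1}$. This will require tracking how the v-structures at target nodes in $\D_{1,k}^{\I_1}$ can be reproduced inside $\D_{2,k}^{\I'}$, given that $\D_2$ generally orients some non-target edges differently from $\D_1$; a natural approach is to define $P^{\prime (k)}_j$ as the image of $\pa_{\D_{1,k}^{\I_1}}(j)$ under the Markov-equivalence edge reversals relating $\D_1$ and $\D_2$, and to verify acyclicity step by step using the fact that each simultaneously covered reversal does not introduce new directed paths through targets. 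Once this bridge intervention is in place, the simultaneously covered reversal path supplied by Theorem \ref{theorem:seqcovedge} and the covered reversals within each $\D_{2,k}^{\I'}$ supplied by Theorem \ref{theorem:seqcovedge2} chain together to give the required sequence.
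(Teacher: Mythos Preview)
Your overall strategy is correct and matches the paper's: combine Theorem~\ref{theorem:seqcovedge} (DAG stage) and Theorem~\ref{theorem:seqcovedge2} (intervention stage), using Theorem~\ref{theorem:imarkeq:skeleta3} to guarantee Markov equivalence of the $\I$-DAGs and hence $\T_1=\T_2$. The ``if'' direction you sketch is exactly what the paper packages as Lemmas~\ref{lemma:coveredsimtargets1} and~\ref{lemma:coveredsimtargets2}.

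The main difference is the \emph{order} of the two stages in the ``only if'' direction. The paper does the intervention stage first: it runs Find-Edge (Algorithm~\ref{alg:findedge}) on the pairs $\D_{1,k}^{\I_1},\D_{1,k}^{\I_2}$ for $k\neq 1$ to transform $\I_1$ into $\I_2$ while holding $\D_1$ fixed, and only then runs Find-Edge on $\D_1,\D_2$ for the simultaneously covered reversals. Your proposal reverses this. Both orderings work, but the paper's order sidesteps the construction you anticipate as the ``main obstacle'': once the intervention has been pushed to $\I_2$, the state is $(\D_1,\I_2)$, and the second stage is a direct invocation of the proof of Theorem~\ref{theorem:seqcovedge} with $\I=\I_2$, so no bespoke intermediate $\I'$ is ever built.

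Your concern about building $\I'$ is over-engineered. In your ordering the natural choice is simply $\I'=\I_1$: simultaneously covered reversals in $\D_1$ leave the targets and induced parent sets untouched (for $k$ with $\{u,v\}\subseteq T^{(k)}$ the $\I$-DAG does not change at all, and otherwise the reversal is an ordinary covered reversal in $\D_{1,k}^{\I_1}$). The acyclicity of the intermediate $\I$-DAGs is already guaranteed step by step by condition~2 of Theorem~\ref{theorem:seqcovedge}, and the theorem's hypothesis that $\I_1,\I_2$ are valid for \emph{both} $\D_1,\D_2$ disposes of the cross-validity question you raise. There is no need to define $P'^{(k)}_j$ as an ``image under edge reversals'' or to track v-structures at target nodes by hand.
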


As before, by Theorem \ref{theorem:imarkeq:skeleta3}, two distinct I-Markov equivalent pairs $(\D_1, \I_1)$, $(\D_2, \I_2)$ have the same set of $\I$-edges, meaning that $\T_1 = \T_2$ and the targets are identifiable from the data. $\I_1, \I_2$ thus differ for their induced parent sets, and in particular for the reversal of covered edges connecting two target nodes. 
Note in addition that the graphical criterion of Theorem \ref{theorem:imarkeq:skeleta3} is equivalent to the one of Theorem \ref{theorem:imarkeq:skeleta}. As a consequence, any two non-identifiable pairs $(\D_1, \I_1)$, $(\D_2, \I_2)$ imply the same set of conditional independencies and invariances via the I-Markov property and in particular the same as if the general interventions were known. The DAG-identifiability limit thus remains the same as for the known intervention case. 

\black

\section{Bayesian Causal Discovery}
\label{sec:causal:discovery:general:interventions}

In this section we introduce a parametric Bayesian framework for the analysis of data collected under general unknown interventions.
In Section \ref{sec:bayesian:causal:discovery:likelihood}
we frame the related causal discovery
problem
under the Bayesian perspective, and specify a likelihood function that integrates data from distinct interventional contexts.
In Section \ref{sec:parameter:priors:elicitation} we then introduce a prior elicitation procedure for the collection of model parameters.
Finally, in Section \ref{sec:priors:model}
we assign prior distributions to DAGs, intervention targets and parent sets, whose posterior inference represents the ultimate goal of our Bayesian methodology.

\subsection{Model Formulation}
\label{sec:bayesian:causal:discovery:likelihood}

Let $\bX = \big(\bX^{(1)},\dots,\bX^{(K)}\big)^\top$ be an $(n,q)$ data matrix, such that $\bX^{(k)}$ is the $(n_k, q)$ dataset containing samples collected under the \textit{k}-th experimental setting. As in the previous sections, we assume $\bX^{(1)}$ being an observational dataset, so that $T^{(1)} = P^{(1)} = \O$ and $\D_1 = \D$.
Under the Bayesian setting, learning the pair $(\D, \I)$
can be framed as a model selection problem which requires the computation of the posterior distribution
\begin{align}
	\label{eq:postdistr}
	\begin{split}
		p\left(\D, \I \g \bX \right) 
		&  \propto p\left(\bX \g \D, \I \right)p(\D, \I). 
	\end{split}
\end{align}
We refer to $p(\D, \I)$ as the \textit{model prior} and to $p(\bX \g \D, \I)$ as the \textit{model evidence} or \textit{marginal likelihood}. Assuming a parametric family of distributions for the
observables, we can write the marginal likelihood as
\begin{equation}
	\label{eq:marglikint}
	p\big(\bX \g \D, \I\big) = \int p\big(\bX \g \Theta^{(\mathcal{K})}, \D, \I\big) \, p\big(\Theta^{(\K)} \g \D, \I\big) \,d\Theta^{(\K)},
\end{equation}
where $\Theta^{(\K)}=\{\Theta^{(1)}, \dots, \Theta^{(K)}\}$ is the multi-set of parameters associated with the pre- and post-intervention distributions implied by the pair $(\D, \I)$. Conditionally on $\Theta^{(\mathcal{K})}$, the observations in $\bX$ are independent and, within each block $\bX^{(k)}$, identically distributed, so that the likelihood function can be written as
\begin{equation}
	\label{eq:likelihood}
	p\big(\bX \g \Theta^{(\mathcal{K})}, \D, \I\big) = \prod_{k=1}^K p\big(\boldsymbol{X}^{(k)} \g \Theta^{(k)}, \D,I^{(k)}\big),
\end{equation}
where $I^{(k)} = (T^{(k)}, P^{(k)})$ and $\Theta^{(k)}$ is the set of parameters of the distribution of the $k$-th experimental setting. From Definition \ref{def:imarkovprop}, the I-Markov property implies that: i) the \textit{sampling distribution} of the $i$-th observation in the $k$-th block factorises according to the post-intervention DAG $\D_k$; ii) a set of invariances hold, such that the post-intervention local parameters indexing the non-intervened nodes are equal to the corresponding pre-intervention parameters. From these considerations, it follows that
\begin{align}
	\label{eq:likelihoodfinal}
	\begin{split}
		p\big(\bX \g \Theta^{(\mathcal{K})}, \D, \I\big) & =
		\prod_{j=1}^q \Bigg\{ p\big(\boldsymbol{X}_{\cdot j}^{\mathcal{A}(j)} \g \boldsymbol{X}_{\cdot\pa_{\D}(j)}^{\mathcal{A}(j)}, \Theta_j^{(1)}, \D \big) \\ & \qquad \quad \prod_{k: j \in T^{(k)}} p\big(\boldsymbol{X}_{\cdot j}^{(k)} \g \boldsymbol{X}_{\cdot\pa_{\D_k}(j)}^{(k)}, \Theta_j^{(k)}, \D_k \big) \Bigg\},
	\end{split}
\end{align}
where $\Theta_j^{(k)}$ is the $j$-th element of $\Theta^{(k)}$, and we denote the conditioning on $(\D, I^{(k)})$ through the modified DAG $\D_k$. Moreover, $\mathcal{A}(j) := \{k: j \notin T^{(k)}\}$ is the collection of interventional settings under which node $j$ has not been intervened upon, and 
$\bX_{.B}^{\A(j)}$ is the sub-matrix of $\bX$ with columns indexed by $B\subset [q]$ and blocks corresponding to $\A(j)\subset [K]$.
To obtain \eqref{eq:postdistr} we thus need to specify:
\begin{enumerate}
	\item A \textit{statistical model} $p\left(\bX \g \Theta^{(\mathcal{K})}, \D, \I \right)$, in the form of a distribution for the data in Equation \eqref{eq:likelihoodfinal}; \black
	\item A \textit{model prior} $p(\D, \I)$, describing our prior knowledge on DAG $\D$ and on the effects that the interventions imply on its structure; 
	\item A \textit{parameter prior} $p(\Theta^{(\mathcal{K})} \g \D, \I)$ leading, once combined with the likelihood \eqref{eq:likelihoodfinal}, to the marginal likelihood \eqref{eq:marglikint}.
\end{enumerate}
The joint specification of a statistical model and associated parameter prior deserves particular attention and is the main subject of the next section.

\subsection{Parameter Prior Elicitation}
\label{sec:parameter:priors:elicitation}

Under common distributional assumptions (e.g.~Gaussian), it is not possible to distinguish between DAGs belonging to the same I-Markov equivalence class \citep{Hauser:Buehlmann:2012}. In a Bayesian model-selection framework, this feature translates into the compatibility requirement that I-Markov equivalent DAGs are assigned equal marginal likelihoods, a property usually referred to as \textit{score equivalence}. In this section we show how the procedure proposed by \citet{Geiger:Heckerman:2002} for DAG model selection from observational data can be extended to our interventional setting.
\black
Their methodology
relies on a set of assumptions (Assumptions 1-5 in the original paper) that translate into our setting as follows:

\begin{itemize}
	\item[\textbf{A1}] \textit{(Complete model equivalence and regularity)}: Let $\mathcal{C}$ be the collection of complete DAGs on the set of nodes $V$, each implying a statistical model $p(\bx\g\Theta_{C}, C)$, for $C\in\mathcal{C}$. For any two complete DAGs $C_i, C_j \in \mathcal{C}, i \neq j$, we have that $p(\bx \g \Theta_{C_i}, C_i) = p(\bx \g \Theta_{C_j}, C_j)$. \black Moreover, there exists a one-to-one mapping $\kappa_{i,j}$ between the DAG-parameters $\Theta_{C_i}, \Theta_{C_j}$ such that $\Theta_{C_j} = \kappa_{i,j}(\Theta_{C_i})$ and the Jacobian $|\partial \Theta_{C_i}/\partial \Theta_{C_j}|$ exists and is nonzero for all values of $\Theta_{C_i}$;
	\item[\textbf{A2}] \textit{(Likelihood and prior modularity)}: For any two DAGs $\D_i, \D_j$ and any node $l \in V$ such that $\pa_{\D_i}(l) = \pa_{\D_j}(l)$, we have that, for any collection of targets and induced parent sets $\I$, \\ $$p\big(\bx_l^{(k)} \g \bx_{\pa_{\D_{i,k}}(l)}^{(k)}, \Theta_l^{(k)}, \D_{i,k}\big) = p\big(\bx_l^{(k)} \g \bx_{\pa_{\D{j,k}}(l)}^{(k)}, \Theta_l^{(k)}, \D_{j,k}\big),$$ $$p\big(\Theta_l^{(k)}\g\D_{i,k}\big) = p\big(\Theta_l^{(k)}\g\D_{j,k}\big);$$
	\item[\textbf{A3}] \textit{(Global parameter independence)}: For every DAG $\D$ and any collection of targets and induced parent sets $\I$, $$p\big(\Theta^{(\mathcal{K})}\g\D, \I\big) = \prod_{j=1}^q\left\{ p\big(\Theta_j^{(1)}\g\D\big)\prod_{k:j \in T^{(k)}} p\big(\Theta_j^{(k)}\g\D_k\big)\right\}.$$
\end{itemize}
We refer the reader to \citet{Geiger:Heckerman:2002} for a detailed discussion of these assumptions in the observational setting. Most importantly for our purposes, given Assumption \textbf{A3}, we can specify priors for the parameters indexing each term in \eqref{eq:likelihoodfinal} independently. The following procedure is therefore applied to each node $j \in V$ and experimental context $k \in [K]$:
\begin{itemize}
	\item[\textbf{i)}] Identify a complete DAG $C_{j,k}$ such that $\pa_{C_{j,k}}(j) = \pa_{\D_k}(j)$;
	\item[\textbf{ii)}] Assign a prior to $\Theta_{C_{j,k}}$, the parameter of the selected complete DAG model $C_{j,k}$;
	\item[\textbf{iii)}] Assign to $\Theta_j^{(k)}$ the same prior assigned to $\Theta_{j,C_{j,k}}$ in step \textbf{ii)}, where $\Theta_{j,C_{j,k}} \in \Theta_{C_{j,k}}$ is the parameter indexing the $j$-th node.
\end{itemize}
Accordingly, because of Assumption \textbf{A1}, the proposed procedure allows to specify a parameter prior for any pair $(\D, \I)$ from a single  parameter prior on a complete DAG model $C$. Therefore, the marginal likelihood $p\left(\bX \g \D, \I\right) $ can be computed as in the following proposition.
\begin{proposition}
	\label{prop:marglik}
	Given any complete DAG $C$ and a data matrix $\bX$ collecting observations from $K$ different experimental settings, for any valid pair $(\D, \I)$ Assumptions \textbf{A1}-\textbf{A3} imply
	\begin{align}
		\label{eq:marg:like:general}
		\begin{split}
			p\left(\bX \g \D, \I \right) & =
			\prod_{j=1}^q \left\{ \frac{p\left(\boldsymbol{X}_{\cdot \fa_{\D}(j)}^{\mathcal{A}(j)} \g C \right)}{p\left(\boldsymbol{X}_{\cdot \pa_{\D}(j)}^{\mathcal{A}(j)} \g C \right)} \prod_{k: j \in T^{(k)}} \frac{p\left(\boldsymbol{X}_{\cdot \fa_{\D_k}(j)}^{(k)} \g C \right)}{p\left(\boldsymbol{X}_{\cdot \pa_{\D_k}(j)}^{(k)} \g C \right)} \right\},
		\end{split}
	\end{align}
	where $p\big(\boldsymbol{X}_{\cdot B}^{\mathcal{A}(j)} \g C \big)$ is the marginal data distribution computed under any complete DAG $C$.
\end{proposition}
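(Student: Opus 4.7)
The plan is to unfold the definition of the marginal likelihood in Equation (\ref{eq:marglikint}), substitute the factorized likelihood (\ref{eq:likelihoodfinal}), and then use Assumption \textbf{A3} (global parameter independence) to split the joint prior into a product of node- and setting-level pieces. Since each term $p(\bX^{\A(j)}_{\cdot j}\g\bX^{\A(j)}_{\cdot\pa_\D(j)},\Theta_j^{(1)},\D)$ depends only on $\Theta_j^{(1)}$, and each term $p(\bX^{(k)}_{\cdot j}\g\bX^{(k)}_{\cdot\pa_{\D_k}(j)},\Theta_j^{(k)},\D_k)$ depends only on $\Theta_j^{(k)}$, the integral over $\Theta^{(\K)}$ will decouple into independent integrals over each $\Theta_j^{(k)}$. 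This reduces the whole problem to evaluating, for a generic node $j$ and a generic setting indexed by $k$, an integral of the form $\int p(\bx_j\g \bx_{\pa_{\D_k}(j)},\Theta_j^{(k)},\D_k)\, p(\Theta_j^{(k)}\g \D_k)\, d\Theta_j^{(k)}$ — that is, a node-level marginal conditional density under $\D_k$.

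To evaluate this integral, I would invoke the Geiger--Heckerman elicitation procedure spelled out in steps \textbf{i)}--\textbf{iii)}. For each $(j,k)$, pick a complete DAG $C_{j,k}$ in which $\pa_{C_{j,k}}(j)=\pa_{\D_k}(j)$; such a DAG exists because $\pa_{\D_k}(j)$ is an arbitrary subset of $V\setminus\{j\}$. By likelihood modularity (\textbf{A2}), the conditional density for node $j$ is identical when computed under $\D_k$ or under $C_{j,k}$, and by prior modularity (\textbf{A2}) the induced prior on $\Theta_j^{(k)}$ is the one it would receive as the $j$-th component of $\Theta_{C_{j,k}}$. Therefore the node-level integral equals the posterior predictive conditional under $C_{j,k}$, which by standard identity reads $p(\bx_{\fa_{\D_k}(j)}\g C_{j,k}) / p(\bx_{\pa_{\D_k}(j)}\g C_{j,k})$.

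The final step is to replace each complete DAG $C_{j,k}$ with a single fixed reference complete DAG $C$. This is where Assumption \textbf{A1} does the heavy lifting: complete model equivalence gives $p(\bx\g\Theta_{C_{j,k}},C_{j,k})=p(\bx\g\Theta_{C},C)$ whenever $\Theta_{C_{j,k}}=\kappa(\Theta_C)$, and the change-of-variables with nonzero Jacobian transfers the reference prior on $\Theta_C$ to the prior on $\Theta_{C_{j,k}}$ used in step \textbf{ii)}. Consequently, marginalising the joint density over $\Theta$ yields the same marginal $p(\bx\g C_{j,k}) = p(\bx\g C)$, and therefore the same marginals $p(\bx_B\g C_{j,k})=p(\bx_B\g C)$ for every subset $B\subseteq V$. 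Multiplying the resulting node-level ratios over $j\in[q]$ and over $k$ with $j\in T^{(k)}$ (plus the observational contribution through $\A(j)$) reproduces exactly the formula (\ref{eq:marg:like:general}).

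The main obstacle will be the last step: making the invariance $p(\bx_B\g C_{j,k})=p(\bx_B\g C)$ fully rigorous. It requires care because one must both apply the likelihood identity from \textbf{A1} and carry out the change of variables from $\Theta_C$ to $\Theta_{C_{j,k}}$ so that the induced prior on $\Theta_{C_{j,k}}$ in the Geiger--Heckerman construction coincides with the push-forward of the reference prior on $\Theta_C$. Once this regularity bookkeeping is handled, the rest of the argument is an algebraic rearrangement of the product using the factorization granted by \textbf{A3}.
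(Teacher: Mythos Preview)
Your proposal is correct and mirrors the paper's own proof almost step for step: factorize the marginal likelihood via \textbf{A3}, decouple into node-level integrals, apply \textbf{A2} to pass from $\D_k$ to the complete DAG $C_{j,k}$, and then invoke \textbf{A1} to replace each $C_{j,k}$ by a single reference $C$ before writing the conditional as a ratio of marginals. The only point the paper treats more lightly than you do is the change-of-variables justification for $p(\bx_B\g C_{j,k})=p(\bx_B\g C)$, which it simply attributes to complete model equivalence.
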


Notice that the resulting marginal likelihood provides a \emph{decomposable} score for the pair $(\D,\I)$, since it corresponds to a product of $q$ terms each involving a node $j$ and its parents $\pa_{\D_k}(j)$ in each DAG $\D_k$ only.
Importantly, it also guarantees score equivalence for I-Markov equivalent pairs $(\D,\I)$.
\vspace{0.2cm}

\begin{theorem}[Score equivalence]
	\label{thm:scoreequivalence}
	Let $\D_1, \D_2$ be two DAGs and $\I_1, \I_2$ two collections of targets and induced parent sets defining a valid general intervention for $\D_1, \D_2$ respectively. If $(\D_1,\I_1)$ and $(\D_2,\I_2)$ are I-Markov equivalent, then Assumptions A1-A3 imply 
	\begin{equation}
		p(\bX\g\D_1, \I_1) = p(\bX\g\D_2, \I_2).
	\end{equation}
\end{theorem}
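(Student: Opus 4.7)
The plan is to reduce to a single edge reversal using the graphical characterization in Theorem \ref{theorem:seqcovedge3} and then verify that each such reversal preserves the decomposable score of Proposition \ref{prop:marglik}. By that theorem, I-Markov equivalence of $(\D_1,\I_1)$ and $(\D_2,\I_2)$ is witnessed by a finite sequence of reversals of two types: (a) a simultaneously covered edge in $\D_1$, or (b) a covered edge in some $\D_{1,k}^{\I_1}$ with $k\neq 1$. Since each intermediate pair again lies in the same I-Markov equivalence class, a straightforward induction on the length of the sequence reduces the problem to showing that $p(\bX\g\D,\I)$ is invariant under either elementary move.

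The core tool is the classical covered edge identity: if $i\to j$ is covered in a DAG, so $\pa(j)=\{i\}\cup\pa(i)$, then after reversal one has $\pa'(j)=\pa(i)$, $\pa'(i)=\{j\}\cup\pa(i)$, and hence $\fa(j)=\fa'(i)=\{i,j\}\cup\pa(i)$, so that
\[
\frac{p(X_{\fa(i)})}{p(X_{\pa(i)})}\cdot\frac{p(X_{\fa(j)})}{p(X_{\pa(j)})}
=\frac{p(X_{\{i,j\}\cup\pa(i)})}{p(X_{\pa(i)})}
=\frac{p(X_{\fa'(i)})}{p(X_{\pa'(i)})}\cdot\frac{p(X_{\fa'(j)})}{p(X_{\pa'(j)})},
\]
provided both sides are evaluated on the same data matrix. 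The remaining work is to identify which factors in \eqref{eq:marg:like:general} actually change under each elementary move and to check that the relevant data blocks agree.

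Case (b) is immediate. When both endpoints lie in $T^{(k)}$, the modification alters only the induced parent sets $P^{(k)}_i$ and $P^{(k)}_j$, leaving $\D_1$ and every other $\D_{1,l}$ unchanged. Only the two interventional factors for $i$ and $j$ at experiment $k$ are touched, and both are computed on the common block $\bX^{(k)}$; moreover covering in $\D_{1,k}^{\I_1}$ reduces to covering in $\D_{1,k}$ since $\zeta_k$ appears in both parent sets and cancels, so the identity above applies directly.

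Case (a) is the main subtlety, and I expect the simultaneously covered condition to be exactly what makes it work. Reversing $i\to j$ in $\D_1$ leaves every interventional factor in \eqref{eq:marg:like:general} invariant, because those factors use $\pa_{\D_k}(\cdot)=P^{(k)}_{\cdot}$, a function of $\I$ alone; only the observational factors for $i$ and $j$ are affected. The covered edge identity can then be applied to these two factors provided $\A(i)=\A(j)$, and this is precisely what the simultaneously covered hypothesis delivers. Indeed, for any $k$, if $i\to j$ is covered in $\D_{1,k}^{\I_1}$, the equality $\{i\}\cup\pa_{\D_{1,k}^{\I_1}}(i)=\pa_{\D_{1,k}^{\I_1}}(j)$ forces the $\I$-vertex $\zeta_k$ to belong to both parent sets or to neither, i.e. $i\in T^{(k)}\iff j\in T^{(k)}$; the alternative clause $\{i,j\}\subseteq T^{(k)}$ yields the same biconditional trivially. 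Hence $\A(i)=\A(j)$, the two observational factors for $i$ and $j$ are evaluated on the identical block $\bX^{\A(i)}$, and a single application of the covered edge identity closes the argument.
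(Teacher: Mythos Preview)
Your proposal is correct and follows essentially the same route as the paper's proof: reduce via Theorem \ref{theorem:seqcovedge3} to a single move, then in case (a) show that the simultaneously covered condition forces $\A(i)=\A(j)$ so the covered edge identity applies to the two observational factors, and in case (b) apply the covered edge identity to the two interventional factors at context $k$. One small point worth tightening in case (b): you assert that both endpoints lie in $T^{(k)}$, but your stated justification (``$\zeta_k$ appears in both parent sets'') is the conclusion itself. The actual reason is that the covering condition in $\D_{1,k}^{\I_1}$ forces $\zeta_k$ to lie in both parent sets or in neither, while a type (b) move by construction alters $\I$ and not $\D$, so at least one (hence both) endpoints must be targets; the ``neither'' case would leave the score unchanged anyway. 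The paper's proof leaves this implicit as well, so your argument is at the same level of detail.
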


\subsection{Prior on $(\D,\I)$}
\label{sec:priors:model}

Recall that $\I = (\T,\mathcal{P})$, where $\T=\{T^{(k)}\}_{k=1}^K$ and $\mathcal{P}=\{P^{(k)}\}_{k=1}^K$.
For convenience, we represent the (possibly) different parent sets induced by the $K$ experimental settings, $\mathcal{P}$, through $K$ $(q,q)$ matrices $\bP^{(1)},\dots,\bP^{(K)}$ such that for any $(l,j)$-element $\bP_{lj}^{(k)}$ we have
$\bP_{lj}^{(k)} = 1$ if $l\rightarrow j \in \D_k$ and $j \in T^{(k)}$, $0$ otherwise.
%
Conditionally on DAG $\D$ and target $T^{(k)}$, we assume independently across $k \in \{2, \dots, K\}$,
\ben
\label{eq:prior:parent:sets}
\begin{aligned}
	p\big(\bP^{(k)}\g \bphi^{(k)}, T^{(k)}, \D\big)
	\,\,&=\,\,
	\left\{
	\prod_{j=1}^q
	\prod_{j \in T^{(k)}}
	\textnormal{pBern}
	\big(\bP_{lj}^{(k)}\g \phi_j^{(k)}\big)
	\right\}
	\mathbbm{1}
	\left\{
	\D_k \textnormal{ is a DAG}
	\right\} \\
	\phi_j^{(k)}
	\,\,&\stackrel{\textnormal{iid}}{\sim}\,\,
	\textnormal{Beta}\big(a_{\phi},b_{\phi}\big), \quad j \in T^{(k)},
\end{aligned}
\een
where $\bphi^{(k)}=\big\{\phi_j^{(k)}\big\}_{j \in T^{(k)}}$.
The hierarchical prior \eqref{eq:prior:parent:sets} leads to the marginal (integrated w.r.t.~$\bphi^{(k)}$) prior on $\bP^{(k)}$
\be
p\big(\bP^{(k)}\g T^{(k)}, \D\big)
\,=\,
\left\{\prod_{j\in T^{(k)}}
\frac
{\B\left(a_{\phi} + |\bP_{.j}^{(k)}|,b_{\phi}+q-|\bP_{.j}^{(k)}|\right)}
{\B\big(a_{\phi},b_{\phi}\big)}
\right\}
\mathbbm{1}
\left\{
\D_k \textnormal{ is a DAG}\right\},
\ee
where $|\bP_{.j}^{(k)}|=\sum_{l=1}^q\bP_{lj}^{(k)}$ and $\B(\cdot)$ denotes the Beta function.

\vspace{0.3cm}

\noindent
Now consider $T^{(k)}$, the intervention target associated with the experimental setting $k$.
We represent $T^{(k)}\subseteq[q]$ through a $(q,1)$ vector $\bh_k$ whose $j$-th element $h_k(j)$ is equal to $1$ if $j \in T^{(k)}$, $0$ otherwise.
We assume, independently across $k \in \{2, \dots, K\}$,
\ben
\label{eq:prior:target}
\begin{aligned}
	p\left(\bh_k\g \eta_k\right)
	\,\,&=\,\,
	\prod_{j=1}^q
	\textnormal{pBern}
	\left(h_k(j)\g \eta_k\right)
	\\
	\eta_k
	\,\,&\sim\,\,
	\textnormal{Beta}\left(a_{\eta},b_{\eta}\right).
\end{aligned}
\een
Equation \eqref{eq:prior:target} leads to the integrated prior on $T^{(k)}$
\be
p\big(T^{(k)}\big)
\,=\, p(\bh_k)
\,=\,
\frac
{\B\big(a_{\eta} + |T^{(k)}|,b_{\eta}+q-|T^{(k)}|\big)}
{\B\big(a_{\eta},b_{\eta}\big)},
\ee
where $|T^{(k)}|=\sum_{j=1}^q h_k(j)$ is the number of intervened nodes in context $k$.

\vspace{0.3cm}

\noindent
Finally, let $\mathcal{S}_q$ be the set of all DAGs with $q$ nodes. We assign a prior to $\D\in\mathcal{S}_q$ through a collection of Bernoulli random variables indicating the absence/presence of links in the graph.
Specifically, let $\bS^{\D}$ be the adjacency matrix of the skeleton of $\D$, and $\bS_{lj}^{\D}$ its $(l,j)$-element.
We assign
\ben
\label{eq:prior:dag}
\begin{aligned}
	p\big(\bS^{\D} \g \pi\big) \,\,&=\,\, \prod_{l<j}\text{pBern}\big(\bS_{lj}^{\D}\g\pi\big) \\
	\pi \,\,&\sim\,	\, \textnormal{Beta}(a_{\D},b_{\D}),
\end{aligned}
\een
leading to
\be
p(\bS^{\D}) =
\frac
{\B\big(a_{\D} + |\bS^{\D}|,b_{\D}+q(q-1)/2-|\bS^{\D}|\big)}
{\B\big(a_{\D},b_{\D}\big)},
\ee
where $|\bS^{\D}|$ is the number of edges in $\D$ (equivalently in its skeleton) and $q(q-1)/2$ is the maximum number of edges in a DAG on $q$ nodes.
Finally, we set
$p(\D)\propto p(\bS^{\D})$ for each $\D\in\mathcal{S}_q$.

\vspace{0.5cm}

\black

\section{MCMC Scheme and Posterior Inference}
\label{sec:4:mcmc}

In this section we describe the Markov Chain Monte Carlo (MCMC) strategy that we adopt to approximate the posterior distribution
\eqref{eq:postdistr}. Specifically, Section \ref{sec:MCMC:sampling:scheme} introduces the random scan Metropolis-Hastings algorithm which is at the basis of our sampler, while Section \ref{sec:posterior:inference} illustrates how the MCMC output can be used to provide estimates of the underlying causal DAG structure and the effects of the general interventions.

\subsection{Sampling Scheme}
\label{sec:MCMC:sampling:scheme}

Our MCMC algorithm has the structure of a random-scan component-wise Metropolis-Hastings \citep[Chapter 1]{Brooks:et:al:book:2011}, in which the
parameter of interest is partitioned into $K$ components, each indexing one of the $K$ experimental settings. Specifically, the first component corresponds to the DAG $\D$, while the remaining ones to the collection of unknown targets and induced parent sets $I^{(k)} = (T^{(k)}, P^{(k)})$ for $k \in \{2, \dots, K\}$. 
Sampling from each component occurs in a random order through standard proposal and acceptance/rejection steps as in a Metropolis-Hastings sampler. A high-level illustration of the scheme is provided in Algorithm \ref{alg:mhwithingibbs}.


\begin{algorithm}{
		\SetAlgoLined
		\vspace{0.1cm}
		\KwInput{Data matrix $\bX$, number of MCMC iterations $S$, initial values for DAG, targets and induced parent sets $\D^0, \T^0, \mathcal{P}^0$}
		\KwOutput{$S$ samples from $p(\D, \T, \mathcal{P} \g \bX)$}
		Construct $\big\{{\D_k^{0}}^\I\big\}_{k=1}^K$\;
		Set $\I^{0} = \left(\T^{0}, \mathcal{P}^{0}\right)$\;
		\For{s in 1:S}{
			Sample $\boldsymbol \pi$, a permutation vector of length $K$\;
			Set $\{\D^s, \I^s\} = \{\D^{s-1}, \I^{s-1}\}$\;
			\For{$k$ in 1:K}{
				\If{$\boldsymbol{\pi}_k = 1$}{
					Construct $\cO_{\D^{s}}$ using Algorithm \ref{alg:validopD}\;
					Propose $\widetilde \D$ by sampling uniformly at random from $\cO_{\D^{s}}$\;
					Set $\D^s = \widetilde \D$ with probability $$\begin{aligned}
						\alpha_{\widetilde\D}  = 
						\textnormal{min}\Bigg\{1; &
						\frac{p\big(\bX\g\widetilde\D, \{I_s^{(j)}\}_{j \neq \boldsymbol{\pi}_k}\big)}
						{p\big(\bX\g\D^{s},\{I_s^{(j)}\}_{j \neq \boldsymbol{\pi}_k}\big)}
						\cdot
						\frac{p(\widetilde\D)}{p(\D^{s})}
						\cdot\frac{q(\D^{s}\g\widetilde   \D)}{q(\widetilde\D\g\D^{s})}\Bigg\}
					\end{aligned}$$\
				}
				\Else{
					Construct $\cO_{{\D_{\boldsymbol \pi_k}^{s\I}}}$ using Algorithm \ref{alg:validopDkI}\;
					Propose $\widetilde \D_{\boldsymbol{\pi}_k}^\I$ by sampling uniformly at random from $\cO_{{\D_{\boldsymbol{\pi}_k}^{s \I}}}$\;
					Recover $\widetilde I^{(\boldsymbol{\pi}_k)} = (\widetilde T^{({\boldsymbol{\pi}_k})}, \widetilde P^{({\boldsymbol{\pi}_k})})$ from  $(\widetilde \D_{\boldsymbol{\pi}_k}^\I, \D^s)$\;
					Set $I^{(\boldsymbol{\pi}_k)}_s =  \widetilde I^{(\boldsymbol{\pi}_k)}$ with probability
					$$\begin{aligned}
						\alpha_{\widetilde e_{\boldsymbol{\pi}_k}} = 
						\textnormal{min}\Bigg\{1; & 
						\frac{p\big(\bX\g \D^s, \{I_s^{(j)}\}_{j \neq \boldsymbol{\pi}_k}, \widetilde I^{(\boldsymbol \pi_{k})}\big)}
						{p\big(\bX\g\D^s,  \{I_s^{(j)}\}_{j \neq \boldsymbol \pi_k}, I_s^{(\boldsymbol \pi_{k})}\big)}
						\cdot
						\frac{p\big(\widetilde I^{(\boldsymbol{\pi}_k)}\big)}{p\big(I_s^{(\boldsymbol{\pi}_k)}\big)}
						\cdot\frac{q\big({\D_k^{s}}^\I\g\widetilde\D_{k}^\I\big)}{q\big(\widetilde\D_{k}^\I\g{\D_k^{s}}^\I\big)}\Bigg\}
					\end{aligned}$$\
				}
			}
		}
		Recover $\{\T^s, \mathcal{P}^s\}_{s=1}^S$ from $\{\I^s\}_{s=1}^S$\;
		\Return $\{\D^s, \T^s, \mathcal{P}^s\}_{s=1}^S$;
	}
	\caption{Random-scan MH to sample from $p(\D, \T, \mathcal{P} \g \bX)$}
	\label{alg:mhwithingibbs}
\end{algorithm}

Our main algorithm adopts the equivalent representation of $(\D,\I)$ in terms of $\I$-DAGs $\{\D_k^\I\}_{k=1}^K$. In this way, one can explore the space of possible pairs $(\D, \mathcal{I})$ using a set of simple operators inducing local modifications on DAGs. Specifically, we consider three types of operators: ${Insert}(u,v), {Delete}(u,v)$, and ${Reverse}(u,v)$, corresponding respectively to the insertion, deletion, and reversal of the edge $(u,v)$.
Also notice that the modified graph obtained by applying any of these operators may not be a DAG. Accordingly, we impose to the operators above the following \textit{validity} requirement (\textbf{vr}).
\begin{definition}
	Let $\{\D_k^\I\}_{k=1}^K$ be a sequence of $\I$-DAGs. An operator inducing a sequence of modified $\I$-DAGs $\{\widetilde \D_k^\I\}_{k=1}^K$ is \textnormal{valid} if every graph in $\{\widetilde \D_k^\I\}_{k=1}^K$ is a DAG.
\end{definition}
Let now $\mathcal{O}_{\D}$ be the set of all valid operators on DAG $\D$.
Our proposal distribution draws randomly an operator in $\mathcal{O}_{\D}$, and then apply it to $\D$ to obtain $\widetilde\D$. Accordingly, the (proposal) probability of a transition from $\D$ to $\widetilde\D$ is $q(\widetilde \D\g\D)=1/|\mathcal{O}_{\D}|$, where $|\mathcal{O}_{\D}|$ is the number of elements in $\mathcal{O}_{\D}$.
We use the same proposal scheme for the update of $\D_k^{\I}$.

Notice however that the same operator may imply different modifications when applied to the observational DAG $\D$ or to an $\I$-DAG $\D_k^{\I}$.
In the former case, the implied modification also affects all the $\I$-DAGs; in the latter case, the effect is local and affects only the $\I$-DAG corresponding to the $k$-th experimental setting. 
Accordingly, we need a different construction for the set of operators relative to the observational and experimental components.
Algorithm \ref{alg:validopD} constructs the set $\cO_\D$ simply by considering all possible valid insertions, deletions, and reversals of the edges of the observational DAG.
Differently, Algorithm \ref{alg:validopDkI} includes in $\cO_{\D_k^\I}$ all the operators implying: i) the insertion of an intervention target, ii) the modification of the parent set of a target node and iii) the deletion of an intervention target (provided that the parents of the target in the DAG and in the $\I$-DAG are the same).

\begin{algorithm}{
		\SetAlgoLined
		\vspace{0.1cm}
		\KwInput{A collection of $\I$-DAGs $\{\D_k^\I\}_{k=1}^K$}
		\KwOutput{A set of valid operators $\cO_\D$}
		Set $\cO_\D = \O$\;
		Construct $E_I = \{(u,v): \bA_{uv} = \bA_{vu} = 0\}$\;
		Construct $E_D = \{(u,v): \bA_{uv} = 1\}$\;
		\For{$e \in E_D$}{
			Add ${Delete}(e)$ to $\cO_\D$\;
			\lIf{${Reverse}(e)$ satisfies \textbf{vr}}{add it to $\cO_\D$}
		}
		\For{$e \in E_I$}{
			\lIf{${Insert}(e)$ satisfies \textbf{vr}}{add it to $\cO_\D$}
		}
		\Return $\cO_\D$;
	}
	\caption{Construction of $\cO_\D$}
	\label{alg:validopD}
\end{algorithm}

\begin{algorithm}{
		\SetAlgoLined
		\vspace{0.1cm}
		\KwInput{A collection of $\I$-DAGs $\{\D_k^\I\}_{k=1}^K$}
		\KwOutput{A set of valid operators $\cO_{\D_k^\I}$}
		Set $\cO_{\D_k^\I} = \O$\; 
		Recover $(T^{(k)}, P^{(k)})$ from $(\D, \D_k^\I)$\;
		\For{$v \notin T^{(k)}$}{
			Add ${Insert}(\zeta_k, v)$ to $\cO_{\D_k^\I}$\;
		}
		\For{$v \in T^{(k)}$}{
			\For{$u \in \text{nd}_{\D_k}(v)$}{
				\If{$u \in \pa_{\D_k}(v)$}{
					Add ${Delete}(u,v)$ to $\cO_{\D_k^\I}$\;
					\If{${Reverse}(u,v)$ satisfies \textbf{vr} and $u \in T^{(k)}$}{Add ${Reverse}(u,v)$ to $\cO_{\D_k^\I}$;}
				}
				\Else{
					Add ${Insert}(u,v)$ to $\cO_{\D_k^\I}$\;
				}
				\lIf{$\pa_{\D_k}(v) = \pa_{\D}(v)$}{add ${Delete}(\zeta_k, v)$ to $\cO_{\D_k^\I}$}
				
			}
		}
		\Return $\cO_{\D_k^\I}$;
	}
	\caption{Construction of $\cO_{\D_k^\I}$}
	\label{alg:validopDkI}
\end{algorithm}

The proposal distributions defined above are of key importance to ensure that the Markov chain implied by the Metropolis-Hastings is reversible, aperiodic and irreducible, so that the MCMC scheme provides an approximation of the posterior distribution, as stated in the following proposition.
\begin{proposition}
	\label{prop:markov:chain}
	The finite Markov chain defined by Algorithm \ref{alg:mhwithingibbs}, \ref{alg:validopD}, and \ref{alg:validopDkI} is reversible, aperiodic, and irreducible. Accordingly, it has $p(\D, \T, \mathcal{P} \g \bX)$ as its unique stationary distribution.
\end{proposition}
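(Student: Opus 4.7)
The plan is to verify the three classical ingredients of the ergodic theorem for finite Markov chains — reversibility, aperiodicity, and irreducibility — and then invoke the standard conclusion that the chain admits $p(\D,\T,\mathcal{P}\g\bX)$ as its unique stationary distribution. Since the sampler has a random-scan component-wise Metropolis--Hastings structure with $K$ components (the DAG $\D$ and the $K-1$ intervention pairs $I^{(k)}$, $k\in\{2,\dots,K\}$), I would treat the three properties in order of increasing difficulty.

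For \emph{reversibility}, I would argue that each individual component update is a Metropolis--Hastings step targeting the correct full conditional. Because the acceptance probabilities $\alpha_{\widetilde\D}$ and $\alpha_{\widetilde e_{\boldsymbol{\pi}_k}}$ are exactly the Hastings ratios for the joint posterior (using the decomposable marginal likelihood from Proposition \ref{prop:marglik} and the priors specified in Section \ref{sec:priors:model}), each component kernel satisfies detailed balance with respect to $p(\D,\T,\mathcal{P}\g\bX)$. Because the outer permutation $\boldsymbol\pi$ is drawn uniformly and independently of the current state, the mixture kernel inherits detailed balance by the standard random-scan argument. For \emph{aperiodicity}, it suffices to observe that for every current state there is a strictly positive probability of rejection (the Hastings ratio is finite and the proposal set $\cO_{\D}$ or $\cO_{\D_k^{\I}}$ is nonempty), hence the self-loop probability at every state is positive.

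The key step is \emph{irreducibility}, i.e.~showing that for any two valid configurations $(\D,\I)$ and $(\D',\I')$ there is a finite sequence of proposals having positive probability connecting them. I would proceed by reducing to a canonical state and arguing in three phases:
\begin{enumerate}
    \item \textbf{Emptying the interventions.} Starting from $(\D,\I)$, repeatedly apply $\I$-DAG proposals from $\cO_{\D_k^{\I}}$ to remove intervention edges. By the construction of Algorithm \ref{alg:validopDkI}, for each $k$ we can first delete every induced parent edge incident to a target node that is not already a parent in $\D$ (or reverse/delete superfluous edges between two target nodes), restore $\pa_{\D_k}(v)=\pa_{\D}(v)$, and then delete the $\I$-edge $\zeta_k\to v$. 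Each intermediate graph is a DAG by the validity requirement \textbf{vr}, so each move is valid and has positive proposal probability. Iterating over $k$ drives $\I$ to $\O$.
    \item \textbf{Traversing DAG space.} With $\I=\O$, the proposal set $\cO_{\D}$ from Algorithm \ref{alg:validopD} consists of all valid edge insertions/deletions/reversals on $\D$. A standard argument (delete every edge to reach the empty DAG, then insert the edges of $\D'$ one by one in a topological order of $\D'$) shows that any two DAGs on $q$ nodes are connected via such operators, with each intermediate graph remaining a DAG.
    \item \textbf{Building up the target interventions.} From $(\D',\O)$, use Algorithm \ref{alg:validopDkI} in reverse: insert the $\I$-edge $\zeta_k \to v$ for each $v\in T'^{(k)}$, then modify the parent set of $v$ in $\D_k^{\I}$ by a sequence of insertions, deletions, and reversals so that $\pa_{\D_k'}(v)=P'^{(k)}_v$. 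Validity of $(\D',\I')$ guarantees that at least one ordering of these moves keeps every intermediate $\I$-DAG acyclic.
\end{enumerate}

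The main obstacle will be in phase (3), and to a lesser extent in phase (1): although the end configurations are valid, intermediate configurations generated by an arbitrary ordering of operators need not be DAGs. I would therefore explicitly exhibit a good ordering — for instance, processing nodes of $\D_k'$ in a topological order of $\D_k'$ and adding/removing incoming edges one at a time — and verify using the acyclicity of the endpoints that each intermediate $\I$-DAG is a DAG, so that \textbf{vr} is preserved throughout. Once irreducibility is established, combining it with reversibility and aperiodicity on the finite state space of valid pairs $(\D,\I)$ yields the claim via the standard convergence theorem for finite Markov chains.
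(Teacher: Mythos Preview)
Your proposal is correct and follows the same overall strategy as the paper: verify reversibility and aperiodicity via standard Metropolis--Hastings arguments, then establish irreducibility by showing connectivity of the proposal chain through a canonical state. The paper, however, takes a shorter path for irreducibility: rather than your three-phase route (empty $\I$ while keeping $\D$, traverse DAG space, rebuild $\I'$), it goes directly to the sequence of \emph{completely empty} $\I$-DAGs $\{\bar{\D}_k^{\I}\}_{k=1}^K$ using edge deletions only. Concretely, one first deletes every edge of $\D$ via Algorithm~\ref{alg:validopD} (each such deletion can only remove edges from the $\D_k^{\I}$, hence preserves acyclicity), then for each $k$ deletes all induced-parent edges of every target in $\D_k^{\I}$, and finally deletes each $\zeta_k\to v$ since now $\pa_{\D_k}(v)=\varnothing=\pa_{\D}(v)$. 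Because every step is a deletion, validity is automatic and no ordering argument is needed.

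The advantage of the paper's route is precisely that it sidesteps the obstacle you correctly flag in phases~(1) and~(3): restoring $\pa_{\D_k}(v)=\pa_{\D}(v)$ while $\D$ is nonempty requires insertions, and you must then argue that some processing order keeps every intermediate $\I$-DAG acyclic. This \emph{can} be done (for phase~(1), first delete from every target $v$ all parents in $P_v^{(k)}\setminus\pa_{\D}(v)$, after which the current $\D_k$ is a subgraph of $\D$ and every subsequent insertion of an edge of $\D$ is valid), but it is extra work that the deletions-only route avoids entirely. Your argument is therefore sound but slightly more laborious than necessary.
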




\subsection{Posterior Inference}
\label{sec:posterior:inference}

Output of Algorithm \ref{alg:mhwithingibbs} consists of a sample of size $S$ from the posterior distribution $p(\D, \T, \mathcal{P} \g \bX)$.
This MCMC output can be used to obtain summaries of specific features of the posterior distribution, such as DAG structures, both corresponding to the observational distribution of the variables, or a post-intervention distribution (represented by a modified DAG), as well as identifying the targets and parent sets induced by the interventions.

Point estimates of a DAG structure can be recovered through a Maximum A Posteriori (MAP) DAG estimate, corresponding to the DAG with the highest posterior probability, or based on the so-called Median Probability Model (MPM) originally introduced by
\citet{Barbieri:Berger:2004} in a linear regression setting. In this context, optimal properties of the MPM from a predictive viewpoint were also established by the authors.
To obtain an MPM-based estimate of a DAG we need to compute first a collection of marginal Posterior Probabilities of edge Inclusion (PPIs) for each possible directed link $(u,v)$ in any DAG $\D_k$.
Each corresponds to the $(u,v)$-element of a $(q,q)$ matrix $\boldsymbol{J}^{(k)}$,
\ben
\label{eq:post:probs:edge:inclusion}
\boldsymbol J_{uv}^{(k)} = \widehat p(u \to v \in \D_k \g \bX) = \frac{1}{S}\sum_{s=1}^S \mathbbm{1}\{u \to v \in \D_k^{s}\},
\een
where $\D_k^s$ is the modified DAG of context $k$ visited at iteration $s$.
When $k=1$ the above matrix collects the PPIs relative to $\D$, the DAG indexing the observational distribution.
An MPM DAG estimate, $\widehat\D_k$, for each $k\in[K]$, is finally obtained by including those edges whose PPIs is greater than $0.5$.

Now consider the intervention targets $T^{(1)},\dots,T^{(K)}$.
We can recover a marginal posterior probability of inclusion for a node $j\in[q]$ in the target $T^{(k)}$, $k \in \{2, \dots, K\}$, as
\ben
\label{eq:post:probs:targets}
\boldsymbol T^{(k)}_j = \widehat p (j \in T^{(k)}) = \frac{1}{S}\sum_{s=1}^S \mathbbm{1}\{j \in T^{(k)}_s\},
\een
while by definition $\boldsymbol T^{(1)}_j=0$ for each $j$.
The resulting collection of probabilities is organized in a $(q,K)$ matrix $\bT$
with $(k,j)$-element corresponding to $\boldsymbol{T}^{(k)}_j$.
As a point summary of the posterior distribution of $T^{(k)}$, we again consider a median-probability based estimate $\widehat{\boldsymbol{T}}^{(k)}$
such that, for each $j\in[q]$, $\widehat{\boldsymbol{T}}^{(k)}=1$ if $\boldsymbol{T}^{(k)}_j\ge 0.5$, $0$ otherwise.

A useful feature of our method is that it can be adopted to detect differences between experimental contexts that are reflected into modifications of the DAG structure, as induced by the interventions. These can be represented by means of a \emph{difference-graph} \citep{Wang:DCI:2018} which is constructed as follows.
Consider two DAGs $\D_1$ and $\D_k$, for $k\in\{2,\dots,K\}$. 
Let also $T^{(k)}$ be the intervention target associated with $\D_k$.
The difference-graph of $(\D_1,\D_k$), denoted as $\G^{(k)}$, is the graph whose adjacency matrix $\bG^{(k)}$ has $(u,v)$-element
\be
\bG^{(k)}_{uv}\,=\,
\left\{
\begin{array}{rl}
	1 & \textnormal{ if $v \in T^{(k)}$ and $u\in\{\pa_{\D_1}(v)\cup\pa_{\D_k}(v)\}$}, \\
	0 & \textnormal{ otherwise. }
\end{array}
\white\right\}
\ee
In other terms, an edge $u\to v$ is included in $\G^{(k)}$ whenever $v$ is an intervention target and $u$ is a parent of $v$ in at least on of the two DAGs, implying that the local distribution of node $v$ has been modified as the effect of a (soft or general) intervention.
For any $\G^{(k)}$ we can provide an MCMC-based estimate, $\widehat\G^{(k)}$ by following the same rationale leading to the MPM DAG and based on the collection of estimated PPIs.

\section{Simulations and Real Data Analysis}
\label{sec:5:sim:real}

In this section we apply our methodology for causal discovery under general interventions to simulated and real data.
To this end, in Section \ref{sec:gaussian} we first specialize our framework to Gaussian DAG models. 
In Section \ref{sec:simulations} we thus evaluate the performance of our method on simulated Gaussian data and compare it with alternative benchmark approaches. Finally, in Section \ref{sec:realdata} we present an application to biological protein expression data.

\subsection{Gaussian DAGs}
\label{sec:gaussian}

For the random vector $X=(X_1,\dots,X_q)^\top$, we consider a linear Gaussian Structural Equation Model (SEM) of the form
\begin{equation}
	\label{eq:sem:eqts}
	X = \bB^\top X + \bepsilon, \quad \bepsilon \sim \N_q(\bzero, \bD),
\end{equation}
where $\bB$ is a $(q,q)$ matrix of regression coefficients with $(l,j)$-element $\bB_{lj} \neq 0$ if and only if $l \in \pa_\D(j)$,
and $\bD=\textnormal{diag}(\bD_{11},\dots,\bD_{qq})$
is a $(q,q)$ matrix 
collecting the conditional variances of the $q$ variables.
Equivalently, we can write for each $j\in [q]$
\begin{align}
	\label{eq:pre:int:gaussian}
	X_j= \sum_{l \in \pa_{\D}(j)}\bB_{lj}X_l + \varepsilon_j, \quad \varepsilon_j \sim \N(0,\bD_{jj}).
\end{align}
Equation \eqref{eq:sem:eqts} implies
$X\g \bSigma, \D \sim \N_q( \bzero, \bSigma)$
with
$\bSigma = (\bI-\bB)^{-\top}\bD(\bI-\bB)^{-1}$, the right-hand side corresponding to the modified Cholesky decomposition of the covariance matrix.
Consider now a family of experimental settings with intervention targets $T^{(1)},\dots,T^{(K)}$ and implied modified DAGs $\D_1,\dots,\D_K$.
For each $k \in [K]$ we have
\begin{equation}
	\label{eq:post:int:gaussian}
	X_j = \sum_{l \in \pa_{\D_k}(j)} {\bB}^{(k)}_{lj}X_l + \varepsilon^{(k)}_j, \quad \varepsilon^{(k)}_j \sim \N\big(0,\bD^{(k)}_{jj}\big), \quad j \in T^{(k)},
\end{equation}
where $(\bB^{(k)}, \bD^{(k)})$ are the DAG-parameters induced by the general intervention. Notice that all the $(l,j)$-elements of $(\bB^{(k)}, \bD^{(k)})$ not involved in \eqref{eq:post:int:gaussian} are exactly those in $(\bB, \bD)$ because of the assumed invariances between pre- and post-intervention distributions (see Equations \eqref{eq:intfactor} and \eqref{eq:likelihoodfinal}).
For each experimental setting $k \in [K]$, the post-intervention joint distribution of $X$ is then
	$
	X\g \bSigma_k, \D_k \sim \N_q\big( \bzero, \bSigma_k\big),
	$
where $\bSigma_k = \big(\bI-\bB^{(k)}\big)^{-\top}\bD^{(k)}\big(\bI-\bB^{(k)}\big)^{-1}$.
Because of the prior elicitation procedure introduced in Section \ref{sec:causal:discovery:general:interventions}, to compute the DAG marginal likelihood \eqref{eq:marg:like:general} we only need to specify a prior for the parameter of a complete (unconstrained) Gaussian DAG model.
It is immediate to show that assumptions \textbf{A1-A3} of Section \ref{sec:parameter:priors:elicitation} are satisfied in the Gaussian setting by $\bOmega\sim\W_q(a, \bU)$, namely a Wishart distribution on $\bOmega=\bSigma^{-1}$ having expectation $a\bU^{-1}$ with $a>q-1$ and $\bU$ a $(q,q)$ s.p.d.~matrix.
By combining such prior with the likelihood of $n$ i.i.d.~samples from $\N_q(\bzero,\bSigma)$, we obtain the following formula for the marginal data distribution relative to any subset of the $q$ variables $B \subset [q]$:
\begin{equation}
	\label{eq:gauss:marklik:B}
	p(\bX_{.B}) = \pi^{-\frac{n|B|}{2}}\frac{|\bU_{BB}|^{\frac{a-|\bar B|}{2}}}{|\widetilde \bU_{BB}|^{\frac{a-|\bar B|+n}{2}}}\frac{\Gamma_{|B|}\left(\frac{a-|\bar B| + n}{2}\right)}{\Gamma_{|B|}\left(\frac{a-|\bar B|}{2}\right)},
\end{equation}
where $\bar B = [q] \backslash B$ and $\widetilde \bU = \bU + \bX^{\top}\bX$; see for instance \citet{Press:1982}.
This formula, implemented in Equation \eqref{eq:marg:like:general} for suitable elements (rows and columns) of the data matrix $\bX=\big(\bX^{(1)},\dots,\bX^{(K)}\big)^{\top}$,
specializes the DAG marginal likelihood to the Gaussian setting.
Note that
the resulting marginal likelihood
provides an adaptation to our interventional setting
of the popular Bayesian Gaussian equivalent (BGe) score, originally introduced by \citet{Heckerman:Geiger:1995:BDeu:BGe} for the case of i.i.d.~observational data; see also \citet{Geiger:Heckerman:2002}.
When coupled with the model prior introduced in Section \ref{sec:priors:model}, this result fully specializes our general methodology to the Gaussian setting.

\subsection{Simulation Studies}
\label{sec:simulations}

We evaluate the performance of our method under several simulated scenarios where we vary
i) the number of experimental settings $K \in \{2,4\}$, ii) the number of variables $q \in \{10,20\}$ and iii) the sample size $n_k \in \{100, 500, 1000\}$ that we assume equal across $k\in[K]$.

For each combination of $K$ and $q$, $40$ true DAGs, intervention targets and induced parent sets are generated as follows.
We first draw a sparse DAG $\D$ with a probability of edge inclusion $3/(2q-2)$, so that the expected number of edges in the DAG grows linearly with the number of variables  \citep{peters:buehlmann:2014}.
Each target $T^{(k)}$, $k \in \{2, \dots, K\}$, is then generated by randomly including each node $j \in [q]$ in $T^{(k)}$ with probability $\eta_k=0.2$.
For each node $j \in T^{(k)}$, consider now matrix $\bP^{(k)}$ which represents the (possibly different) parent sets induced by the intervention; the latter is constructed by randomly generating a new DAG with same topological ordering as $\D$, and replacing the original parent set of $j$ with that of the new DAG.
Finally, conditionally on DAG $\D$ and the so-obtained modified DAGs $\D_2,\dots,\D_K$,
we draw the set of distinct parameters $\bB_{lj}^{(k)}$ uniformly in $[-1,-0.1]\cup[0.1,1]$, while we fix $\bD^{(k)}_{jj}=1$ for each $j\in[q]$ and $k\in[K]$.
Finally, by recovering $\bSigma_k$ from $\big(\bB^{(k)},\bD^{(k)}\big)$, $n_k$ observations are generated
from $\N_q\big( \bzero, \bSigma_k\big)$,
for $k\in[K]$.
Output is finally a collection of simulated datasets $\bX^{(1)},\dots,\bX^{(K)}$.

We implement our method by running Algorithm \ref{alg:mhwithingibbs} for number of MCMC iterations $S=3000q$, discarding the initial $1000q$ draws that are used as a burn-in period.
We set $a_{\phi}=b_{\phi}=1$, $a_{\eta}=b_{\eta}=1$ and $a_{\D}=a_{\D}=1$ in the hierarchical model priors of Section \ref{sec:priors:model}. These specific choices result in uniform priors for the inclusion of a node in an intervention target \eqref{eq:prior:target}, as a new parent \eqref{eq:prior:parent:sets} as well as for the probability of edge inclusion in $\D$ \eqref{eq:prior:dag}.
Finally, we set $a=q$ and $\bU=\bI_q$ in the Wishart prior on $\bOmega$, leading to a weakly informative prior whose weight corresponds to a sample of size one.

We evaluate the performance of our method in the tasks of DAG learning and target identification.
To this end, we consider as point estimates of DAGs and targets
the Median Probability DAG model and Median Probability Targets as introduced in Section \ref{sec:posterior:inference}.
Since there are no existing methods for causal discovery that
align precisely with our framework of general interventions,
providing a fully equitable comparison is not straightforward.
To address this issue, we benchmark our approach against alternative methodologies designed for slightly different contexts. Specifically, we consider three methods: GIES \citep{Hauser:Buehlmann:2012}, its recent extension GnIES \citep{gamella2022characterization}, and UT-IGSP \citep{squires:utigsp:2020}.

GIES, which requires exact knowledge of the intervention targets, serves as a reference for the DAG structure learning task. In contrast, both GnIES and UT-IGSP
learn the intervention targets from the data, but assume slightly different definitions of interventions.
Specifically, GnIES
considers \emph{noise-interventions}, which only modify the error-term distribution of the interventioned nodes in \eqref{eq:scmdef}.
Differently, UT-IGSP works under the framework of \textit{soft interventions}.

Although the interventions considered by the methods above produce different post-intervention distributions, the implied invariances coincide, thus making our comparison sensible.
In addition, all benchmarks provide an I-Essential Graph (I-EG) estimate which represents an I-Markov equivalence class of DAGs.
We therefore adapt the MPM DAG estimate provided by our method by constructing the representative I-EG.
Figure \ref{fig:SHD:cfr} summarizes the Structural Hamming Distance (SHD) between each I-EG estimate and true I-EG, for all methods under comparison; SHD is defined as the number of insertions, deletions or flips needed to transform the estimated graph into the true DAG; accordingly lower values of SHD imply better performances.

\begin{figure}
	\begin{center}
		\begin{tabular}{lcc}
			& $\quad\quad  K=2$ & $\quad\quad  K=4$ \\
			\multirow{1}{*}{\rotatebox[origin=c]{90}{$q=10$\quad}} &
			\raisebox{-0.7\height}{\includegraphics[width=0.45\linewidth]{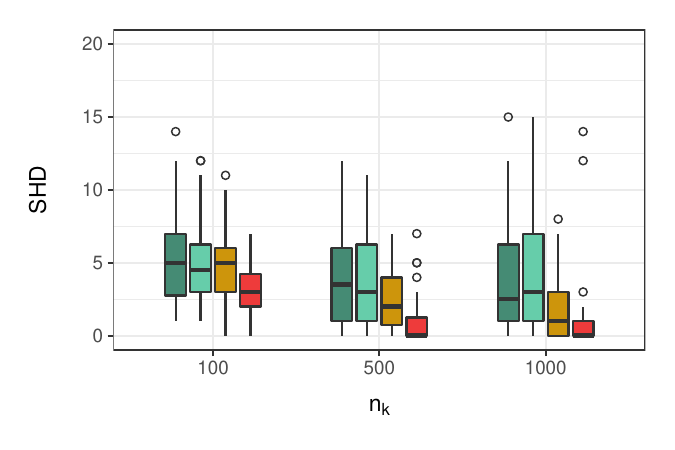}} &
			\raisebox{-0.7\height}{\includegraphics[width=0.45\linewidth]{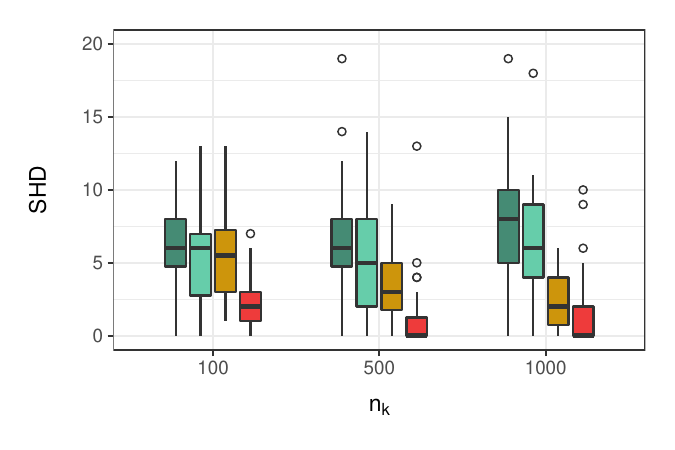}} \\
			\multirow{1}{*}{\rotatebox[origin=c]{90}{$q=20$\quad}} &
			\raisebox{-0.7\height}{\includegraphics[width=0.45\linewidth]{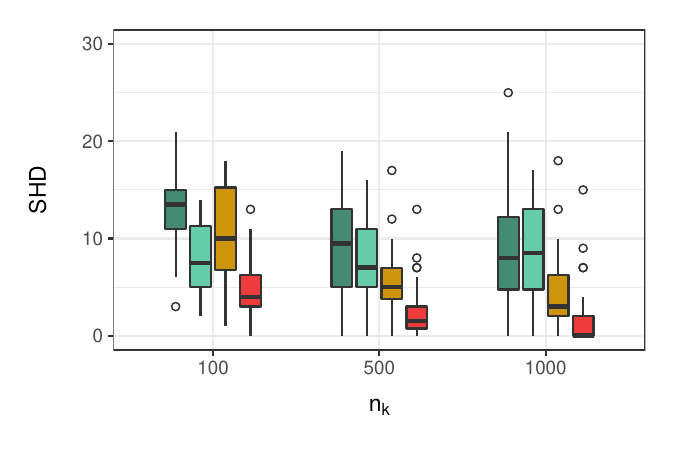}} &
			\raisebox{-0.7\height}{\includegraphics[width=0.45\linewidth]{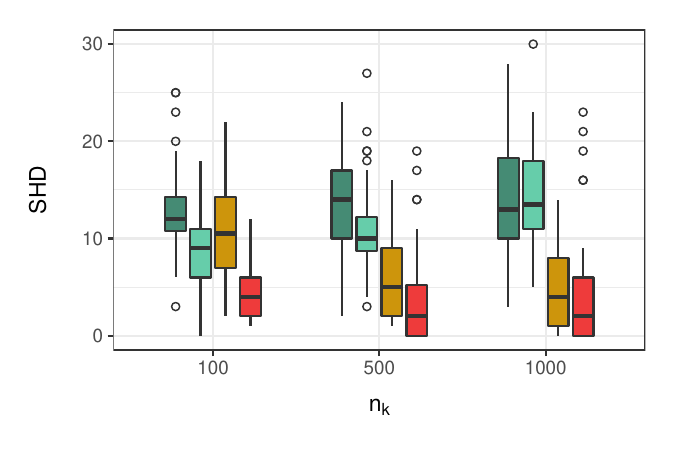}}
		\end{tabular}
		\caption{Simulations. Distribution (across $40$ simulations) of the Structural Hamming Distance (SHD)        between true DAG and graph estimate, under scenarios $q\in\{10,20\}$ (number of variables), $K\in\{2,4\}$ (number of experimental contexts), and for increasing samples sizes $n_k\in\{100,500,1000\}$. Methods under comparison are: GIES and GnIES (dark and light blue), UT-IGSP (yellow) and our Bayesian approach (red).}
		\label{fig:SHD:cfr}
	\end{center}
\end{figure}

Figure \ref{fig:targets:errors:cfr} instead reports the number of errors (both false positives and false negatives) relative to target identification for our method, GnIES and UT-IGSP.
Our method exhibits a superior performance in comparison with the benchmarks, as also expected because of deviations of the simulated data
from the assumptions underlying their methods.
Therefore, the two benchmarks reveal difficulties in recovering a causal DAG structure from interventional data whose generating mechanism is consistent with a broader, namely \textit{general}, framework of interventions.

\begin{figure}
	\begin{center}
		\begin{tabular}{lcc}
			& $\quad \quad K=2$ & $\quad \quad K=4$ \\
			\multirow{1}{*}{\rotatebox[origin=c]{90}{$q=10$\quad}} &
			\raisebox{-0.7\height}{\includegraphics[width=0.45\linewidth]{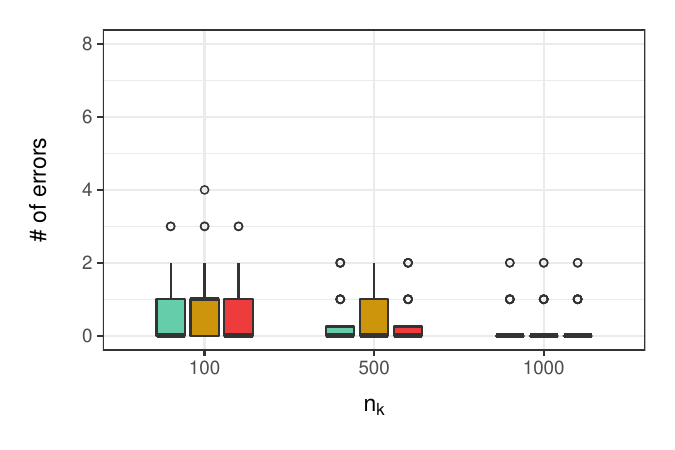}} &
			\raisebox{-0.7\height}{\includegraphics[width=0.45\linewidth]{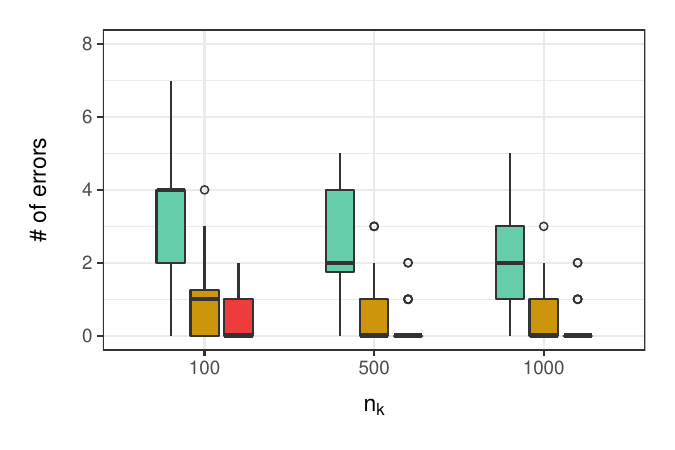}} \\
			\multirow{1}{*}{\rotatebox[origin=c]{90}{$q=20$\quad}} &
			\raisebox{-0.7\height}{\includegraphics[width=0.45\linewidth]{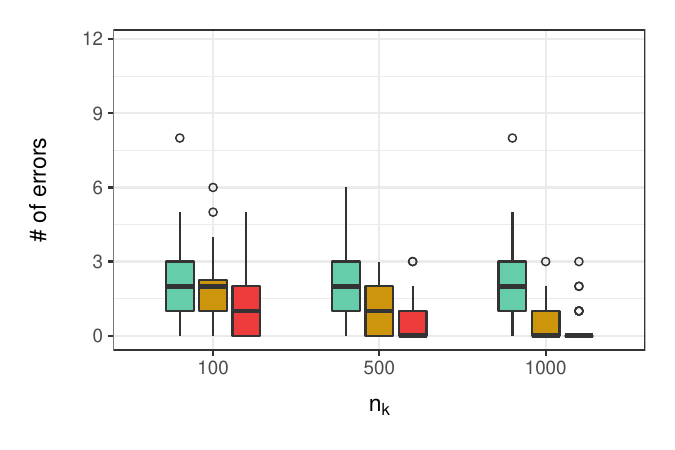}} &
			\raisebox{-0.7\height}{\includegraphics[width=0.45\linewidth]{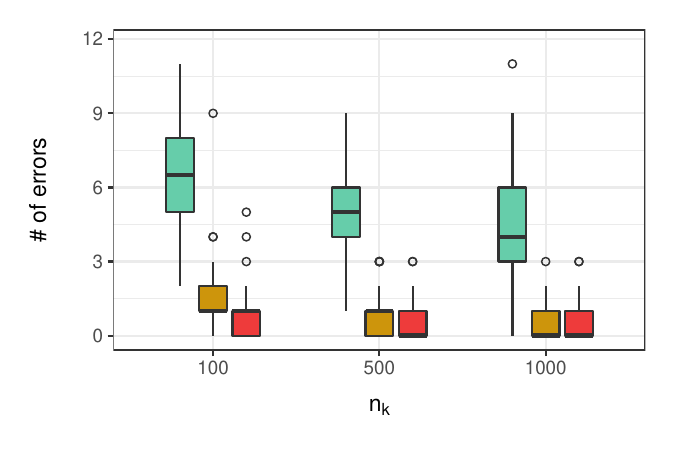}}
		\end{tabular}
		\caption{Simulations. Distribution (across $40$ simulations) of the number of false positives and false negatives (\# of errors) between true and estimated targets, under scenarios $q\in\{10,20\}$ (number of variables), $K\in\{2,4\}$ (number of experimental contexts), and for increasing samples sizes $n_k\in\{100,500,1000\}$. Methods under comparison are: GnIES (light blue), UT-IGSP (yellow) and our Bayesian approach (red).}
		\label{fig:targets:errors:cfr}
	\end{center}
\end{figure}

As described in Section \ref{sec:posterior:inference}, the output provided by our method can be also adapted to learn differences between DAGs corresponding to different experimental settings.
For this specific goal, \cite{Wang:DCI:2018} developed the Difference Causal Inference (DCI) algorithm.
To assess the performance of our method in this context and compare it with DCI, we consider the same simulation scenarios for $K=2$ defined before.
With regard to DCI, we consider two implementations.
In the first one, following \citet{Belyaeva:et:al:2020:DCI}, we set $\alpha_{ug} = 0.001$, $\alpha_{sk} = 0.5$ and $\alpha_{dd} = 0.001$ as confidence levels for the tests used in the corresponding three steps of the algorithm. In the second one, we implement DCI with stability selection with input the grid of possible hyperparameters defined by $\alpha_{ug} \in \{0.001, 0.01\}$, $\alpha_{sk} \in \{0.1, 0.5\}$ and $\alpha_{dd} \in \{0.001, 0.01\}$. 
\black
Figure \ref{fig:DCI:cfr} summarizes the sum of falsely identified and non-identified
edges in the estimated difference-graph of $(\D_1,\D_2)$.
Both methods improve their ability in recovering structural differences between the two DAGs as the sample size increases.
Moreover, the performance of our method is slightly better than DCI, expecially under the $q=20$ scenario.

\begin{figure}
	\begin{center}
		\begin{tabular}{cc}
			$\quad\quad\quad q=10$ & $\quad\quad\quad q=20$ \\
			\includegraphics[width=0.48\linewidth]{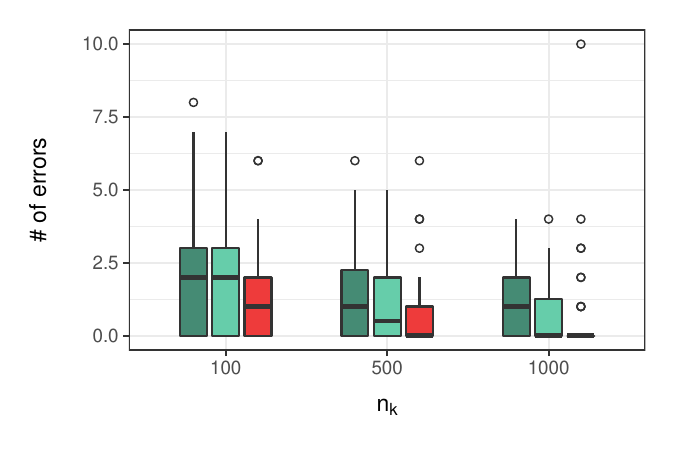} &
			\includegraphics[width=0.48\linewidth]{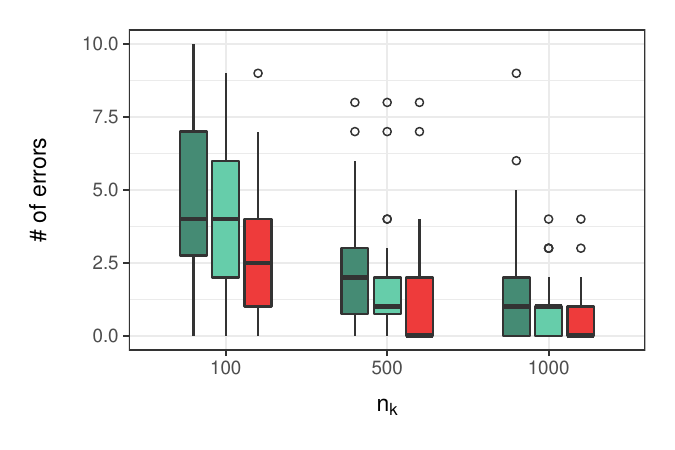}
		\end{tabular}
		\caption{Simulations. Distribution (across $40$ simulations) of the sum of falsely identified and non-identified varying edges between context $k=1$ and $k=2$, under scenarios $q\in\{10,20\}$ (number of variables) and for increasing samples sizes $n_k\in\{100,500,1000\}$. Methods under comparison are: DCI and DCI with stability selection (dark and light blue) and our Bayesian approach (red).}
		\label{fig:DCI:cfr}
	\end{center}
\end{figure}

\subsection{Real data analysis}
\label{sec:realdata}

We apply our methodology to a dataset of protein expression measurements from patients affected by Acute Myeloid Leukemia (AML). Subjects are classified into groups corresponding to distinct AML subtypes which were identified according to the French-American-British (FAB) system based on morphological features, cytogenetics, and assessment of recurrent molecular abnormalities.
The complete dataset is provided as a supplement to \cite{Kornblau:et:al:2009} and was previously analyzed from a multiple graphical modelling perspective by \cite{Peterson:et:al:2015:JASA} and \cite{Castelletti:et:al:2020:SIM}. Specifically, the authors developed Bayesian methodologies
to infer a distinct graphical structure for each group (subtype), and simultaneously allowing for similar features across groups through a hierarchical prior on graphs favoring network relatedness.
Given the distinct prognosis associated with each AML subtype, it is reasonable to expect variations in protein interactions among groups, as revealed by the analysis of \citet{Castelletti:et:al:2020:SIM}. The investigation of such variations is of great interest from a therapeutic perspective, since it can provide valuable insights on the efficacy of a treatment capable of protein regulation depending on the specific patient's subtype; see also \citet{Castelletti:Consonni:2023:SIM}.

Similarly to \citet{Peterson:et:al:2015:JASA}, we consider the level of $q = 18$ proteins and phosphoproteins involved in apoptosis and cell cycle regulation according to the KEGG database, relative to $n=178$ diagnosed AML patients corresponding to the following $K = 4$ subtypes: M0 (17 subjects), M1 (34 subjects), M2 (68 subjects) and M4 (59 subjects). We designate the largest group, M2, as the observational reference group, and attribute differences among subtypes to unspecified general interventions that may have altered the reference network structure.
%
We implement our methodology by running Algorithm \ref{alg:mhwithingibbs} for a number of MCMC iterations $S=250000$, and discarding the initial $50000$ draws which are used as a burn-in period.
We consider for all priors the same weakly informative hyperparameter choices employed in the simulation study of Section \ref{sec:simulations}.

As a summary of the MCMC output we first compute the marginal posterior probability of target inclusion according to Equation \eqref{eq:post:probs:targets} for each node $v\in[q]$ and AML subtype (experimental context $k$).
The resulting collection of probabilities is summarized in the heat map of Figure \ref{fig:PPIs:targets}.
Results show that a few proteins
are with high probability targeted as the result of unknown interventions that affect the network of protein interactions under any of the subtypes.
Specifically, only four proteins, namely BCL2 and CCND1 under Subtype M1 and
GSK3 and XIAP under Subtype M4, are identified as intervention targets with a posterior probability exceeding 0.5. Differences in the implied set of parent-child relations involving such nodes are therefore expected in the implied post-intervention graphs.
By converse, there are no proteins whose probabilities of intervention are higher than the 0.5 threshold under Subtype M0.




\begin{figure}
	\begin{center}
		\includegraphics[width=0.7\linewidth]{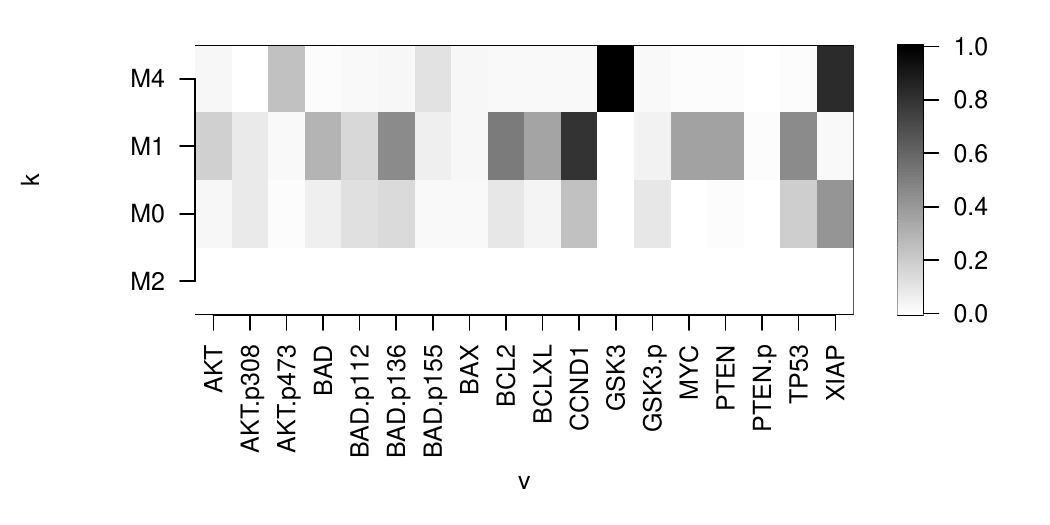}
		\caption{AML data. Estimated marginal posterior probabilities of target inclusion, computed for each node $v \in [q]$ across AML subtypes, each corresponding to an experimental context $k$. Subtype M2 corresponds to the reference (observational) context.}
		\label{fig:PPIs:targets}
	\end{center}
\end{figure}

According to Equation \eqref{eq:post:probs:edge:inclusion}, we then compute the Posterior Probability of Inclusion (PPI) for each possible directed edge $(u,v)$ and each group-specific post-intervention DAG, corresponding to one of the four subtypes. Results for each subtype M0, M1, M2, M4 are reported in the $(q,q)$ heat maps of Figure \ref{fig:PPIs:edge:inclusion}, where any $(u,v)$-element in the plots corresponds to the marginal probability of inclusion of $u\rightarrow v$ in one of the four DAGs.

Finally, as single graphs summarizing the entire MCMC output, we provide
a collection of context-specific MPM DAG estimates, $\widehat\D_k,k=1,\dots,4$.
These are reported in Figure \ref{fig:DAGs:leukemia}, where for ease of interpretation the graph indexing the observational context (Subtype M2) corresponds to the I-EG representing the equivalence class of the estimated DAG.
As expected from the previous results, the four graphs exhibit several similarities. An instance is the path involving the PTEN, PTEN.p and BAD.p136, BAD.p155 proteins. Such associations
are consistent with findings in \citet{Peterson:et:al:2015:JASA} who also identified (undirected) links between these proteins under all groups.
In addition, our method detects a direct effect of BAD.p136 on PTEN.p, as well as of PTEN on BAD.p155 for all leukemia patients.
A notable difference across groups is instead represented by the absence of the directed link AKT $\rightarrow$ GSK3 in group M4 as the effect of a (hard) intervention targeting GSK3 and which removes its parents.
Notably, the correlation of GSK3 with a number of proteins involved in AML, and primarly AKT, was established in the medical literature; see for instance \citet{Ruvolo:et:al:2015} and \citet{Ricciardi:et:al:2017}. In particular, the AKT/GSK3 path was shown to represent a critical axis in AML, which may be a therapeutic target in AML patients with intermediate cytogenetics (M2 subtype).
Out results show that an \textit{intervention} on AKT aimed at regulating the GSK3 protein may be beneficial for patients characterized by AML subtypes M0, M1, M2, while uneffective whenever applied to M4 patients since there are no paths from AKT downstreaming to GSK3.

\begin{figure}
	\begin{center}
		\begin{tabular}{cc}
			Subtype M2 & Subtype M0 \\
			\includegraphics[width=0.48\linewidth]{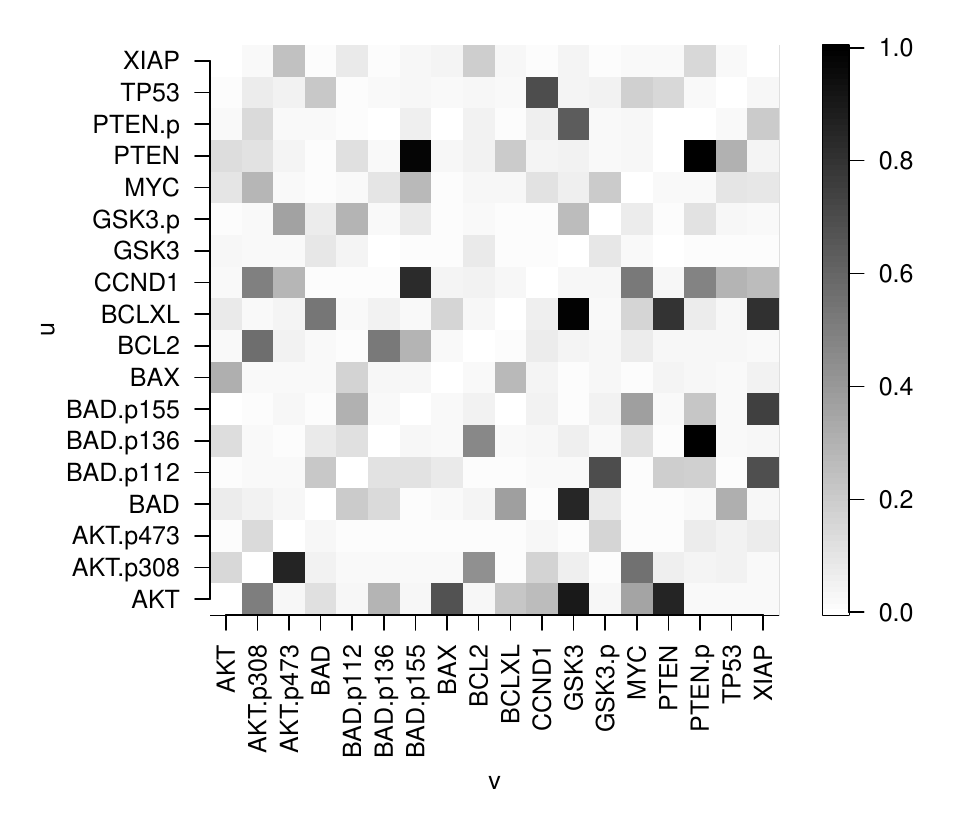} &
			\includegraphics[width=0.48\linewidth]{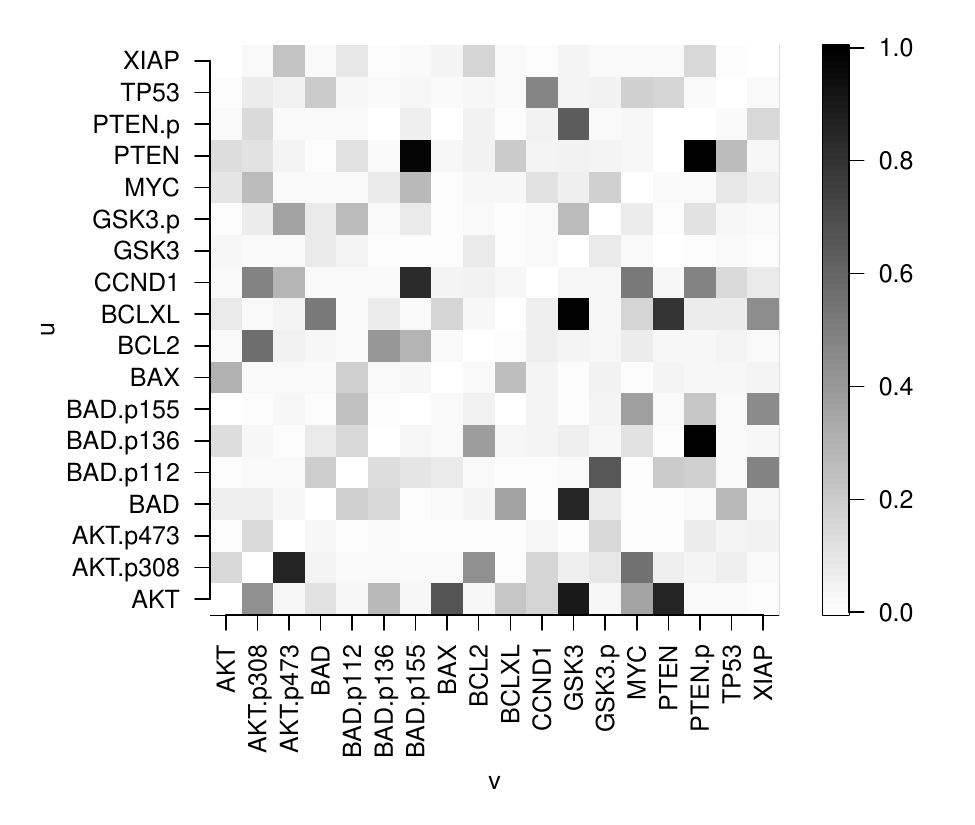} \\
			Subtype M1 & Subtype M4 \\
			\includegraphics[width=0.48\linewidth]{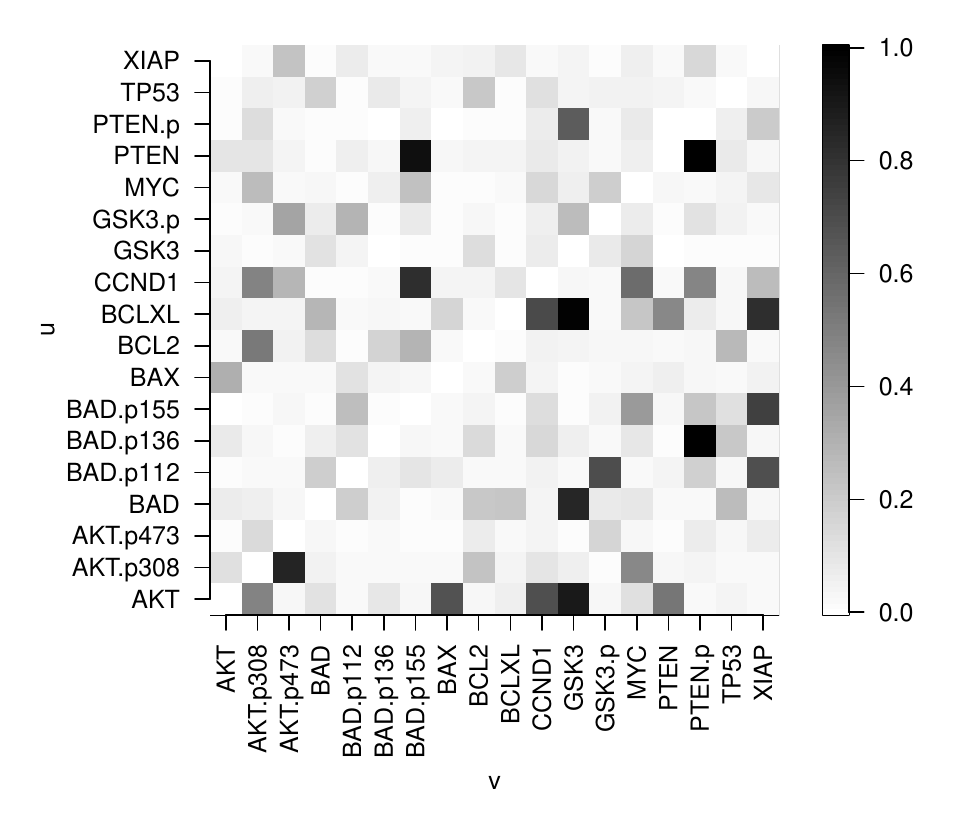} &
			\includegraphics[width=0.48\linewidth]{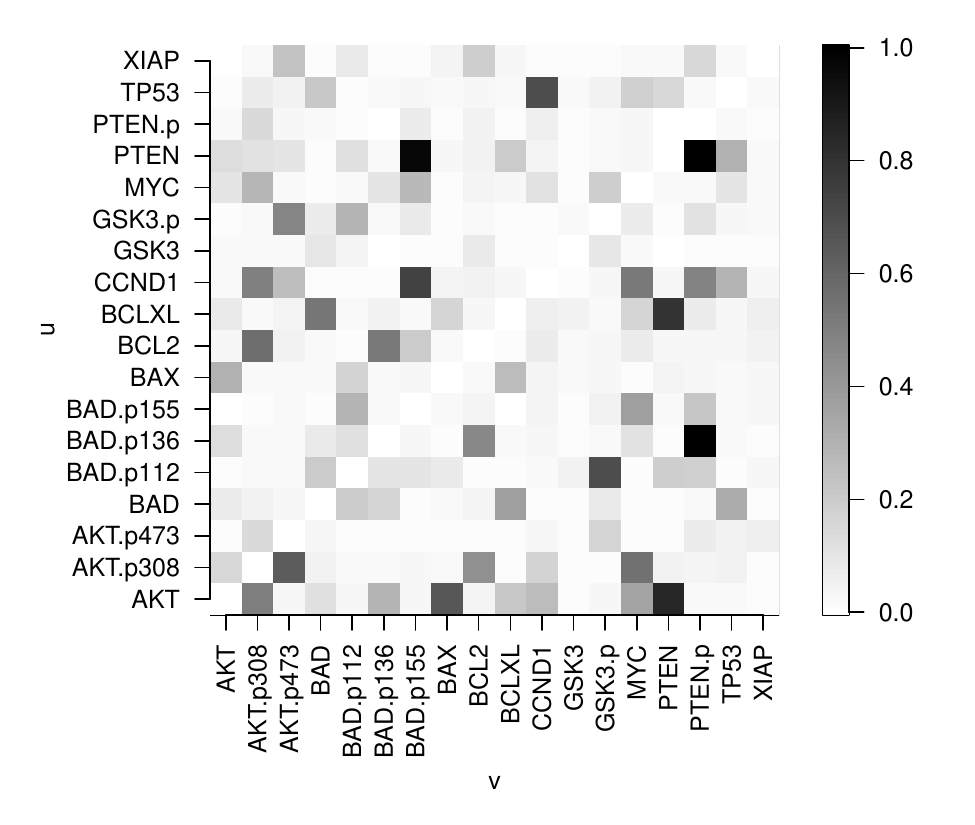}
		\end{tabular}
		\caption{AML data. Estimated marginal posterior probabilities of edge inclusion, computed for each possible directed edge $(u,v)$, $u,v \in [q]$ and group-specific post-intervention DAG, each corresponding to one of the four AML subtypes.}
		\label{fig:PPIs:edge:inclusion}
	\end{center}
\end{figure}

\begin{figure}
	\begin{center}
		\begin{tabular}{cc}
			Subtype M2 & Subtype M0 \\
			\includegraphics[width=0.48\linewidth]{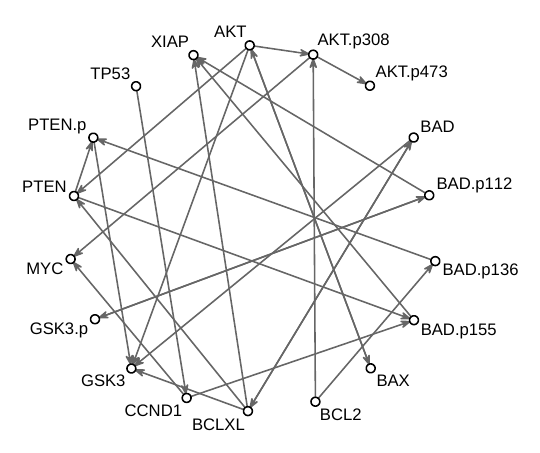} &
			\includegraphics[width=0.48\linewidth]{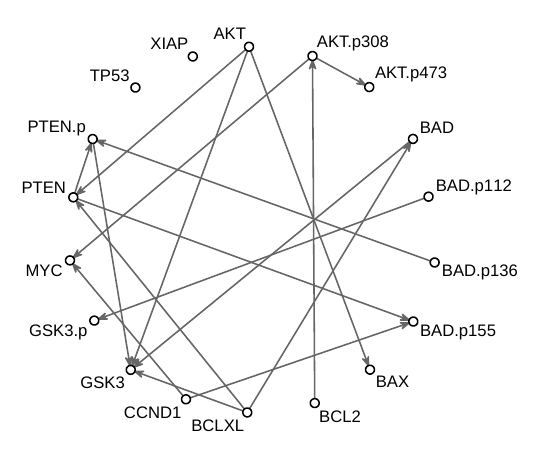} \\
			& \\
			Subtype M1 & Subtype M4 \\
			\includegraphics[width=0.48\linewidth]{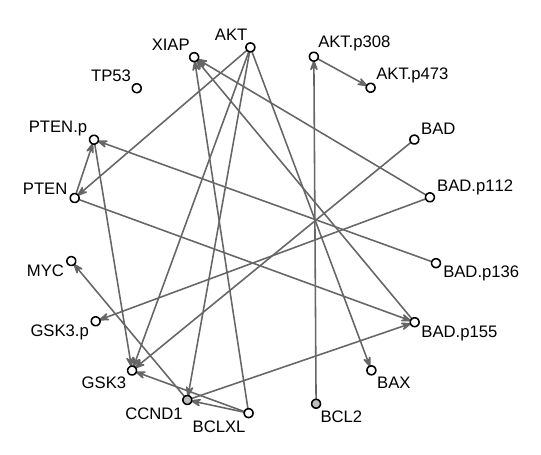} &
			\includegraphics[width=0.48\linewidth]{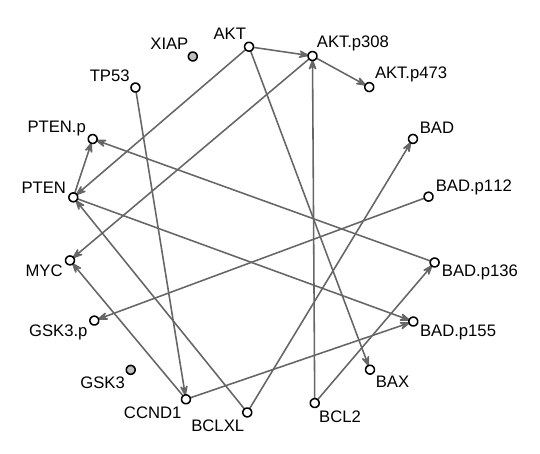}
		\end{tabular}
		\caption{AML data. Median Probability graph Model (MPM) estimates obtained under each AML subtype. Graph corresponding to Subtype M2 is the representative I-EG.}
		\label{fig:DAGs:leukemia}
	\end{center}
\end{figure}

\section{Discussion}
\label{sec:discussion}

In this paper we introduce a statistical framework for causal discovery from multivariate interventional data.
The notion of general intervention that we implement allows for
structural modifications in the parent-child relations involving the intervened nodes, where the latter can be both known in advance or completely uncertain.
Under both contexts, we first establish DAG identifiability
and provide graphical criteria to characterize interventional Markov equivalence of DAGs.
We then develop a Bayesian methodology for structure learning, by introducing an effective procedure which dramatically simplifies parameter prior elicitation. In addition, it provides a closed-form expression for the DAG marginal likelihood which
guarantees score equivalence among I-Markov equivalent DAGs.
We complete our Bayesian model formulation by assigning priors to model parameters corresponding to DAGs, intervention targets, and modified parent sets.
Finally, to approximate the corresponding posterior distribution, we develop a Markov Chain Monte Carlo (MCMC) sampler based on a random scan Metropolis Hastings scheme.

\subsection{Future Developments}

Our Bayesian framework for causal discovery relies on a set of general assumptions on the likelihood and prior that are satisfied under various parametric families, and notably zero-mean Gaussian models, when equipped with a Wishart prior on the precision matrix.
Within such context, the full development of a methodology for structure learning and target identification is possible, and asymptotic properties relative to posterior ratio consistency could be established along the lines of
\citet{Castelletti:Peluso:Bka:2023} and \citet{Castelletti:Peluso:2022}
for the case of known and unknown hard interventions respectively.
Similarly, our framework can be implemented for the analysis of categorical DAGs, under a multinomial-Dirichlet model. The resulting method would extend the original methodology of \citet{Heckerman:et:al:1995:BDeu}, developed for i.i.d.~observational samples and leading to their BDeu score, to an experimental setting of general (unknown) interventions.

Our approach for causal discovery is based
on the assumption that the data are generated according to a Markovian Structural Causal Model (SCM) with no cycles, and which can be thus represented by a directed \emph{acyclic} graph.
Besides the absence of cycles, our SCM representation assumes that there are no latent (unmeasured) confounders.
Recently,
\citet{Bongers:et:al:AOS:2021:cycles}
proposed a general theory for causal discovery which allows for the presence of both latent confounders and cycles,
establishing identifiability conditions of SCMs as well as several statistical properties of their methodology.
An extension of our method for causal discovery under general interventions towards this direction can be also of interest.






\newpage

\section*{Appendix A. Proofs of Section 2}
\label{appx:proofs}

This section contains all the proofs of the main results presented in Sections 2.2 and 2.3 of the paper.
Numbering of propositions and theorems in this section is the same as in the main text. Auxiliary lemmas and propositions that are newly introduced within this appendix follow instead the sequential numbering in line with the main text.



\subsection*{A.1 Proofs of Section 2.2}
\label{appx:proofs2:2}

The I-Markov property of Definition \ref{def:imarkovprop} and the graphical characterization of I-Markov equivalence of Theorem \ref{theorem:imarkeq:skeleta} is similar to the one provided by \citet{Yang:IGSP:2018} for the case of soft interventions. As a consequence, our proofs of Proposition \ref{prop:imarkpropok} and Theorem \ref{theorem:imarkeq:skeleta} are adapted from the ones of Proposition 3.8 and Theorem 3.9 in their paper and are here reported for completeness.

We first characterize I-Markov equivalence in our setting in terms of the ensued factorization:

\begin{lemma}
	\label{lemma:factorization}
	$\{p_k(\cdot)\}_{k=1}^K \in \M_{\I}(\D)$ if and only if there exists $p(\cdot) \in \M(\D)$ such that, for each $k \in [K]$, $p_k(\cdot)$ factorizes as $\prod_{j \notin T^{(k)}} p(x_{j}\g \bx_{\pa_\D(j)}) \prod_{j \in T^{(k)}} p_k(x_{j}\g \bx_{{\pa}_{\D_k}(j)})$.
\end{lemma}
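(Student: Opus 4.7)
The plan is to invoke the standard equivalence, for a strictly positive density and a DAG $\mathcal{G}$, between $p \in \M(\mathcal{G})$ and the recursive factorization $p(\bx) = \prod_{j} p(x_j\g\bx_{\pa_\mathcal{G}(j)})$. The key observation I will use repeatedly is that by the definition of post-intervention graph, $\pa_{\D_k}(j) = \pa_\D(j)$ for every $j \notin T^{(k)}$: only the parent sets of the intervened nodes are altered. With this in hand, both directions reduce to comparing the local factors of $p_k$ against those of a candidate observational density $p$.

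For the forward direction, I set $p := p_1$. Since $T^{(1)} = \O$ we have $\D_1 = \D$, and condition 1 of Definition \ref{def:imeclass} with $k = 1$ yields $p \in \M(\D)$. For arbitrary $k$, the same condition gives the Markov factorization of $p_k$ w.r.t.~$\D_k$, which I split as
\[
p_k(\bx) = \prod_{j\notin T^{(k)}} p_k(x_j\g\bx_{\pa_{\D_k}(j)}) \prod_{j\in T^{(k)}} p_k(x_j\g\bx_{\pa_{\D_k}(j)}).
\]
For each $j \notin T^{(k)}$, condition 2 of Definition \ref{def:imeclass} applied with $l=1$, combined with $\pa_{\D_k}(j)=\pa_\D(j)$, gives $p_k(x_j\g\bx_{\pa_{\D_k}(j)}) = p(x_j\g\bx_{\pa_\D(j)})$, converting the first product into the form claimed by the lemma.

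For the backward direction, assume the factorization. Using $\pa_{\D_k}(j) = \pa_\D(j)$ for $j \notin T^{(k)}$, the expression reads $p_k(\bx) = \prod_{j=1}^{q} f_j^{(k)}(x_j\g\bx_{\pa_{\D_k}(j)})$ where each $f_j^{(k)}$ is genuinely a conditional density in $x_j$, being either a conditional of $p$ (for $j \notin T^{(k)}$) or of $p_k$ (for $j \in T^{(k)}$). This is therefore a valid DAG factorization with respect to $\D_k$, so $p_k \in \M(\D_k)$ and each $f_j^{(k)}$ coincides with $p_k(x_j\g\bx_{\pa_{\D_k}(j)})$. In particular, for $j \notin T^{(k)}$, $p_k(x_j\g\bx_{\pa_{\D_k}(j)}) = p(x_j\g\bx_{\pa_\D(j)})$; hence for any $k,l$ and any $j \notin T^{(k)} \cup T^{(l)}$, both $p_k(x_j\g\bx_{\pa_{\D_k}(j)})$ and $p_l(x_j\g\bx_{\pa_{\D_l}(j)})$ collapse to $p(x_j\g\bx_{\pa_\D(j)})$, which is condition 2 of Definition \ref{def:imeclass}. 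Condition 1 has already been verified.

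The main obstacle is essentially bookkeeping: one must make sure that the given product is a bona fide DAG factorization for $\D_k$ rather than a product of arbitrary functions. This is settled by the observation that every factor is manifestly a conditional density in its first argument, so no additional normalization argument is required.
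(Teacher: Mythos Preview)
Your proof is correct and follows essentially the same approach as the paper: both directions rely on the equivalence between membership in $\M(\D_k)$ and the recursive factorization, together with the observation that $\pa_{\D_k}(j)=\pa_\D(j)$ for $j\notin T^{(k)}$, and both use condition~2 of Definition~\ref{def:imeclass} with the observational context $l=1$ to identify the non-intervened local factors with those of a fixed $p\in\M(\D)$. Your choice $p:=p_1$ and your explicit check that the given product is a bona fide DAG factorization are, if anything, slightly more careful than the paper's wording, but the argument is the same.
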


\begin{proof}
	\textit{If} - Suppose there exists $p(\cdot) \in \M(\D)$ such that the factorization above holds. The first condition from the definition of the I-Markov equivalence class, namely that $p_k(\bx) \in \M(\D_k)$ is trivially satisfied for all $k \in [K]$. As for the second condition, note that for all $j \notin T^{(k)}$ we have $p_k(x_j\g\bx_{\pa_{\D_k}(j)}) = p_k(x_j\g \bx_{\pa_\D(j)}) = p(x_j\g \bx_{\pa_\D(j)})$. As a consequence, $p_k(x_j\g \bx_{\pa_{\D_k}(j)}) = p(x_j\g \bx_{\pa_\D(j)}) = p_{k^\prime}(x_j\g \bx_{\pa_{\D_{k^\prime}}(j)})$, $\forall \, j \notin T^{(k)} \cup T^{(k^\prime)}$ and $T^{(k)}, T^{(k^\prime)} \in \T$. Hence $\{p_k(\bx)\}_{k = 1}^K \in \M_{\I}(\D)$.
	
	\textit{Only if} - Suppose that $\{p_k(\bx)\}_{k=1}^K \in \M_{\I}(\D)$. To prove that there exists $p(\bx) \in \M(\D)$ such that the factorization in the lemma holds, take any $p(\bx) \in \M(\D)$. By definition, it holds that $p_k(\bx) = \prod_{j=1}^q p_k(x_j\g\bx_{\pa_{\D_k}(j)})$. From the second condition, we have that for any $k \in [K]$ and $j \notin T^{(k)}$, $p_k(x_j\g\bx_{\pa_{\D_k}(j)}) = p(x_j\g\bx_{\pa_\D(j)})$, where $p(x_j\g\bx_{\pa_\D(j)})$ is an arbitrary strictly positive density, so that the factorization in the lemma holds for all $T \in \T$.
\end{proof}

\begin{repproposition}{prop:imarkpropok}
	Let $\D$ be a DAG and $\I$ a collection of targets and induced parent sets. Then $\{p_k(\cdot)\}_{k=1}^K \in \M_{\I}(\D)$ if and only if $\{p_k(\cdot)\}_{k=1}^K$ satisfies the I-Markov property with respect to $\{\D^\I_k\}_{k=1}^K$.
\end{repproposition}

\begin{proof}
	\textit{If} - Choose any $k \in [K]$ and use the chain rule to factorize $p_k(\cdot)$ according to the topological ordering of $\D_k$, so that
	$$p_k(\bx) = \prod_{j=1}^q p_k(x_j\g\bx_{a_j(\pi_{\D_k})})$$
	where $a_j(\pi_{\D_k})$ represents all the nodes that precede $j$ in the topological ordering implied by $\D_k$. As each node is d-separated from its non-descendants given its parents, from the first condition of the general I-Markov property we obtain
	$$p_k(\bx) = \prod_{j=1}^q p_k(x_j\g \bx_{\pa_{\D_k}(j)}).$$ Moreover, each node $j \notin T^{(k)}$ is d-separated from $\zeta_{k}$ given its parents in $\D_k^\I$. Hence, from the second condition of the general I-Markov property we have $p_k(x_j\g\bx_{\pa_{\D_k}(j)}) = p(x_j\g \bx_{\pa_{\D}(j)})$, so that
	$$
	p_k(\bx) = \prod_{j\notin T^{(k)}}p(x_j\g\bx_{\pa_{\D}(j)})\prod_{j \in T^{(k)}} p_k(x_j\g \bx_{\pa_{\D_k}(j)}).
	$$
	Hence the result follows from the Lemma above. 
	
	\textit{Only if} - We want to prove that if $p_k(\cdot)$ factorizes according to
	$$p_k(\bx) = \prod_{j \notin T^{(k)}} p(x_{j}\g\bx_{\pa_{\D}(j)}) \prod_{j \in T^{(k)}} \tilde{p}(x_{j}\g\bx_{\pa_{\D_k}(j)})$$ for all $k \in [K]$, then the general I-Markov property holds, namely the collection of $\I$-DAGs $\{\D_k^{\I}\}_{k=1}^K$ can be used to recover all the conditional independencies and invariances through d-separation criteria. 
	
	As for the conditional independencies, note that by Lemma \ref{lemma:factorization} we have that $p_k(\cdot)$ factorizes according to $\D_k$ for all $k \in [K]$.
	Hence, for each $k \in [K]$ the Markov property defined on d-separation criteria must hold with respect to $\D_k$. Therefore, the first condition of the I-Markov property must hold.
	
	For the second condition, instead, we want to show that the invariant components of the distribution are exactly those whose nodes $j$'s are d-separated from $\zeta_I$ given a set $C$ in $\D_k^\I$, for all $k \in [K]$.
	Consider any two disjoint sets $A,C \subset [q]$ and $k \in [K]$ and suppose that $C$ d-separates $A$ from $\zeta_k$ in $\D_k^\I$. Now, let $V_{An}$ be the ancestral set of $A$ and $C$ in $\D_k$. Denote with $B' \subset V_{An}$ those nodes that are also d-connected to $\zeta_k$ in $\D_k^\I$ given $C$ and with $A' = V_{An} \backslash \{B' \cup C\}$ the sets of ancestors of $A$ and $C$ that are not d-connected to $\zeta_k$ and that are not in the conditioning set $C$. Note that $V_{An} = A' \cup B' \cup C$. From the factorization, we have that
	
	\begin{align*}
		p_k(\bx) & = p_k(\bx_{A'}, \bx_{B'}, \bx_C, \bx_{V \backslash V_{An}}) \\
		& = \prod_{j \in A'} p_k(x_j\g\bx_{\pa_{\D_k}(j)}) \prod_{j \in B'} p_k(x_j\g\bx_{\pa_{\D_k}(j)})  \\ 
		& \quad  \quad \prod_{j \in C} p_k(x_j\g\bx_{\pa_{\D_k}(j)}) \prod_{j \in V \backslash V_{An}} p_k(x_j\g\bx_{\pa_{\D_k}(j)}) \\ 
		& = \prod_{j \in A'} p_k(x_j\g\bx_{\pa_{\D_k}(j)}) \prod_{j \in B'} p_k(x_j\g\bx_{\pa_{\D_k}(j)}) \prod_{j \in C, \pa_{\D_k}(j) \cap A' = \O} p_k(x_j\g\bx_{\pa_{\D_k}(j)}) \\ 
		& \quad  \quad \prod_{j \in C, \pa_{\D_k}(j) \cap A' \neq \O} p_k(x_j\g\bx_{\pa_{\D_k}(j)}) \prod_{j \in V \backslash V_{An}} p_k(x_j\g\bx_{\pa_{\D_k}(j)}) \\ 
		& = \prod_{j \in A'} p(x_j\g\bx_{\pa_{\D}(j)}) \prod_{j \in B'} p_k(x_j\g\bx_{\pa_{\D_k}(j)}) \prod_{j \in C, \pa_{\D_k}(j) \cap A' = \O} p_k(x_j\g\bx_{\pa_{\D_k}(j)}) \\ 
		& \quad  \quad \prod_{j \in C, \pa_{\D_k}(j) \cap A' \neq \O} p(x_j\g\bx_{\pa_{\D}(j)}) \prod_{j \in V \backslash V_{An}} p_k(x_j\g\bx_{\pa_{\D_k}(j)}),
	\end{align*}
	where the last equality follows from the fact that 
	\begin{itemize}
		\item[$\cdot$] if $j \in A'$, then $j$ is d-separated from $\zeta_k$ in $\D_k^\I$ given $C$ and thus $j$ can not be a child of $\zeta_k$;
		\item[$\cdot$] if $j \in C$ and there exists at least one $h \in \pa_{\D_k}(j)$ such that $h \in A'$, then $j$ can not be a child of $\zeta_k$: if it were, then conditioning on $j$ its parents would be d-connected to $\zeta_k$ given $C$;
	\end{itemize}
	and recalling that $j \in \text{ch}_{\zeta_k}(\D_k^\I)$ if and only if $j \in T^{(k)}$. Similarly, the (union of) parents of nodes in $A'$ and $\{j \in C\g\pa_{\D_k}(j) \cap A' \neq \O\}$ are subsets of $A' \cup C$, while the parents of $B'$ and $\{j \in C\g\pa_{\D_k}(j) \cap A' = \O\}$ are subsets of $B' \cup C$. We can thus write 
	$$
	p_k(\bx) = g(\bx_{A'}, \bx_C)g_k(\bx_{B'}, \bx_C)g_k(\bx_{V \backslash V_{An}})
	$$
	just to underline the observational and interventional blocks in the factorization above and their arguments. We can thus marginalize out $A' \backslash A$, $B'$ and $V \backslash V_{An}$, thus obtaining
	\begin{align*}
		p_k(\bx_A, \bx_C) & = \int\limits_{ X_{(A' \backslash A) \cup B' \cup (V \backslash V_{An})}} g(\bx_{A'}, \bx_C)g_k( \bx_{B'}, \bx_C)g_k(\bx_{V \backslash V_{An}}) \\
		& = \int\limits_{ X_{(A' \backslash A) \cup B'}} g(\bx_{A'}, \bx_C)g_k(\bx_{B'},\bx_C) \\
		& = \int\limits_{ X_{(A' \backslash A)}} g( \bx_{A'}, \bx_C)\int\limits_{ X_{B'}}g_k(\bx_{B'},\bx_C) \\
		& = \tilde g (\bx_A, \bx_C)\tilde g_k(\bx_C).
	\end{align*}
	Using the latter expression we can write
	\begin{align*}
		p_k(\bx_A \g \bx_C) = \frac{p_k(\bx_A, \bx_C)}{p_k(\bx_C)} & = \frac{\tilde g ( \bx_A,\bx_C)\tilde g_k(\bx_C)}{\int_{X_A} \tilde g (\bx_A,\bx_C)\tilde g_k(\bx_C)} \\
		& = \frac{\tilde g (\bx_A,\bx_C)\tilde g_k(\bx_C)}{\tilde g_k(\bx_C)\int_{X_A} \tilde g (\bx_A,\bx_C)} \\
		& = \frac{\tilde g (\bx_A,\bx_C)}{\int_{X_A} \tilde g (\bx_A, \bx_C)},
	\end{align*}
	which does not depend on $T^{(k)}$ and is thus invariant as required by the Markov property.
\end{proof}

\begin{reptheorem}{theorem:imarkeq:skeleta}
	Let $\D_1, \D_2$ be two DAGs and $\I$ a collection of targets and induced parent
	sets inducing a valid general intervention for both $\D_1$ and $ \D_2$. $\D_1$ and $\D_2$ belong to the same I-Markov equivalence class if and only if $\D^\I_{1,k}$ and $\D^\I_{2,k}$ have the same skeleta and v-structures for all $k \in [K]$.
\end{reptheorem}

	

\begin{proof}
	\textit{If}: Because $\D_{1,k}^\I$ and $\D_{2,k}^\I$ have the same sleketon and set of v-structures for each $k \in [K]$, the two collections of $\I$-DAGs $\{\D^{\I}_{1,k}\}_{k=1}^K, \{\D^{\I}_{2,k}\}_{k=1}^K$ satisfy the same d-separation statements, thus implying the same sets of conditional independencies and invariances through the I-Markov property, so that $\mathcal{M}_{\I}(\D_1) = \mathcal{M}_{\I}(\D_2)$. \\
	
	\textit{Only if}: Suppose there exists a $k^* \in [K]$ such that $\D_{1,k^*}^{\I}$ and $\D_{2,k^*}^{\I}$ do not have the same skeleton and set of v-structures. Denote with $\D_{1,k^*}, \D_{2,k^*}$ the post intervention DAGs corresponding to the $k^*$th experimental setting. Note that $\D_{1,k^*}, \D_{2,k^*}$ have the same skeleta and sets of v-structures, otherwise $\D_{1,k^*}, \D_{2,k^*}$ would not be Markov equivalent and consequently $(\D_1, \D_2)$ would not be I-Markov equivalent given $\I$. Moreover, $\D_{1,k^*}^{\I}$ and $\D_{2,k^*}^{\I}$ have the same $\I$-edges, as these are determined by $T^{(k^*)}$. They thus differ for the sets of v-structures involving $\I$-edges. Suppose that $\zeta_{k^*} \to v \leftarrow w$ is a v-structure in  $\D_{1,k^*}^{\I}$, implying $w \notin T^{(k^*)}$ and $w \in P^{(k^*)}_v$, and that such v-structure is not present in $\D_{2,k^*}^{\I}$. As the modified DAGs $\D_{1,k^*}, \D_{2,k^*}$ have the same skeleton, then $\zeta_{k^*} \to v \to w \in \D_{2,k^*}^{\I}$. As the parent set of $v$ is fixed by the intervention, we would have that both $v \leftarrow w \in \D_{2,k}^\I$ and $v \rightarrow w \in \D_{2,k}^\I$, which implies a cycle and thus a contradiction with the validity assumption.
\end{proof}

We now focus on the transformational characterization of Theorem \ref{theorem:seqcovedge}.
\black

\begin{lemma}
	\label{lemma:simcovedge}
	Let $\D_1$ be a DAG containing the edge $u \to v$ and $\I$ a collection of targets and induced parent sets defining a general intervention. Let $\D_2$ be a graph identical to $\D_1$ except for the reversal of $u \to v$. $\D_1$ and $\D_2$ belong to the same I-Markov Equivalence class if and only if $u \to v$ is simultaneously covered;
	\begin{proof}
		\textit{If:}  Suppose $u \to v$ is simultaneously covered. Then, $u \to v$ is covered in $\D_1$ and, for any $k \neq 1$, $u \to v$ is either i) covered in $\D_{1,k}^\I$ or ii) $\{u,v\} \subseteq T^{(k)}$. In case i), we cannot have $u \in T^{(k)}$ and $v \notin T^{(k)}$ (or viceversa) by the definition of covered edge in the $\I$-DAG. The parent sets of the two nodes in the $\I$-DAGs are thus the same as in the observational DAG $\D$ and the proof follows from \citet[Lemma 1]{chickering:1995}. In case ii), both $u$ and $v$ are targets of intervention and reversing $u \to v$ in $\D_1$ does not cause any change in the parent sets of the nodes in the $\I$-DAGs. $u \to v$ thus has to be covered only in $\D$ and the proof follows again from \citet[Lemma 1]{chickering:1995}.
		
		\textit{Only if:} Suppose that $u \to v$ is not simultaneously covered. Then, at least one of the following statements is true: i) $u \to v$ is not covered in $\D_1$; ii) there exists $k^* \in [K]$ such that $u \to v$ is not covered in $\D_{1,k^*}^\I$ and $\{u,v\} \not\subseteq T^{(k^*)}$. In case i) the proof follows from \citet[Lemma 1]{chickering:1995}. In case ii), we have that, by the definition of a covered edge, $\pa_{\D_{1,k^*}^\I}(u) \cup u \neq \pa_{\D_{1,k^*}^\I}(v)$. In particular, either there exists at least one $z$ such that $z \in  \pa_{\D_{1,k^*}^\I}(u), z \notin  \pa_{\D_{1,k^*}^\I}(v)$, or there exists at least one node $w$ such that $w \in  \pa_{\D_{1,k^*}^\I}(v), w \notin  \pa_{\D_{1,k^*}^\I}(u)$. Consider the first case. Then, either (a) $z = \zeta_{k^*}$ or (b) $z \neq \zeta_{k^*}$. In case (a), note that $v \notin T^{(k^*)}$, by definition of $z$, so that $u \to v \in \D_{1,k^*}^\I$. As the intervention is defining the parent set of node $u$, we have that $\pa_{\D_{1,k^*}^\I}(u) = \pa_{\D_{2,k^*}^\I}(u)$. Moreover, the intervention is supposed to be valid, so that $v \notin \pa_{\D_{1,k^*}^\I}(u)$. We thus have that $u \to v \in \D_{1,k^*}^\I$, while both $u \to v, v \to u \notin \D_{2,k^*}^\I$. As $\D_{1,k^*}^\I, \D_{2,k^*}^\I$ differ for their skeleta, they can not be I-Markov equivalent. In case (b), instead, by the definition of a not simultaneously-covered edge, we have that $\zeta_{k^*}$ does not belong to the common parents of $\{u,v\}$. Hence, $\{u,v\} \not \subseteq T^{(k)}$ and $u \to v$ is covered in $\D_{1,k^*}^\I$ if and only if it is covered in $\D_1$ (and the same holds for $\D_2$). The proof thus follows from \citet[Lemma 1]{chickering:1995}. The proof for case $w \in  \pa_{\D_{1,k^*}^\I}(v), w \notin  \pa_{\D_{1,k^*}^\I}(u)$ follows by a similar reasoning.
	\end{proof}
\end{lemma}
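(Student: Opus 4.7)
The plan is to translate the problem entirely into a statement about the collection of $\I$-DAGs via Theorem \ref{theorem:imarkeq:skeleta}, and then apply the single-DAG covered-edge lemma of \citet[Lemma 1]{chickering:1995} to each $\D^\I_k$ individually. Recall that Theorem \ref{theorem:imarkeq:skeleta} states that $\D_1,\D_2$ are I-Markov equivalent if and only if $\D_{1,k}^\I$ and $\D_{2,k}^\I$ have the same skeleton and v-structures for every $k \in [K]$. Since each $\I$-DAG is just a DAG (with an additional $\I$-vertex), Chickering's result may be applied directly to it: reversing an edge preserves the skeleton and v-structures of a DAG if and only if that edge is covered in that DAG. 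The proof thus reduces to tracking, for each $k$, how the parent sets of $u$ and $v$ in $\D_{1,k}^\I$ differ from those in $\D_{2,k}^\I$, as a function of whether $u,v\in T^{(k)}$.

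For the \emph{if} direction, I assume $u\to v$ is simultaneously covered and split into cases based on $T^{(k)}$. For $k=1$ (observational), condition (1) of simultaneously covered gives covered in $\D$ and Chickering applies. For $k\neq 1$ with $\{u,v\}\subseteq T^{(k)}$, the parent sets of both $u$ and $v$ in $\D_{1,k}^\I$ and $\D_{2,k}^\I$ are fixed by the intervention (plus $\zeta_k$), so $\D_{1,k}^\I=\D_{2,k}^\I$ trivially. For $k\neq 1$ with $\{u,v\}\cap T^{(k)}=\emptyset$, the parent sets of $u,v$ in $\D_{i,k}^\I$ coincide with those in $\D_i$ (and $\zeta_k$ is not incident to either), so $u\to v$ is covered in $\D_{1,k}^\I$ iff it is covered in $\D_1$, which holds by condition (2); Chickering again applies. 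The remaining subcase, exactly one of $u,v$ in $T^{(k)}$, must be ruled out: if $u\in T^{(k)},v\notin T^{(k)}$, then $\zeta_k\in\pa_{\D_k^\I}(u)\setminus\pa_{\D_k^\I}(v)$, so the edge cannot be covered in $\D_k^\I$; and if $v\in T^{(k)}, u\notin T^{(k)}$, then either $u\notin P_v^{(k)}$ (so $u\to v$ is not even present in $\D_k^\I$) or $\zeta_k\in\pa_{\D_k^\I}(v)\setminus\pa_{\D_k^\I}(u)$, again preventing covered status. Since $\{u,v\}\not\subseteq T^{(k)}$, condition (2) of simultaneously covered would fail, a contradiction.

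For the \emph{only if} direction, I argue by contrapositive: suppose $u\to v$ is not simultaneously covered. Either condition (1) fails, in which case $u\to v$ is not covered in $\D_1=\D_{1,1}^\I$, and Chickering's lemma applied at $k=1$ shows $\D_{1,1}^\I$ and $\D_{2,1}^\I$ differ in skeleton or v-structures, so $\D_1,\D_2$ are not I-Markov equivalent by Theorem \ref{theorem:imarkeq:skeleta}. Otherwise, there exists $k^*\neq 1$ with $u\to v$ not covered in $\D_{1,k^*}^\I$ and $\{u,v\}\not\subseteq T^{(k^*)}$. If exactly one of $u,v$ lies in $T^{(k^*)}$, the casework mirrors the \emph{if} direction: I show directly that either the skeleton of $\D_{1,k^*}^\I$ and $\D_{2,k^*}^\I$ differ (when one of the two endpoints has its parent set overwritten by the intervention so that reversing $u\to v$ in $\D$ either removes or fails to add the corresponding edge in the $\I$-DAG), or a v-structure involving $\zeta_{k^*}$ is created/destroyed. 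If neither endpoint is in $T^{(k^*)}$, the parent sets of $u,v$ in $\D_{1,k^*}^\I$ coincide with those in $\D_1$, so being uncovered there is equivalent to being uncovered in $\D_1$, and Chickering's lemma yields that $\D_{1,k^*}^\I$ and $\D_{2,k^*}^\I$ are not Markov equivalent.

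The main obstacle is the exactly-one-target subcase, because the $\I$-DAGs $\D_{1,k}^\I$ and $\D_{2,k}^\I$ are constructed from $\D_1,\D_2$ by overwriting parent sets of targets, so reversing $u\to v$ in the observational DAG may have no visible effect on the $\I$-DAG (when the relevant endpoint is a target) or may even delete the edge entirely. One must carefully check, using the validity assumption (so that intervened parent sets cannot create cycles that would reintroduce a reversed edge), that the resulting $\I$-DAGs genuinely differ in skeleton or in a $\zeta_k$-v-structure, thereby blocking I-Markov equivalence. Once this bookkeeping is discharged, the rest of the argument reduces cleanly to invoking Chickering's lemma inside each $\I$-DAG separately.
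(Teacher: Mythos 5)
Your proof is correct and follows essentially the same route as the paper's: both reduce I-Markov equivalence to skeleton/v-structure agreement of the augmented DAGs, case-split on the membership of $u,v$ in $T^{(k)}$, invoke Chickering's covered-edge lemma wherever the parent sets of $u$ and $v$ pass unchanged into $\D_k^{\I}$, and use the validity assumption to exhibit an explicit skeleton difference when exactly one endpoint is a target. The only difference is organizational (you index cases by $|\{u,v\}\cap T^{(k)}|$ rather than by which clause of the definition of simultaneously covered holds or fails, and by whether the witness node is $\zeta_{k}$), which if anything makes the bookkeeping slightly cleaner.
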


Let $\Delta(\D_1, \D_2)$ denote the set of edges in $\D_1$ that have opposite orientation in $\D_2$ and $\Psi_v = \{u\g u \to v \in \Delta(\D_1, \D_2)\}$, the set of nodes that are parents of $v$ in $\D_1$ and children of $v$ in $\D_2$. Algorithm \ref{alg:findedge} was first presented in \citet{chickering:1995} to find a covered edge belonging to $\Delta(\D_1, \D_2)$ for two Markov Equivalent DAGs and it can be also adopted in our setting. 

\begin{algorithm}{
		\SetAlgoLined
		\vspace{0.1cm}
		\KwInput{DAGs $\D_1, \D_2$}
		\KwOutput{Edge from $\Delta(\D_1, \D_2)$}
		Perform a topological sort on the nodes in $\D_1$\;
		Let $v$ be the minimal node with respect to the sort for which $\Psi_v \neq \O$\;
		Let $u$ be the maximal node with respect to the sort for which $u \in \Psi_v$;
		
		\Return $u \to v$
	}
	\caption{Find-Edge \citep{chickering:1995}}
	\label{alg:findedge}
\end{algorithm}

\begin{lemma}
	\label{lemma:find:edge:1}
	Let $\D_1, \D_2$ be two I-Markov equivalent DAGs for $\I$, a collection of targets and induced parent sets defining a valid general intervention for both $\D_1, \D_2$.
	The edge $u \to v$ output from Algorithm \ref{alg:findedge} with input $\D_1, \D_2$ is simultaneously covered.
	\begin{proof}
		We know from Lemma 2 in \citet{chickering:1995} that $u \to v$ is covered in $\D_1$. 
		Suppose now that $u \to v$ is not simultaneously covered. Hence, there must exist at least one $k^* \neq 1$ such that $u \to v$ is not covered in $\D_{1,k^*}^\I$ and $\{u,v\} \not \subseteq T^{(k^*)}$. In particular, either i) $u \in T^{(k^*)}, v \notin T^{(k^*)}$ or ii) $v \in T^{(k^*)}, u \notin T^{(k^*)}$. 
		Suppose i). Note that $v \notin \pa_{\D_{1,k^*}^\I}(u)$ as the intervention is supposed to be valid. Hence, we have that $\zeta_{k^*} \rightarrow u \rightarrow v$ in $\D_{1,k^*}^\I$ and $\zeta_{k^*}\rightarrow u \not \leftarrow v$ in $\D_{2,k^*}^{\I}$. Because $\D_1, \D_2$ now differ for their skeleton in one of the $\I$-DAGs, they can not be I-Markov equivalent. 
		Suppose ii). In this case, we have that either (a) $u \notin \pa_{\D_{1,k^*}^\I}(v)$ or (b) $u \in \pa_{\D_{1,k^*}^\I}(v)$. In case (a), we have that $u \not \rightarrow v \leftarrow \zeta_{k^*}$ in $\D_{1,k^*}^\I$ and $u \leftarrow v \leftarrow \zeta_k$ in $\D_{2,k^*}^{\I}$, as the parents of $v$ remain invariant between $\D_2$ and $\D_{2,k^*}^{\I}$. The difference in skeleton implies that $\D_1, \D_2$ are not I-Markov equivalent, a contradiction. In case (b), for the same reason we would have $u \rightarrow v \leftarrow \zeta_{k^*}$ in $\D_{1,k^*}^\I$ and $u \leftrightarrow v \leftarrow \zeta_{k^*}$ in $\D_{2,k^*}^{\I}$ thus contradicting the fact that $\I$ is a valid collection of targets and induced parent sets. 
	\end{proof}
\end{lemma}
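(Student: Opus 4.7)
The plan is to verify the two defining conditions of a simultaneously covered edge separately: (a) $u \to v$ is covered in $\D_1$, and (b) for every $k \in \{2,\ldots,K\}$, either $u \to v$ is covered in $\D_{1,k}^\I$ or $\{u,v\} \subseteq T^{(k)}$. Condition (a) is essentially free: since $T^{(1)} = \emptyset$, I-Markov equivalence of $\D_1, \D_2$ specialises to ordinary Markov equivalence at $k=1$, so Chickering's Lemma 2 applies verbatim to the output of Find-Edge.

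For condition (b) I would argue by contradiction. Suppose some $k^* \neq 1$ witnesses a failure, meaning $u \to v$ is not covered in $\D_{1,k^*}^\I$ and $\{u,v\} \not\subseteq T^{(k^*)}$. Then at least one endpoint lies outside $T^{(k^*)}$; moreover, if \emph{both} endpoints are outside, the parent sets of $u$ and $v$ in $\D_{1,k^*}^\I$ coincide with those in $\D_1$, so "covered in $\D_{1,k^*}^\I$" reduces to "covered in $\D_1$", which we already have---contradicting the assumed failure. Hence exactly one of $u, v$ belongs to $T^{(k^*)}$. The guiding principle for the remaining two cases is the following accounting tool: when $j \in T^{(k)}$ its parents in $\D_{\cdot,k}^\I$ are dictated by $P^{(k)}_j$ and therefore \emph{agree} across $\D_{1,k}^\I$ and $\D_{2,k}^\I$; when $j \notin T^{(k)}$ its parents are inherited from the underlying DAG ($\D_1$ or $\D_2$), and these \emph{disagree} precisely by the reversal of $u \to v$.

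Applying this to each case I would track what happens to the edge $\{u,v\}$ in $\D_{1,k^*}^\I$ versus $\D_{2,k^*}^\I$. If $u \in T^{(k^*)}$ and $v \notin T^{(k^*)}$, validity of $(\D_1,\I)$ forces $v \notin P^{(k^*)}_u$ (otherwise $u \to v$ from $v$'s unchanged parents plus $v \to u$ from $P^{(k^*)}_u$ produce a two-cycle), so $u \to v$ is retained in $\D_{1,k^*}^\I$; in $\D_{2,k^*}^\I$, however, $u$'s parents are again $P^{(k^*)}_u \not\ni v$ and $v$'s parents are those of $\D_2$, which exclude $u$, so the edge vanishes and the two $\I$-DAGs have different skeleta, contradicting Theorem \ref{theorem:imarkeq:skeleta}. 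If instead $v \in T^{(k^*)}$ and $u \notin T^{(k^*)}$, I split on whether $u \in P^{(k^*)}_v$: if not, the edge is absent from $\D_{1,k^*}^\I$ but present in $\D_{2,k^*}^\I$ (inherited from $v \to u$ in $\D_2$), again giving different skeleta; if so, $\D_{2,k^*}^\I$ inherits $v \to u$ from $\D_2$ while simultaneously acquiring $u \to v$ from $P^{(k^*)}_v$, directly violating validity of $(\D_2,\I)$.

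The main obstacle will be the bookkeeping rather than any deep argument: one must consistently distinguish whether each endpoint's parent set is dictated by the intervention or inherited from the underlying DAG, and one must invoke validity on the correct DAG ($\D_1$ or $\D_2$) in each sub-case. Once these dependencies are tracked carefully, the skeletal criterion of Theorem \ref{theorem:imarkeq:skeleta} closes every branch.
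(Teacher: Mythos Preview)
Your proposal is correct and follows essentially the same route as the paper: invoke Chickering's Lemma~2 for coveredness in $\D_1$, then argue by contradiction on a bad $k^*$, splitting according to which endpoint lies in $T^{(k^*)}$ and reducing each branch to either a skeletal mismatch (handled by Theorem~\ref{theorem:imarkeq:skeleta}) or a validity violation. Your version is in fact slightly more complete, as you explicitly dispatch the case $u,v\notin T^{(k^*)}$ (where coveredness in $\D_{1,k^*}^\I$ reduces to coveredness in $\D_1$), which the paper's proof passes over silently when it jumps directly to the dichotomy (i)/(ii).
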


\begin{reptheorem}{theorem:seqcovedge}
	Let $\D_1, \D_2$ be two DAGs and $\I$ a collection of targets and induced parent sets defining a valid general intervention for both $\D_1$ and $\D_2$. $\D_1$ and $\D_2$ belong to the same I-Markov equivalence class if and only if there exists a sequence of edge reversals modifying $\D_1$ and such that:
	\begin{enumerate}
		\item Each edge reversed is simultaneously covered;
		\item After each reversal, $\{\D_{1,k}^{\I}\}_{k=1}^K$ are DAGs and $\D_1, \D_2$ belong to the same I-Markov equivalence class;
		\item After all reversals $\D_1 = \D_2$.
	\end{enumerate}
\end{reptheorem}

\begin{proof}
	\textit{If:} The proof follows immediately from the definition of the sequence. \\
	\textit{Only if:} We show that all the conditions are satisfied if we apply the procedure Find-Edge to $\D_1, \D_2$ to identify the next edge to reverse in $\D_1$. We know that $u \to v$, the output of Find-Edge, is a simultaneously covered edge (Lemma \ref{lemma:find:edge:1}). As it is simultaneously covered, the DAG obtained by reversing the edge still belongs to the same I-Markov equivalence class by Lemma \ref{lemma:simcovedge}. Moreover, $|\Delta(\D, \D^\prime)|$ decreases by one at each step. All the three conditions are thus satisfied. 
\end{proof}

\subsection*{A.2 Proofs of Section 2.3}
\label{appx:proofs:2:3}

We here report the proofs of the results presented in Section \ref{sec:identifiability:unknowninterventions}, concerning the identifiability of i) unknown general interventions and ii) unknown DAGs and general interventions. 

\begin{reptheorem}{theorem:imarkeq:skeleta2}
	Let $\D$ be a DAG and $\I_1, \I_2$ two collections of targets and induced parent sets. Then, $\I_1, \I_2$ belong to the same I-Markov equivalence class if and only if $\D^{\I_1}_{k}, \D^{\I_2}_{k}$ have the same skeleta and v-structures for all $k \in [K]$.
\end{reptheorem}

\begin{proof}
	\textit{If:}  As $\D_{k}^{\I_1}$ and $\D_k^{\I_2}$ have the same skeleton and same set of v-structures for all $k \in [K]$, they imply the same d-separation statements, thus implying the same sets of conditional independencies and invariances through the I-Markov property, so that $\M_{\I_1}(\D) = \M_{\I_2}(\D)$. \\
	\textit{Only if:} Suppose there exists $k^* \in [K]$ such that $\D_{k^*}^{\I_1}$ and $\D_{k^*}^{\I_2}$ do not have the same skeleton and set of v-structures. Denote with $\D_{1,k^*}, \D_{2,k^*}$ the post intervention DAGs corresponding to the $k^*$th experimental setting. Note that $\D_{1,k^*}, \D_{2,k^*}$ have the same skeleta and sets of v-structures, otherwise $\D_{1,k^*}, \D_{2,k^*}$ would not be Markov equivalent and consequently $\I_1, \I_2$ would not be I-Markov equivalent. $\D_{k^*}^{\I_1}$ and $\D_{k^*}^{\I_2}$ thus differ i) for their sets of $\I$-edges or ii) for v-structures involving the $\I$-edges. In case i), suppose without loss of generality that $\D_{k^*}^{\I_1}$ has an additional $\I$-edge $\zeta_{k^*} \to v$ which is not in $\D_{k^*}^{\I_2}$. Then $p_{k^*}(\bx_v \g \bx_{\pa_{\D_{1,k^*}}(v)}) \neq p_1(\bx_v \g \bx_{\pa_{\D_{1,k^*}}(v)})$, while $p_{k^*}(\bx_v \g \bx_{\pa_{\D_{2,k^*}}(v)}) = p_1(\bx_v \g \bx_{\pa_{\D_{2,k^*}}(v)})$ and $\I_1$, $\I_2$ can not be I-Markov equivalent. In case ii), suppose that $\zeta_{k^*} \to v \leftarrow w$ is a v-structure in  $\D_{k^*}^{\I_1}$, which implies $w \notin T^{(k^*)}_1$, and that such v-structure is not present in $\D_{k^*}^{\I_2}$. As the modified DAGs $\D_{1,k^*}, \D_{2,k^*}$ have the same skeleton, then $\zeta_{k^*} \to v \to w \in \D_{k^*}^{\I_2}$. However, because the parent set of $w$ is changing between the two DAGs and $w \notin T_1^{(k^*)}$, it means that $w \in T_2^{(k^*)}$, so that $\zeta_{k^*} \to w \in \D_{k^*}^{\I_2}$, inducing a difference in skeleton.
\end{proof}

We now focus on the transformational characterization of Theorem \ref{theorem:seqcovedge2}.  \black

\begin{lemma}
	\label{lemma:covered:targets}
	Let $\D$ be a DAG and $\I_1, \I_2$ two collections of targets and induced parent sets such that, for some $k \in [K]$, $\D_k^{\I_1}, \D_k^{\I_2}$ differ only for the reversal of $u \to v \in \D_{k}^{\I_1}$ becoming $v \to u \in \D_k^{\I_2}$. $\I_1$ and $\I_2$ belong to the same I-Markov equivalence class if and only if $u \to v$ is covered in $\D_k^{\I_1}$.
\end{lemma}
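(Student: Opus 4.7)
The plan is to reduce the claim to Chickering's classical single-edge-reversal result via Theorem \ref{theorem:imarkeq:skeleta2}. That theorem characterizes I-Markov equivalence of $\I_1$ and $\I_2$ (with $\D$ fixed) as the condition that $\D_j^{\I_1}$ and $\D_j^{\I_2}$ share the same skeleton and v-structures for every $j \in [K]$. Since the two collections agree on every index $j \neq k$, the question collapses to whether this matching holds for the single pair $\D_k^{\I_1}, \D_k^{\I_2}$.

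Skeleton invariance is automatic, since reversing one edge preserves the underlying undirected edge set. For the v-structure part, I would view $\D_k^{\I_1}$ and $\D_k^{\I_2}$ simply as DAGs on the augmented node set $V \cup \{\zeta_k\}$ and apply \citet[Lemma 1]{chickering:1995}, which states that two DAGs differing by a single edge reversal are Markov equivalent---equivalently, by Theorem \ref{th:markoveq}, share the same v-structures---if and only if the reversed edge is covered. This immediately yields both directions of the stated equivalence.

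The only item to verify before invoking Chickering's lemma is that $\D_k^{\I_2}$ is a legal $\I$-DAG. Since $\zeta_k$ carries only outgoing edges by construction, $u, v \neq \zeta_k$, so the reversal does not touch any $\I$-edge and $T^{(k)}$ is preserved across the two collections. A brief case analysis moreover shows that $\{u,v\} \subseteq T^{(k)}$: if, say, $u \notin T^{(k)}$, then $\pa_{\D_k^{\I_1}}(u) = \pa_\D(u)$ by the invariance of parent sets for non-intervened nodes, and the reversed edge $v \to u$ in $\D_k^{\I_2}$ would force $v \in \pa_\D(u)$, contradicting the presence of $u \to v$ in $\D_k^{\I_1}$. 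Once this bookkeeping is dispatched, the argument reduces to a one-line application of Chickering's lemma; no serious obstacle is anticipated, since the $\I$-DAGs behave as ordinary DAGs once the extra node $\zeta_k$ is treated on equal footing with the others.
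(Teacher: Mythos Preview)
Your argument is correct and follows essentially the same route as the paper: both directions reduce to \citet[Lemma 1]{chickering:1995} applied to the $\I$-DAGs. Your version is slightly cleaner in that you invoke Theorem \ref{theorem:imarkeq:skeleta2} up front and then treat $\D_k^{\I_1}, \D_k^{\I_2}$ as ordinary DAGs on $V \cup \{\zeta_k\}$, so Chickering's lemma handles the possibility that $\zeta_k$ is the offending extra parent uniformly; the paper instead singles out the case $z = \zeta_k$ explicitly in the ``only if'' direction and rules it out by the constraint $T_1^{(k)} = T_2^{(k)}$ before falling back on Chickering. Your bookkeeping paragraph showing $\{u,v\} \subseteq T^{(k)}$ is correct but not actually needed for the argument to go through, since the lemma's hypothesis already grants that $\I_2$ is a valid collection and hence that $\D_k^{\I_2}$ is a legal $\I$-DAG.
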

\begin{proof}
	\textit{If}: The proof is identical to \citet[Lemma 1]{chickering:1995}.
	
	\textit{Only if}: Notice that, by I-Markov equivalence, $\D_k^{\I_1}, \D_k^{\I_2}$ have the same skeleta and in particular the same $\I$-edges, so that $T_1^{(k)} = T_2^{(k)}$. Suppose now that $u \to v$ is not covered in $\D_k^{\I_1}$. Then $\pa_{\D_k^{\I_1}}(u) \cup u \neq \pa_{\D_k^{\I_1}}(v)$. In particular, either i) there exists some $z \in \pa_{\D_k^{\I_1}}(u), z \notin \pa_{\D_k^{\I_1}}(v)$ or ii) there exists some $w \in \pa_{\D_k^{\I_1}}(v), w \notin \pa_{\D_k^{\I_1}}(u)$. In case i), suppose that $z = \zeta_k$. In this case, $u \in T_1^{(k)}$ and $v \notin T_1^{(k)}$, so that $\pa_{\D_k^{\I_1}}(v) = \pa_\D(v)$. Because of the edge reversal, $\pa_{\D_k^{\I_1}}(v) \neq \pa_{\D_k^{\I_2}}(v)$, implying that $\pa_{\D_k^{\I_2}}(v) \neq \pa_\D(v)$ and $v \in T_2^{(k)}$, which is a contradiction as $T_1^{(k)} = T_2^{(k)}$. Hence, $z \neq \zeta_k$ and the proof follows from \citet[Lemma 1]{chickering:1995}. The proof for case ii) follows by a similar reasoning.
\end{proof} \black
\begin{lemma}
	\label{lemma:seq:edge:targets}
	Let $\D$ be a DAG and $\I_1, \I_2$ two collections of targets and induced parent sets belonging to the same I-Markov equivalence class. The edge $u \to v$ output from Algorithm \ref{alg:findedge} with input $\D_{k}^{\I_1}, \D_{k}^{\I_2}$ is covered.
\end{lemma}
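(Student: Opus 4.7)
The plan is to reduce this statement to the original Lemma 2 of \citet{chickering:1995} by viewing $\D_{k}^{\I_1}$ and $\D_{k}^{\I_2}$ as ordinary DAGs on the augmented node set $V \cup \{\zeta_k\}$ and checking that they are Markov equivalent in the classical sense. Since $\I_1$ and $\I_2$ lie in the same I-Markov equivalence class by hypothesis, Theorem \ref{theorem:imarkeq:skeleta2} guarantees that $\D_{k}^{\I_1}$ and $\D_{k}^{\I_2}$ share their skeleta and their sets of v-structures for every $k \in [K]$. Applying Theorem \ref{th:markoveq} (Verma--Pearl) to these two graphs, now regarded as ordinary DAGs, yields that they are Markov equivalent.

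Next, I would observe that the edge $u \to v$ returned by Algorithm \ref{alg:findedge} cannot be an $\I$-edge. Indeed, because the two $\I$-DAGs share the same skeleton, they must contain exactly the same collection of $\I$-edges, and in particular $T_1^{(k)} = T_2^{(k)}$; moreover each such $\I$-edge is oriented from $\zeta_k$ towards a target node by construction, so its orientation coincides in both graphs. Consequently $\I$-edges never enter $\Delta(\D_k^{\I_1},\D_k^{\I_2})$, and the output $u \to v$ is guaranteed to lie among the original vertices $V$.

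The conclusion then follows directly from Lemma 2 of \citet{chickering:1995}: applied to the Markov-equivalent pair $\D_k^{\I_1}, \D_k^{\I_2}$, this lemma states precisely that the output of Find-Edge is a covered edge of $\D_k^{\I_1}$. The only potential obstacle I foresee is a conceptual one rather than a technical one, namely confirming that Chickering's argument transfers unchanged to augmented DAGs containing the intervention node $\zeta_k$. This is immediate: $\zeta_k$ is a source in both $\I$-DAGs and Find-Edge depends only on the topological structure together with the symmetric difference $\Delta$, from which $\zeta_k$ has already been excluded by the preceding observation. Hence no modification of Chickering's proof is required, and the lemma is established.
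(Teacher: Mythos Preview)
Your proposal is correct and follows essentially the same route as the paper, which simply states that the proof is identical to that of Lemma~2 in \citet{chickering:1995}. You make the reduction explicit by first invoking Theorem~\ref{theorem:imarkeq:skeleta2} and Theorem~\ref{th:markoveq} to verify that $\D_k^{\I_1}$ and $\D_k^{\I_2}$ are Markov equivalent as ordinary DAGs on the augmented vertex set, and then observing that $\I$-edges cannot lie in $\Delta(\D_k^{\I_1},\D_k^{\I_2})$; both points are implicit in the paper's one-line proof, so your version is a more detailed rendering of the same argument rather than a different one.
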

\begin{proof}
	The proof is identical to the one of Lemma 2 in \citet{chickering:1995}. 
\end{proof}

\begin{reptheorem}{theorem:seqcovedge2}
	Let $\D$ be a DAG and $\I_1$, $\I_2$ two collection of targets and induced parent sets. Then, $\I_1$, $\I_2$ belong to the same I-Markov equivalence class if and only if for each $\I$-DAG $\D_k^{\I_1}$, $k \neq 1$, there exists a sequence of edge reversals modifying $\D_k^{\I_1}$ and such that:
	\begin{enumerate}
		\item Each edge reversed is covered;
		\item After each reversal, $\D_{k}^{\I_1}$ is a DAG and $\I_1$, $\I_2$ belong to the same I-Markov equivalence class;
		\item After all reversals $\D_{k}^{\I_1} = \D_{k}^{\I_2}$.
	\end{enumerate}
\end{reptheorem}

\begin{proof}
	\textbf{If: } It follows immediately from the definition of the sequence. \\
	\textbf{Only if: } We show that all the conditions are satisfied if we apply the procedure Find-Edge with input $\D_k^{\I_1}$ and $\D_k^{\I_2}$, for all $k \neq 1$. We know that $u \to v$, output of Find-Edge is covered (Lemma \ref{lemma:seq:edge:targets}) and that the $\I$-DAG obtained by reversing $u \to v$ corresponds to a collection of targets and induced parent sets which is I-Markov equivalent to the initial one (Lemma \ref{lemma:covered:targets}). At each step, $\Delta(\D_k^{\I_1}, \D_k^{\I_2})$ decreases by one. All the three conditions are thus satisfied. 
\end{proof}

We now consider the set of results concerning the joint identifiability of a pair $(\D, \I)$. \black

\begin{reptheorem}{theorem:imarkeq:skeleta3}
	Let $\D_1, \D_2$ be two DAGs and $\I_1, \I_2$ two collections of targets and induced parent sets defining a valid general intervention for $\D_1, \D_2$ respectively. $(\D_1, \I_1), (\D_2, \I_2)$ belong to the same I-Markov equivalence class if and only if $\D^{\I_1}_{1,k}, \D^{\I_2}_{2,k}$ have the same skeleta and v-structures for all $k \in [K]$.
\end{reptheorem}

\begin{proof}
	\textit{If: } As $\D_{k}^{\I_1}$ and $\D_k^{\I_2}$ have the same skeleta and set of v-structures for all $k \in [K]$, they imply the same d-separation statements, thus implying the same sets of conditional independencies and invariances through the I-Markov property, so that $\M_{\I_1}(\D) = \M_{\I_2}(\D)$. \\
	\textit{Only if:} Suppose there exists $k^* \in [K]$ such that $\D_{1,k^*}^{\I_1}$ and $\D_{2,k^*}^{\I_2}$ do not have the same skeleton and set of v-structures. Denote with $\D_{1,k^*}, \D_{2,k^*}$ the post intervention DAGs corresponding to the $k^*$th experimental setting. Note that $\D_{1,k^*}, \D_{2,k^*}$ have the same skeleta and sets of v-structures, otherwise $\D_{1,k^*}, \D_{2,k^*}$ would not be Markov equivalent and consequently $(\D_1, \I_1)$, $(\D_2, \I_2)$ would not be I-Markov equivalent. $\D_{1,k^*}^{\I_1}$ and $\D_{2,k^*}^{\I_2}$ thus differ i) for their sets of $\I$-edges or ii) for v-structures involving the $\I$-edges. In case i), suppose without loss of generality that $\D_{1,k^*}^{\I_1}$ has an additional $\I$-edge $\zeta_{k^*} \to v$ which is not in $\D_{2,k^*}^{\I_2}$. Then $p_{k^*}(\bx_v \g \bx_{\pa_{\D_{1,k^*}}(v)}) \neq p_1(\bx_v \g \bx_{\pa_{\D_{1,k^*}}(v)})$, while $p_{k^*}(\bx_v \g \bx_{\pa_{\D_{2,k^*}}(v)}) = p_1(\bx_v \g \bx_{\pa_{\D_{2,k^*}}(v)})$ and $(\D_1, \I_1)$, $(\D_2, \I_2)$ can not be I-Markov equivalent. In case ii), suppose that $\zeta_{k^*} \to v \leftarrow w$ is a v-structure in  $\D_{1,k^*}^{\I_1}$ which is not present in $\D_{2,k^*}^{\I_2}$. As the modified DAGs $\D_{1,k^*}, \D_{2,k^*}$ have the same skeleton, then $\zeta_{k^*} \to v \to w \in \D_{2,k^*}^{\I_2}$. We thus have that $w$ is d-separated from $\zeta_{k^*}$ in $\D_{1,k}^{\I_1}$, but not in $\D_{2,k}^{\I_2}$. By the I-Markov property, it follows that $p_{k^*}(\bx_w
	) = p_1(\bx_w
	)$, while $p_{k^*}(\bx_w
	) \neq p_1(\bx_w
	)$ and $(\D_1, \I_1)$, $(\D_2, \I_2)$ can not be I-Markov equivalent.
\end{proof} \black

\begin{lemma}
	Let $\D_1, \D_2$ be two DAGs and $\I_1, \I_2$ two collections of targets and induced parent sets defining a valid general intervention for both $\D_1, \D_2$. Suppose in addition that $(\D_1, \I_1)$, $(\D_2, \I_2)$  differ only for the reversal of $u \to v \in \D_1$ becoming $v \to u \in \D_2$. $(\D_1, \I_1), (\D_2, \I_2)$ belong to the same I-Markov equivalence class if and only if $u \to v$ is simultaneously covered in $\D_1$.
	\label{lemma:coveredsimtargets1}
\end{lemma}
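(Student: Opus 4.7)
My plan is to prove both directions by reducing to the graphical characterization of I-Markov equivalence given in Theorem \ref{theorem:imarkeq:skeleta3}, and invoking \citet[Lemma 1]{chickering:1995} for the Markov-equivalence facts about covered edge reversals in a single DAG. Since the hypothesis is that the pairs differ only by the reversal, I will treat $\I_1 = \I_2 =: \I$, so the work reduces to tracking, for each $k \in [K]$, how reversing $u \to v$ in $\D_1$ affects the skeleton and v-structures of $\D_{1,k}^{\I}$ versus $\D_{2,k}^{\I}$.

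For the \textbf{if} direction, suppose $u \to v$ is simultaneously covered. For $k = 1$ we have $\D_{i,1}^{\I} = \D_i$, so covered-ness of $u \to v$ in $\D_1$ combined with \citet[Lemma 1]{chickering:1995} yields equality of skeleta and v-structures. For $k \neq 1$ I split on the two allowed cases. If $\{u,v\} \subseteq T^{(k)}$, the intervention fixes both parent sets, so $\D_{1,k} = \D_{2,k}$ and hence $\D_{1,k}^{\I} = \D_{2,k}^{\I}$. Otherwise $u \to v$ is covered in $\D_{1,k}^{\I}$; here a short check using the definition of covered edge (both $\pa_{\D_{1,k}^{\I}}(u) \cup \{u\}$ and $\pa_{\D_{1,k}^{\I}}(v)$ must agree on membership of $\zeta_k$) shows that either both or neither of $u,v$ lies in $T^{(k)}$, and the ``neither'' subcase reduces covered-ness in the $\I$-DAG to covered-ness in $\D_1$, so Chickering's lemma again applies. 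Aggregating over $k$, Theorem \ref{theorem:imarkeq:skeleta3} yields I-Markov equivalence.

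For the \textbf{only if} direction I argue by contrapositive. If $u \to v$ is not covered in $\D_1$, then $\D_1$ and $\D_2$ differ in some v-structure by the classical lemma, and since $\D_{i,1}^{\I} = \D_i$ this already violates Theorem \ref{theorem:imarkeq:skeleta3}. Otherwise, assume $u \to v$ is covered in $\D_1$ but there is some $k \neq 1$ such that $u \to v$ is not covered in $\D_{1,k}^{\I}$ and $\{u,v\} \not\subseteq T^{(k)}$. A parity argument excludes the case $u,v \notin T^{(k)}$, since then the $\I$-DAG parent sets coincide with those in $\D_1$ and covered-ness transfers. The remaining two subcases, $u \in T^{(k)},\, v \notin T^{(k)}$ and $v \in T^{(k)},\, u \notin T^{(k)}$, are the \emph{main obstacle} because they require tracking carefully how the $u$–$v$ edge disappears from one $\I$-DAG under the validity hypothesis. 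In the first subcase, $\D_{1,k}^{\I}$ contains $u \to v$ (inherited from $\D_1$ because $v \notin T^{(k)}$), while in $\D_{2,k}^{\I}$ validity forces $v \notin \pa_{\D_{2,k}}(u) = P_u^{(k)}$ and of course $u \notin \pa_{\D_{2,k}}(v) = \pa_{\D_2}(v)$, so the $u$–$v$ edge vanishes. In the symmetric subcase, $\D_{2,k}^{\I}$ inherits $v \to u$, and validity of $\D_{2,k}$ forces $u \notin P_v^{(k)}$, so $\D_{1,k}^{\I}$ contains no $u$–$v$ edge. In both subcases the skeleta of $\D_{1,k}^{\I}$ and $\D_{2,k}^{\I}$ differ, contradicting Theorem \ref{theorem:imarkeq:skeleta3} and hence I-Markov equivalence.
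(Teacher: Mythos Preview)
Your proof is correct and follows essentially the same route as the paper: both arguments begin by observing that the hypothesis ``differ only for the reversal of $u\to v$ in $\D_1$'' forces $\I_1=\I_2$, reducing the statement to the fixed-$\I$ case. The paper then simply cites Lemma~\ref{lemma:simcovedge} and stops, whereas you re-derive the content of that lemma from scratch via Theorem~\ref{theorem:imarkeq:skeleta3} and a case analysis on which of $u,v$ lies in $T^{(k)}$. Your decomposition (by target membership of $u,v$) is slightly different from the paper's proof of Lemma~\ref{lemma:simcovedge} (which branches on the witness $z$ or $w$ of non-coveredness), but both arrive at the same skeleton-difference contradictions.

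One small imprecision: in the subcase $u\in T^{(k)},\,v\notin T^{(k)}$ you write that ``validity forces $v\notin\pa_{\D_{2,k}}(u)=P_u^{(k)}$'', suggesting validity for $\D_2$. Validity for $\D_2$ alone does not obviously exclude $v\in P_u^{(k)}$; the clean argument uses validity for $\D_1$, since $u\to v\in\D_{1,k}$ (inherited from $\D_1$ as $v\notin T^{(k)}$) would form a 2-cycle with $v\to u$. The hypothesis gives validity for both DAGs, so the conclusion stands, but the justification should point to $\D_1$.
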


\begin{proof}
	By construction, we have that $\I_1 = \I_2$. Consequently, the statement and its proof coincide with those of Lemma \ref{lemma:simcovedge}.
\end{proof}

\begin{lemma}
	Let $\D_1, \D_2$ be two DAGs and $\I_1, \I_2$ two collections of targets and induced parent sets defining a valid general intervention for both $\D_1, \D_2$. Suppose in addition that $(\D_1, \I_1)$, $(\D_2, \I_2)$  differ only for the reversal of $u \to v \in \D_{1,k^*}^{\I_1}$ becoming $v \to u \in \D_{2,k^*}^{\I_2}$, for some $k^* \neq 1$. $(\D_1, \I_1), (\D_2, \I_2)$ belong to the same I-Markov equivalence class if and only if $u \to v$ is covered in $\D_{1,k}^{\I_1}$.
	\label{lemma:coveredsimtargets2}
\end{lemma}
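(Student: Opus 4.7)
The plan is to reduce this lemma to the already-established Lemma \ref{lemma:covered:targets}, which handles the same edge-reversal scenario but for a single fixed DAG and two varying collections of targets and induced parent sets. The key observation is that the hypothesis ``$(\D_1, \I_1), (\D_2, \I_2)$ differ only for the reversal of $u \to v$ in $\D_{1,k^*}^{\I_1}$'' forces the two pairs to share the same observational DAG and the same target collection, so that the entire discrepancy is encoded in a single change to $\mathcal{P}^{(k^*)}$.

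First I would justify this reduction. Since the reversal is confined to the $k^*$-th augmented DAG and does not propagate to $\D$ or to any other $\D_k^{\I}$, neither $u$ nor $v$ can coincide with the auxiliary vertex $\zeta_{k^*}$ (otherwise the reversed arrow would violate the $\I$-edge orientation convention). Moreover, both endpoints must belong to $T^{(k^*)}$: if, say, $u \notin T^{(k^*)}$, then $\pa_{\D_{k^*}}(u) = \pa_{\D}(u)$, so changing $u$'s parent set in $\D_{k^*}^{\I}$ to include $v$ would force the reversal to appear in $\D$ as well, contradicting the ``differ only'' hypothesis. An identical argument applies to $v$. Hence $\D_1 = \D_2 =: \D$, the targets coincide in all experimental settings, and $\I_1, \I_2$ differ only in the induced parent sets $P^{(k^*)}_u$ and $P^{(k^*)}_v$.

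Next I would translate the I-Markov equivalence of the pairs into an equivalence of $\I_1, \I_2$ given the common DAG $\D$. By Definition \ref{def:dtpimarkoveq}, $(\D_1,\I_1)$ and $(\D_2,\I_2)$ are I-Markov equivalent iff $\M_{\I_1}(\D) = \M_{\I_2}(\D)$, which is precisely the condition defining I-Markov equivalence of $\I_1, \I_2$ in Definition \ref{def:tpimarkoveq}. Since all other $\I$-DAGs $\D_k^{\I_1}, \D_k^{\I_2}$ for $k \neq k^*$ coincide, the remaining $\I$-DAGs $\D_{k^*}^{\I_1}$ and $\D_{k^*}^{\I_2}$ differ only by the reversal of $u \to v$, so the hypothesis of Lemma \ref{lemma:covered:targets} is fulfilled. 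Applying that lemma directly yields that $\I_1, \I_2$, and therefore $(\D_1,\I_1), (\D_2,\I_2)$, are I-Markov equivalent iff $u \to v$ is covered in $\D_{k^*}^{\I_1}$, as claimed.

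The main, and in fact only, obstacle is the first step: carefully arguing that the ``differ only'' hypothesis forces $\D_1 = \D_2$ and constrains both endpoints to lie in $T^{(k^*)}$. Once that reduction is in place, the rest is a direct appeal to the previously proved lemma, mirroring the relationship between Lemma \ref{lemma:coveredsimtargets1} and Lemma \ref{lemma:simcovedge} in the ``simultaneously covered'' case.
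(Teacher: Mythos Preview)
Your proposal is correct and follows exactly the same route as the paper: observe that the hypothesis forces $\D_1=\D_2$ and then invoke Lemma~\ref{lemma:covered:targets}. The paper's own proof is literally two sentences (``By construction, $\D_1 = \D_2$. Consequently, the statement and its proof coincide with those of Lemma~\ref{lemma:covered:targets}''), so your additional arguments about $u,v\neq\zeta_{k^*}$ and $\{u,v\}\subseteq T^{(k^*)}$ are extra detail rather than a different strategy; note in particular that $\D_1=\D_2$ already follows immediately from the ``differ only in the $k^*$-th $\I$-DAG with $k^*\neq 1$'' hypothesis (since the first $\I$-DAG \emph{is} the observational DAG), without needing to pass through the target-membership argument.
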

\begin{proof}
	By construction, $\D_1 = \D_2$. Consequently, the statement and its proof coincide with those of Lemma \ref{lemma:covered:targets}.
\end{proof}

\begin{reptheorem}{theorem:seqcovedge3}
	Let $\D_1, \D_2$ be two DAGs and $\I_1$, $\I_2$ two collections of targets and induced parent sets defining a valid general intervention for both $\D_1, \D_2$. $(\D_1, \I_1)$, $(\D_2, \I_2)$ belong to the same I-Markov equivalence class if and only if there exists a sequence of edge reversals modifying the collection of $\I$-DAGs $\{\D_{1,k}^{\I_1}\}_{k=1}^K$ and such that:
	\begin{enumerate}
		\item Each edge reversed in $\D_1$ is simultaneously covered;
		\item Each edge reversed in $\D_{1,k}^{\I_1}$, for $k \neq 1$, is covered;
		\item After each reversal, $\{\D_{1,k}^{\I_1}\}_{k=1}^K$ are DAGs and $(\D_1,\I_1)$, $(\D_2, \I_2)$ belong to the same I-Markov equivalence class;
		\item After all reversals $\D_{1,k}^{\I_1} = \D_{2,k}^{\I_2}$ for each $k \in [K]$.
	\end{enumerate}
\end{reptheorem}
\begin{proof}
	One can construct a sequence of edge reversals satisfying all the conditions by first using Algorithm \ref{alg:findedge} with inputs $\D_{1,k}^{\I_1}, \D_{1,k}^{\I_2}$ for $k \in [K], k \neq 1$, and then using the same Algorithm with inputs $\D_1, \D_2$. For each of these two steps, the proofs follow the ones of the corresponding Theorems \ref{theorem:seqcovedge} and \ref{theorem:seqcovedge2}, using Lemmas \ref{lemma:coveredsimtargets1} and \ref{lemma:coveredsimtargets2}.
\end{proof}

\black


\section*{Appendix B. Proofs of Section 3}

This section contains the proofs of the main results presented in Section 3 of
the paper. The numbering of such propositions and theorems in this section is the same as
in the main text. 

\begin{repproposition}{prop:marglik}
	Given any complete DAG $C$ and a data matrix $\bX$ collecting observations from $K$ different experimental settings, for any valid pair $(\D, \I)$ Assumptions \textbf{A1}-\textbf{A3} imply
	\begin{align}
		\label{eq:marg:like:general}
		\begin{split}
			p\left(\bX \g \D, \I \right) & =
			\prod_{j=1}^q \left\{ \frac{p\left(\boldsymbol{X}_{\cdot \fa_{\D}(j)}^{\mathcal{A}(j)} \g C \right)}{p\left(\boldsymbol{X}_{\cdot \pa_{\D}(j)}^{\mathcal{A}(j)} \g C \right)} \prod_{k: j \in T^{(k)}} \frac{p\left(\boldsymbol{X}_{\cdot \fa_{\D_k}(j)}^{(k)} \g C \right)}{p\left(\boldsymbol{X}_{\cdot \pa_{\D_k}(j)}^{(k)} \g C \right)} \right\},
		\end{split}
	\end{align}
	where $p\big(\boldsymbol{X}_{\cdot B}^{\mathcal{A}(j)} \g C \big)$ is the marginal data distribution computed under any complete DAG $C$.
\end{repproposition}

\begin{proof}
	Using Equations \eqref{eq:marglikint} and \eqref{eq:likelihoodfinal}, together with Assumption \textbf{A3}, we can write
	\begin{align*}
		p\big(\bX\g\D, \I\big) & = \int{p\big(\bX \g \Theta^{(\mathcal{K})}, \D, \I\big)\,p\big(\Theta^{(\mathcal{K})}\g\D, \I\big)\ d\Theta^{(\mathcal{K})}}\\
		& = \int \prod_{j=1}^q \Bigg\{ p\left(\boldsymbol{X}_{\cdot j}^{\mathcal{A}(j)} \g \boldsymbol{X}_{\cdot\pa_{\D}(j)}^{\mathcal{A}(j)}, \Theta_j^{(1)}, \D \right) 
		\prod_{k: j \in T^{(k)}} p\left(\boldsymbol{X}_{\cdot j}^{(k)} \g \boldsymbol{X}_{\cdot\pa_{\D_k}(j)}^{(k)}, \Theta_j^{(k)}, \D_k \right)  \\ & \qquad \qquad \quad \quad p\left(\Theta_j^{(1)}\g\D\right)\prod_{k:j \in T^{(k)}} p\left(\Theta_j^{(k)}\g\D_k\right)\Bigg\} \ d\Theta^{(\mathcal{K})} \\
		& = \prod_{j=1}^q \Bigg\{ \int p\left(\boldsymbol{X}_{\cdot j}^{\mathcal{A}(j)} \g \boldsymbol{X}_{\cdot\pa_{\D}(j)}^{\mathcal{A}(j)}, \Theta_j^{(1)}, \D \right) p\left(\Theta_j^{(1)}\g\D\right) d \Theta_j^{(1)} \\ & \qquad \qquad \prod_{k: j \in T^{(k)}} \int p\left(\boldsymbol{X}_{\cdot j}^{(k)} \g \boldsymbol{X}_{\cdot\pa_{\D_k}(j)}^{(k)}, \Theta_j^{(k)}, \D_k \right)p\left(\Theta_j^{(k)}\g\D_k\right)\ d\Theta^{(\mathcal{K})} \Bigg\}.
	\end{align*}
	By Assumption \textbf{A2} (likelihood and prior modularity), it follows that
	\begin{align*}
		p\big(\bX\g\D,\I\big)
		&  = \prod_{j=1}^q \Bigg\{ \int  p\left(\boldsymbol{X}_{\cdot j}^{\mathcal{A}(j)} \g \boldsymbol{X}_{\cdot\pa_{C_j}(j)}^{\mathcal{A}(j)}, \Theta_j^{(1)}, C_j \right) p\left(\Theta_j^{(1)}\g C_j\right) \ d\Theta^{(1)} \\ & \qquad \qquad \prod_{k: j \in T^{(k)}} \int p\left(\boldsymbol{X}_{\cdot j}^{(k)} \g \boldsymbol{X}_{\cdot\pa_{C_{j,k}}(j)}^{(k)}, \Theta_j^{(k)}, C_{j,k} \right)p\left(\Theta_j^{(k)}\g C_{j,k}\right) \ d\Theta^{(k)} \Bigg\} \\
		& = \prod_{j=1}^q \Bigg\{ p\left(\boldsymbol{X}_{\cdot j}^{\mathcal{A}(j)} \g \boldsymbol{X}_{\cdot\pa_{C_j}(j)}^{\mathcal{A}(j)}, C_j \right) \prod_{k:j\in T^{(k)}} p\left(\boldsymbol{X}_{\cdot j}^{(k)} \g \boldsymbol{X}_{\cdot\pa_{C_{j,k}}(j)}^{(k)}, C_{j,k} \right) \Bigg\}.
	\end{align*}
	Now by Assumption \textbf{A1} (complete model equivalence) and recalling that $\pa_{C_j}(j) = \pa_\D(j)$ and $\pa_{C_{j,k}}(j) = \pa_{\D_k}(j)$, we obtain
	\begin{align*}
		p\big(\bX\g\D,\I\big) &  = \prod_{j=1}^q \Bigg\{ p\left(\boldsymbol{X}_{\cdot j}^{\mathcal{A}(j)} \g \boldsymbol{X}_{\cdot\pa_{\D}(j)}^{\mathcal{A}(j)}, C \right) \prod_{k:j\in T^{(k)}} p\left(\boldsymbol{X}_{\cdot j}^{(k)} \g \boldsymbol{X}_{\cdot\pa_{\D_k}(j)}^{(k)}, C \right) \Bigg\} \\
		& = \prod_{j=1}^q \left\{ \frac{p\left(\boldsymbol{X}_{\cdot \fa_{\D}(j)}^{\mathcal{A}(j)} \g C \right)}{p\left(\boldsymbol{X}_{\cdot \pa_{\D}(j)}^{\mathcal{A}(j)} \g C \right)} \prod_{k: j \in T^{(k)}} \frac{p\left(\boldsymbol{X}_{\cdot \fa_{\D_k}(j)}^{(k)} \g C \right)}{p\left(\boldsymbol{X}_{\cdot \pa_{\D_k}(j)}^{(k)} \g C \right)} \right\},
	\end{align*}
	which completes the proof.
\end{proof}

\begin{reptheorem}{thm:scoreequivalence}[Score equivalence]
	Let $\D_1, \D_2$ be two DAGs and $\I_1, \I_2$ two collections of targets and induced parent sets defining a valid general intervention for $\D_1, \D_2$ respectively. If $(\D_1,\I_1)$ and $(\D_2,\I_2)$ are I-Markov equivalent, then Assumptions A1-A3 imply 
	\begin{equation}
		p(\bX\g\D_1, \I_1) = p(\bX\g\D_2, \I_2).
	\end{equation}
\end{reptheorem}

\begin{proof}
	By Theorem \ref{theorem:seqcovedge3}, there exists a sequence of edge reversals applied to either $\D_1$ or $\D_{1,k}^I, k \neq 1$ and such that, at the end of the sequence $(\D_1, \I_1) = (\D_2, \I_2)$. Let for simplicity $(\D, \I)$ be the pair of DAG and collection of targets and induced parent sets obtained at a given step of the sequence. We can consider the Bayes factor between $(\D, \I)$ and $(\widetilde \D, \widetilde \I)$, the corresponding pair obtained at the subsequent step. These two pairs differ for either i) a simultaneously covered edge reversal or ii) a covered edge reversal in one of the $\I$-DAGs $\D_k^\I, k \neq 1$. In case i), suppose that $\D, \widetilde \D$ differ for the simultaneously covered edge $u \to v \in \D$, which is reversed in $\widetilde \D$, while $\I = \widetilde \I$. Then 
	\begin{align*}
		\frac{p\big(\bX\g\D, \I\big)}{p\big(\bX\g\widetilde\D, \widetilde\I\big)}
		& = \left(\prod_{j=1}^q \left\{ \frac{p\left(\boldsymbol{X}_{\cdot \fa_{\D}(j)}^{\mathcal{A}(j)} \g C \right)}{p\left(\boldsymbol{X}_{\cdot \pa_{\D}(j)}^{\mathcal{A}(j)} \g C \right)} \prod_{k: j \in T^{(k)}} \frac{p\Big(\boldsymbol{X}_{\cdot \fa_{\D_{1,k}}(j)}^{(k)} \g C \Big)}{p\Big(\boldsymbol{X}_{\cdot \pa_{\D_{1,k}}(j)}^{(k)} \g C \Big)} \right\}\right)\cdot \\
		& \, \cdot \,\, \left(\prod_{j=1}^q \left\{ \frac{p\left(\boldsymbol{X}_{\cdot \fa_{\widetilde\D}(j)}^{\mathcal{A}(j)} \g C \right)}{p\left(\boldsymbol{X}_{\cdot \pa_{\widetilde\D}(j)}^{\mathcal{A}(j)} \g C \right)} \prod_{k: j \in \widetilde T^{(k)}} \frac{p\Big(\boldsymbol{X}_{\cdot \fa_{\widetilde\D_k}(j)}^{(k)} \g C \Big)}{p\Big(\boldsymbol{X}_{\cdot \pa_{\widetilde\D_k}(j)}^{(k)} \g C \Big)} \right\}\right)^{-1} \\
		& = \left(\prod_{j=1}^q \frac{p\left(\boldsymbol{X}_{\cdot \fa_{\D}(j)}^{\mathcal{A}(j)} \g C \right)}{p\left(\boldsymbol{X}_{\cdot \pa_{\D}(j)}^{\mathcal{A}(j)} \g C \right)}\right)\cdot \left(\prod_{j=1}^q \frac{p\left(\boldsymbol{X}_{\cdot \fa_{\widetilde\D}(j)}^{\mathcal{A}(j)} \g C \right)}{p\left(\boldsymbol{X}_{\cdot \pa_{\widetilde\D}(j)}^{\mathcal{A}(j)} \g C \right)}\right)^{-1} \\
		& = \left(\frac{p\left(\boldsymbol{X}_{\cdot \fa_{\D}(u)}^{\mathcal{A}(u)} \g C \right)}{p\left(\boldsymbol{X}_{\cdot \pa_{\D}(u)}^{\mathcal{A}(u)} \g C \right)}\frac{p\left(\boldsymbol{X}_{\cdot \fa_{\D}(v)}^{\mathcal{A}(v)} \g C \right)}{p\left(\boldsymbol{X}_{\cdot \pa_{\D}(v)}^{\mathcal{A}(v)} \g C \right)}\right) \cdot
		\left(\frac{p\left(\boldsymbol{X}_{\cdot \fa_{\widetilde\D}(u)}^{\mathcal{A}(v)} \g C \right)}{p\left(\boldsymbol{X}_{\cdot \pa_{\widetilde\D}(u)}^{\mathcal{A}(v)} \g C \right)}\frac{p\left(\boldsymbol{X}_{\cdot \fa_{\widetilde\D}(v)}^{\mathcal{A}(v)} \g C \right)}{p\left(\boldsymbol{X}_{\cdot \pa_{\widetilde\D}(v)}^{\mathcal{A}(v)} \g C \right)}\right)^{-1}.
	\end{align*}
	Because $\D$ and $\widetilde\D$ differ for the reversal of the simultaneously covered edge $u\rightarrow v$, then the following equalities holds:
	\begin{equation}
		\label{eq:set:equal:covered:edge}
		\pa_{\D}(u)=\pa_{\widetilde\D}(v), \quad \fa_{\D}(v)=\fa_{\widetilde\D}(u), \quad 
		\fa_{\D}(u)=\pa_{\D}(v), \quad \fa_{\widetilde\D}(v)=\pa_{\widetilde\D}(u).
	\end{equation}
	Therefore, the ratio simplifies to 1 if $A(u) = A(v)$. To prove this, notice that for any $j \in [q]$
	\begin{align*}
		\A(j)  \coloneqq & \, \big\{k \in [K]: j \notin T^{(k)}\big\} \\
		= & \, \big\{k \in [K]: \zeta_k \notin \pa_{\D_k^\I}(j)\big\}.
	\end{align*}
	Suppose now $\mathcal{A}(u) \neq \mathcal{A}(v)$. As a consequence, there exists $k \in [K]$ such that $\zeta_k \in \pa_{\D_k^\I}(u)$, while $\zeta_k \notin \pa_{\D_k^\I}(v)$, or viceversa.
	In both cases, this however would imply that $u \to v$ is not simultaneously covered, which is a contradiction, and therefore $\mathcal{A}(u) = \mathcal{A}(v)$.
	In case ii), suppose that, for some $k \in [K]$, $\D_k^\I, \widetilde \D_k^\I$ differ for the covered edge $u \to v \in \D_k^\I$ , which is reversed in $\widetilde \D_{k}^\I$.  Then $\D = \widetilde \D$ and
	\begin{align*}
		\frac{p\big(\bX\g \D, \I\big)}{p\big(\bX\g\widetilde\D, \widetilde\I\big)}
		& = \left(\prod_{j=1}^q \left\{ \frac{p\left(\boldsymbol{X}_{\cdot \fa_{\D}(j)}^{\mathcal{A}(j)} \g C \right)}{p\left(\boldsymbol{X}_{\cdot \pa_{\D}(j)}^{\mathcal{A}(j)} \g C \right)} \prod_{k: j \in T^{(k)}} \frac{p\Big(\boldsymbol{X}_{\cdot \fa_{\D_k}(j)}^{(k)} \g C \Big)}{p\Big(\boldsymbol{X}_{\cdot \pa_{\D_k}(j)}^{(k)} \g C \Big)} \right\}\right)\cdot \\
		& \, \cdot \,\, \left(\prod_{j=1}^q \left\{ \frac{p\left(\boldsymbol{X}_{\cdot \fa_{\widetilde\D}(j)}^{\mathcal{A}(j)} \g C \right)}{p\left(\boldsymbol{X}_{\cdot \pa_{\widetilde\D}(j)}^{\mathcal{A}(j)} \g C \right)} \prod_{k: j \in \widetilde T^{(k)}} \frac{p\Big(\boldsymbol{X}_{\cdot \fa_{\widetilde\D_k}(j)}^{(k)} \g C \Big)}{p\Big(\boldsymbol{X}_{\cdot \pa_{\widetilde\D_k}(j)}^{(k)} \g C \Big)} \right\}\right)^{-1} \\
		& = \left(\frac{p\left(\boldsymbol{X}_{\cdot \fa_{\D_k}(u)}^{(k)} \g C \right)}{p\left(\boldsymbol{X}_{\cdot \pa_{\D_k}(u)}^{(k)} \g C \right)}\frac{p\left(\boldsymbol{X}_{\cdot \fa_{\D_k}(v)}^{(k)} \g C \right)}{p\left(\boldsymbol{X}_{\cdot \pa_{\D_k}(v)}^{(k)} \g C \right)}\right) \cdot
		\left(\frac{p\Big(\boldsymbol{X}_{\cdot \fa_{\widetilde\D_k}(u)}^{(k)} \g C \Big)}{p\Big(\boldsymbol{X}_{\cdot \pa_{\widetilde\D_k}(u)}^{(k)} \g C \Big)}\frac{p\Big(\boldsymbol{X}_{\cdot \fa_{\widetilde\D_k}(v)}^{(k)} \g C \Big)}{p\Big(\boldsymbol{X}_{\cdot \pa_{\widetilde\D_k}(v)}^{(k)} \g C \Big)}\right)^{-1}
	\end{align*}
	where the second equality follows from the fact that
	by the I-Markov equivalence of $\I$ and $\widetilde\I$,
	$\T=\widetilde \T$.
	Since $u \rightarrow v$ is covered in the two DAGs, the equalities in \eqref{eq:set:equal:covered:edge} still hold and the ratio simplifies to 1.
\end{proof}

\section*{Appendix C. Proofs of Section 4}

This section contains the proof of Proposition \ref{prop:markov:chain} which establishes the convergence of Algorithms \ref{alg:mhwithingibbs} to the posterior distribution $p(\D, \T, \mathcal{P} \g \bX)$.

\begin{repproposition}{prop:markov:chain}
	The finite Markov chain defined by Algorithm \ref{alg:mhwithingibbs}, \ref{alg:validopD}, and \ref{alg:validopDkI} is reversible, aperiodic, and irreducible. Accordingly, it has $p(\D, \T, \mathcal{P} \g \bX)$ as its unique stationary distribution.
\end{repproposition}

\begin{proof}
	The reversibility and aperiodicity of Algorithm \ref{alg:mhwithingibbs} follows immediately from the properties of the Metropolis-Hastings algorithm \citep{Craiu:Rosenthal:2014} To prove irreducibility, notice that if, at each step of the Markov chain, both i) $p(\tilde\D, \I \g \bX)$ and ii) the proposal ratio are strictly greater than zero, then evaluating the irreducibility of Algorithm 1 reduces to evaluating the irreducibility of the Markov chain defined by the proposal distribution, illustrated in Algorithm \ref{alg:mhonlyprop}.
	%
	\begin{algorithm}{
			\SetAlgoLined
			\vspace{0.1cm}
			\KwInput{Number of iterations $S$, initial values for DAG, targets and induced parent sets $\D^0,\T^0,\mathcal{P}^0$}
			\KwOutput{A sample from a Markov chain over $(\D, \T, \mathcal{P})$}
			Construct $\big\{{\D_k^{s}}^\I\big\}_{k=1}^K$\;
			Set $\I^0=(\T^0,\mathcal{P}^0)$\; 
			\For{s in 1:S}{
				Sample $\boldsymbol \pi$, a permutation vector of length $K$\;
				Set $\{\D^s, \I^s\} = \{\D^{s-1}, \I^{s-1}\}$\;
				\For{$k$ in 1:K}{
					\If{$\boldsymbol{\pi}_k = 1$}{
						Construct $\cO_{\D^{s}}$ using Algorithm \ref{alg:validopD}\;
						Sample $\widetilde \D$ uniformly at random from $\cO_{\D^{s}}$\;
						Set $\D^s = \widetilde\D$
				}
				\Else{
					Construct $\cO_{{\D_{\boldsymbol \pi_k}^{s \I}}}$ using Algorithm \ref{alg:validopDkI}\;
					Sample $\widetilde \D_{\boldsymbol{\pi}_k}^\I$
					uniformly at random from $\cO_{{\D_{\boldsymbol{\pi}_k}^{s \I}}}$\;
					Recover $\widetilde I^{(\boldsymbol{\pi}_k)} = (\widetilde T^{({\boldsymbol{\pi}_k})}, \widetilde P^{({\boldsymbol{\pi}_k})})$ from  $(\widetilde \D_{\boldsymbol{\pi}_k}^\I, \D^s)$\;
					Set $I_s^{(\boldsymbol{\pi}_k)} = \widetilde I^{(\boldsymbol{\pi}_k)}$
				}
			}
		}
		Recover $\{\T^s, \mathcal{P}^s\}_{s=1}^S$ from $\{\I^s\}_{s=1}^S$\;
		\Return $\{\D^s, \T^s, \mathcal{P}^s\}_{s=1}^S$;
	}
	\caption{Markov chain implied by the proposal distribution of Algorithm \ref{alg:mhwithingibbs}\black}
	\label{alg:mhonlyprop}
\end{algorithm}
Requirement i) is trivially satisfied in the case of priors on $(\D, \I)$ with full support, as both the proposal distributions defined by Algorithm \ref{alg:validopD} and \ref{alg:validopDkI} explicitly take into account the validity requirement while defining the set of possible operators.
Condition ii) is satisfied if each move in the Markov chain is invertible, that is $q(\tilde{\D} \g \D) > 0$ if and only if $q(\D \g \tilde{\D}) > 0$.
Because of the structure of our proposal distributions in Algorithms \ref{alg:validopD} (\ref{alg:validopDkI}) this is equivalent to establish for each type of operator the existence of an \emph{inverse} operator; specifically,
we need to prove that if an operator belongs to $\cO_\D$ ($\cO_{\D_k^\I}$), then its inverse operator belongs to $\cO_{\tilde{\D}}$ ($\cO_{\tilde{\D}_k^\I}$) too.
For $\cO_\D$, whose construction is based on operators ${Insert}(u,v)$, ${Delete}(u,v)$ and ${Reverse}(u,v)$ applied to $u,v\in [q], u \neq v$, the proof is immediate: ${Insert}(u,v)$ is the inverse operator of ${Delete}(u,v)$ and viceversa, while ${Reverse}(u,v)$ is the inverse operator of ${Reverse}(v,u)$. The same holds when 
the three operators are applied to $u,v \in [q]$ for the construction of $\cO_{{\D}_k^\I}$.
In addition, when operators $Insert$ and $Delete$ involve $\zeta_k$, we have $Insert(\zeta_k,v)$ as the inverse operator of $Delete(\zeta_k,v)$ and viceversa.

We can thus prove the irreducibility of the chain defined by Algorithm \ref{alg:mhwithingibbs} by proving the irreducibility of the Markov chain defined by Algorithm \ref{alg:mhonlyprop}.
At each step $s$ of the algorithm, the proposed value is accepted and the new sequence of $\I$-DAGs $\{\D_{s,k}^\I\}_{k=1}^K$ is obtained by
sequentially updating each $\I$-DAG in a random order defined by the random permutation $\pi_s$. 
Notice that each component-wise update is reversible as shown before.
Moreover, any permutation vector $\pi$
admits an inverse permutation vector.
Therefore, to prove the irreducibility of \ref{alg:mhonlyprop}, it is sufficient to note that starting from any DAG $\{\tilde\D_k^\I\}$, it is always possible to reach the sequence of empty augmented DAGs $\{\bar \D_k^\I\}_{k=1}^K$ by repeated edge deletions. By reversibility, this implies that it is always possible to reach any DAG starting from any other DAG.
As the irreducibility of \ref{alg:mhonlyprop} implies the irreducibility of \ref{alg:mhwithingibbs}, the result follows. 
\end{proof}

\bibliographystyle{biometrika} 
\bibliography{biblio}

\begin{thebibliography}{40}
\expandafter\ifx\csname natexlab\endcsname\relax\def\natexlab#1{#1}\fi
\expandafter\ifx\csname url\endcsname\relax
  \def\url#1{\texttt{#1}}\fi
\expandafter\ifx\csname urlprefix\endcsname\relax\def\urlprefix{URL }\fi
\providecommand{\eprint}[2][]{\url{#2}}

\bibitem[{Barbieri \& Berger(2004)}]{Barbieri:Berger:2004}
\textsc{Barbieri, M.~M.} \& \textsc{Berger, J.~O.} (2004).
\newblock {Optimal predictive model selection}.
\newblock \textit{The Annals of Statistics} 32 870--897.

\bibitem[{Belyaeva et~al.(2021)Belyaeva, Squires \&
  Uhler}]{Belyaeva:et:al:2020:DCI}
\textsc{Belyaeva, A.}, \textsc{Squires, C.} \& \textsc{Uhler, C.} (2021).
\newblock {DCI}: learning causal differences between gene regulatory networks.
\newblock \textit{Bioinformatics} 37 3067--3069.

\bibitem[{Bongers et~al.(2021)Bongers, Forr{\'e}, Peters \&
  Mooij}]{Bongers:et:al:AOS:2021:cycles}
\textsc{Bongers, S.}, \textsc{Forr{\'e}, P.}, \textsc{Peters, J.} \&
  \textsc{Mooij, J.~M.} (2021).
\newblock Foundations of structural causal models with cycles and latent
  variables.
\newblock \textit{The Annals of Statistics} 49 2885--2915.

\bibitem[{Brooks et~al.(2011)Brooks, Gelman, Jones \&
  Meng}]{Brooks:et:al:book:2011}
\textsc{Brooks, S.}, \textsc{Gelman, A.}, \textsc{Jones, G.~L.} \&
  \textsc{Meng, X.-L.}, eds. (2011).
\newblock \textit{Handbook of {M}arkov Chain Monte Carlo}.
\newblock Chapman and Hall/CRC.

\bibitem[{Castelletti \& Consonni(2019)}]{Castelletti:interventional:2019}
\textsc{Castelletti, F.} \& \textsc{Consonni, G.} (2019).
\newblock Objective {B}ayes model selection of {G}aussian interventional
  essential graphs for the identification of signaling pathways.
\newblock \textit{The Annals of Applied Statistics} 13 2289--2311.

\bibitem[{Castelletti \& Consonni(2023)}]{Castelletti:Consonni:2023:SIM}
\textsc{Castelletti, F.} \& \textsc{Consonni, G.} (2023).
\newblock Bayesian graphical modeling for heterogeneous causal effects.
\newblock \textit{Statistics in Medicine} 42 15--32.

\bibitem[{Castelletti et~al.(2020)Castelletti, La~Rocca, Peluso, Stingo \&
  Consonni}]{Castelletti:et:al:2020:SIM}
\textsc{Castelletti, F.}, \textsc{La~Rocca, L.}, \textsc{Peluso, S.},
  \textsc{Stingo, F.~C.} \& \textsc{Consonni, G.} (2020).
\newblock Bayesian learning of multiple directed networks from observational
  data.
\newblock \textit{Statistics in Medicine} 39 4745--4766.

\bibitem[{Castelletti \&
  Peluso(2023{\natexlab{a}})}]{Castelletti:Peluso:Bka:2023}
\textsc{Castelletti, F.} \& \textsc{Peluso, S.} (2023{\natexlab{a}}).
\newblock {Bayesian learning of network structures from interventional
  experimental data}.
\newblock \textit{Biometrika}  asad032.

\bibitem[{Castelletti \& Peluso(2023{\natexlab{b}})}]{Castelletti:Peluso:2022}
\textsc{Castelletti, F.} \& \textsc{Peluso, S.} (2023{\natexlab{b}}).
\newblock Network structure learning under uncertain interventions.
\newblock \textit{Journal of the American Statistical Association} 118
  2117--2128.

\bibitem[{Chickering(1995)}]{chickering:1995}
\textsc{Chickering, D.~M.} (1995).
\newblock A transformational characterization of equivalent {B}ayesian network
  structures.
\newblock In \textit{Proceedings of the 11th Conference on Uncertainty in
  Artificial Intelligence}, (UAI 1995).

\bibitem[{Cooper \& Yoo(1999)}]{Cooper:et:al:mixtures:1997}
\textsc{Cooper, G.~F.} \& \textsc{Yoo, C.} (1999).
\newblock Causal discovery from a mixture of experimental and observational
  data.
\newblock In \textit{Proceedings of the 15th Conference on Uncertainty in
  Artificial Intelligence}, (UAI 1999).

\bibitem[{Correa \& Bareinboim(2020)}]{Correa:Bareinboim:2020}
\textsc{Correa, J.} \& \textsc{Bareinboim, E.} (2020).
\newblock A calculus for stochastic interventions: causal effect identification
  and surrogate experiments.
\newblock In \textit{Proceedings of the 34th AAAI Conference on Artificial
  Intelligence}, (AAAI 2020).

\bibitem[{Craiu \& Rosenthal(2014)}]{Craiu:Rosenthal:2014}
\textsc{Craiu, R.} \& \textsc{Rosenthal, J.~S.} (2014).
\newblock {Bayesian computation via Markov Chain Monte Carlo}.
\newblock \textit{Annual Review of Statistics and Its Application} 1 179--201.

\bibitem[{Eaton \& Murphy(2007)}]{eaton:murphy:2007}
\textsc{Eaton, D.} \& \textsc{Murphy, K.} (2007).
\newblock Exact {B}ayesian structure learning from uncertain interventions.
\newblock In \textit{Proceedings of the 11th International Conference on
  Artificial Intelligence and Statistics}, (AISTATS 2007).

\bibitem[{Friston(2011)}]{Friston:2011}
\textsc{Friston, K.~J.} (2011).
\newblock Functional and effective connectivity: A review.
\newblock \textit{Brain Connectivity} 1 13--36.

\bibitem[{Gamella et~al.(2022)Gamella, Taeb, Heinze-Deml \&
  Bühlmann}]{gamella2022characterization}
\textsc{Gamella, J.~L.}, \textsc{Taeb, A.}, \textsc{Heinze-Deml, C.} \&
  \textsc{Bühlmann, P.} (2022).
\newblock Characterization and greedy learning of {G}aussian structural causal
  models under unknown interventions.
\newblock \textit{arXiv preprint} \eprint{2211.14897}.

\bibitem[{Geiger \& Heckerman(2002)}]{Geiger:Heckerman:2002}
\textsc{Geiger, D.} \& \textsc{Heckerman, D.} (2002).
\newblock Parameter priors for directed acyclic graphical models and the
  characterization of several probability distributions.
\newblock \textit{The Annals of Statistics} 30 1412--1440.

\bibitem[{H\"agele et~al.(2023)H\"agele, Rothfuss, Lorch, Somnath, Sch\"olkopf
  \& Krause}]{hagele:2023}
\textsc{H\"agele, A.}, \textsc{Rothfuss, J.}, \textsc{Lorch, L.},
  \textsc{Somnath, V.~R.}, \textsc{Sch\"olkopf, B.} \& \textsc{Krause, A.}
  (2023).
\newblock Bacadi: Bayesian causal discovery with unknown interventions.
\newblock In \textit{Proceedings of The 26th International Conference on
  Artificial Intelligence and Statistics}, (AISTATS 2023).

\bibitem[{Hauser \& B{{\"u}}hlmann(2012)}]{Hauser:Buehlmann:2012}
\textsc{Hauser, A.} \& \textsc{B{{\"u}}hlmann, P.} (2012).
\newblock Characterization and greedy learning of interventional {M}arkov
  equivalence classes of directed acyclic graphs.
\newblock \textit{Journal of Machine Learning Research} 13 2409--2464.

\bibitem[{Heckerman \& Geiger(1995)}]{Heckerman:Geiger:1995:BDeu:BGe}
\textsc{Heckerman, D.} \& \textsc{Geiger, D.} (1995).
\newblock Learning {B}ayesian networks: A unification for discrete and
  {G}aussian domains.
\newblock In \textit{Proceedings of the 11th Annual Conference on Uncertainty
  in Artificial Intelligence}, (UAI 1995).

\bibitem[{Heckerman et~al.(1995)Heckerman, Geiger \&
  Chickering}]{Heckerman:et:al:1995:BDeu}
\textsc{Heckerman, D.}, \textsc{Geiger, D.} \& \textsc{Chickering, D.~M.}
  (1995).
\newblock Learning {B}ayesian networks: The combination of knowledge and
  statistical data.
\newblock \textit{Machine Learning} 20 197--243.

\bibitem[{Jaber et~al.(2020)Jaber, Kocaoglu, Shanmugam \&
  Bareinboim}]{jaber:soft:2020}
\textsc{Jaber, A.}, \textsc{Kocaoglu, M.}, \textsc{Shanmugam, K.} \&
  \textsc{Bareinboim, E.} (2020).
\newblock Causal discovery from soft interventions with unknown targets:
  Characterization and learning.
\newblock In \textit{Advances in Neural Information Processing Systems},
  (NeurIPS 2020).

\bibitem[{Kornblau et~al.(2009)Kornblau, Tibes, Qiu, Chen, Kantarjian,
  Andreeff, Coombes \& Mills}]{Kornblau:et:al:2009}
\textsc{Kornblau, S.~M.}, \textsc{Tibes, R.}, \textsc{Qiu, Y.~H.},
  \textsc{Chen, W.}, \textsc{Kantarjian, H.~M.}, \textsc{Andreeff, M.},
  \textsc{Coombes, K.~R.} \& \textsc{Mills, G.~B.} (2009).
\newblock Functional proteomic profiling of {AML} predicts response and
  survival.
\newblock \textit{Blood} 1 154--164.

\bibitem[{Mooij et~al.(2020)Mooij, Magliacane \& Claassen}]{Mooij:JCI:2020}
\textsc{Mooij, J.~M.}, \textsc{Magliacane, S.} \& \textsc{Claassen, T.} (2020).
\newblock Joint causal inference from multiple contexts.
\newblock \textit{Journal of Machine Learning Research} 21 1--108.

\bibitem[{Pearl(1988)}]{Pearl:1988}
\textsc{Pearl, J.} (1988).
\newblock \textit{Probabilistic Reasoning in Intelligent Systems}.
\newblock San Francisco (CA): Morgan Kaufmann.

\bibitem[{Pearl(2000)}]{Pearl:2000}
\textsc{Pearl, J.} (2000).
\newblock \textit{Causality: Models, Reasoning, and Inference}.
\newblock Cambridge University Press, Cambridge.

\bibitem[{Pearl \& Robins(1995)}]{Pearl:Robins:1995}
\textsc{Pearl, J.} \& \textsc{Robins, J.} (1995).
\newblock Probabilistic evaluation of sequential plans from causal models with
  hidden variables.
\newblock In \textit{Proceedings of the 11th Conference on Uncertainty in
  Artificial Intelligence}, (UAI 1995).

\bibitem[{Peters \& B{{\"u}}hlmann(2014)}]{peters:buehlmann:2014}
\textsc{Peters, J.} \& \textsc{B{{\"u}}hlmann, P.} (2014).
\newblock Identifiability of {G}aussian structural equation models with equal
  error variances.
\newblock \textit{Biometrika} 101 219--228.

\bibitem[{Peterson et~al.(2015)Peterson, Stingo \&
  Vannucci}]{Peterson:et:al:2015:JASA}
\textsc{Peterson, C.}, \textsc{Stingo, F.~C.} \& \textsc{Vannucci, M.} (2015).
\newblock Bayesian inference of multiple {G}aussian graphical models.
\newblock \textit{Journal of the American Statistical Association} 110
  159--174.

\bibitem[{Press(2012)}]{Press:1982}
\textsc{Press, S.~J.} (2012).
\newblock \textit{Applied Multivariate Analysis: Using {B}ayesian and
  Frequentist Methods of Inference}.
\newblock Dover Books on Mathematics. Dover Publications, 2nd ed.

\bibitem[{Ricciardi et~al.(2017)Ricciardi, Mirabilii, Licchetta, Piedimonte \&
  Tafuri}]{Ricciardi:et:al:2017}
\textsc{Ricciardi, M.~R.}, \textsc{Mirabilii, S.}, \textsc{Licchetta, R.},
  \textsc{Piedimonte, M.} \& \textsc{Tafuri, A.} (2017).
\newblock Targeting the {Akt}, {GSK-3}, {Bcl-2} axis in acute myeloid leukemia.
\newblock \textit{Advances in biological regulation} 65 36--58.

\bibitem[{Roverato \& Consonni(2003)}]{Roverato:2003:compatible}
\textsc{Roverato, A.} \& \textsc{Consonni, G.} (2003).
\newblock {Compatible Prior Distributions for Directed Acyclic Graph Models}.
\newblock \textit{Journal of the Royal Statistical Society Series B:
  Statistical Methodology} 66 47--61.

\bibitem[{Ruvolo et~al.(2015)Ruvolo, Qiu, Coombes, Zhang, Neeley, Ruvolo, Hail,
  Borthakur, Konopleva, Andreeff \& Kornblau}]{Ruvolo:et:al:2015}
\textsc{Ruvolo, P.~P.}, \textsc{Qiu, Y.}, \textsc{Coombes, K.~R.},
  \textsc{Zhang, N.}, \textsc{Neeley, E.~S.}, \textsc{Ruvolo, V.~R.},
  \textsc{Hail, N.~J.}, \textsc{Borthakur, G.}, \textsc{Konopleva, M.},
  \textsc{Andreeff, M.} \& \textsc{Kornblau, S.~M.} (2015).
\newblock Phosphorylation of {GSK3$\alpha$/$\beta$} correlates with activation
  of {AKT} and is prognostic for poor overall survival in acute myeloid
  leukemia patients.
\newblock \textit{BBA Clinical} 4 59--68.

\bibitem[{Shojaie(2021)}]{Shojaie:2020}
\textsc{Shojaie, A.} (2021).
\newblock Differential network analysis: A statistical perspective.
\newblock \textit{WIREs Computational Statistics} 13 e1508.

\bibitem[{Squires et~al.(2020)Squires, Wang \& Uhler}]{squires:utigsp:2020}
\textsc{Squires, C.}, \textsc{Wang, Y.} \& \textsc{Uhler, C.} (2020).
\newblock Permutation-based causal structure learning with unknown intervention
  targets.
\newblock In \textit{Proceedings of the 36th Conference on Uncertainty in
  Artificial Intelligence}, (UAI 2020).

\bibitem[{Tian \& Pearl(2001)}]{Tian:Pearl:2001}
\textsc{Tian, J.} \& \textsc{Pearl, J.} (2001).
\newblock Causal discovery from changes.
\newblock In \textit{Proceedings of the 17th Conference in Uncertainty in
  Artificial Intelligence}, (UAI 2001).

\bibitem[{Verma \& Pearl(1990)}]{Verma:Pearl:1990}
\textsc{Verma, T.} \& \textsc{Pearl, J.} (1990).
\newblock Equivalence and synthesis of causal models.
\newblock In \textit{Proceedings of the 6th Annual Conference on Uncertainty in
  Artificial Intelligence}, (UAI 1990).

\bibitem[{Wang et~al.(2017)Wang, Solus, Yang \& Uhler}]{Wang:IGSP:2017}
\textsc{Wang, Y.}, \textsc{Solus, L.}, \textsc{Yang, K.} \& \textsc{Uhler, C.}
  (2017).
\newblock Permutation-based causal inference algorithms with interventions.
\newblock In \textit{Advances in Neural Information Processing Systems},
  (NeurIPS 2017).

\bibitem[{Wang et~al.(2018)Wang, Squires, Belyaeva \& Uhler}]{Wang:DCI:2018}
\textsc{Wang, Y.}, \textsc{Squires, C.}, \textsc{Belyaeva, A.} \&
  \textsc{Uhler, C.} (2018).
\newblock Direct estimation of differences in causal graphs.
\newblock In \textit{Advances in Neural Information Processing Systems},
  (NeurIPS 2018).

\bibitem[{Yang et~al.(2018)Yang, Katcoff \& Uhler}]{Yang:IGSP:2018}
\textsc{Yang, K.}, \textsc{Katcoff, A.} \& \textsc{Uhler, C.} (2018).
\newblock Characterizing and learning equivalence classes of causal {DAG}s
  under interventions.
\newblock In \textit{Proceedings of the 35th International Conference on
  Machine Learning}, (ICML 2018).

\end{thebibliography}

\end{document}